\numberwithin{equation}{section}
\numberwithin{theorem}{section}
\numberwithin{definition}{section}
\numberwithin{corollary}{section}
\numberwithin{lemma}{section}
\numberwithin{remark}{section}
\theoremstyle{definition}
\renewcommand{\epsilon}{\varepsilon}
\begin{document}

\title{Asymptotic Behavior of a Sequence of
Conditional Probability Distributions and the Canonical Ensemble \thanks{H.Q. is partially supported by the Olga Jung Wan Endowed Professorship. Y.Z. is partially supported by NSF DMS-1712630.   
}
}

%\titlerunning{Short form of title}        % if too long for running head

\author{Yu-Chen Cheng \and Hong Qian \and Yizhe Zhu}

%\authorrunning{Short form of author list} % if too long for running head

\institute{Yu-Chen Cheng \at Department of Applied Mathematics, University of Washington, Seattle, WA 98195-3925 \\
           \email{yuchench@uw.edu}         
               \and
           Hong Qian \at
              Department of Applied Mathematics, University of Washington, Seattle, WA 98195-3925   \\ 
              \email{hqian@uw.edu} 
           \and
          Yizhe Zhu \at
              Department of Mathematics, University of California, San Diego, La Jolla, CA 92093  \\     
               \email{yiz084@ucsd.edu}     
}

\date{Received: date / Accepted: date}
% The correct dates will be entered by the editor

\maketitle

\begin{abstract}
The probability distribution of a function of a subsystem conditioned on the value of the function of the whole,  in the limit when the ratio of their values goes to zero, has a limit law: It equals the unconditioned marginal probability distribution weighted by an exponential factor whose exponent is uniquely determined by the condition.  We apply this theorem to explain the canonical equilibrium ensemble of a system in contact with a heat reservoir. Since the theorem only requires analysis at the level of the function of the subsystem and reservoir, it is applicable even without the knowledge of the composition of the reservoir itself, which extends the applicability of the canonical ensemble. Furthermore, we generalize our theorem to a model with strong interaction that contributes an additional term to the exponent, which is beyond the typical case of approximately additive functions. This result is new in both physics and mathematics, as a theory for the Gibbs conditioning principle for strongly correlated systems. A corollary provides a precise formulation of what a temperature bath is in probabilistic terms.
\keywords{Conditional probability \and Canonical ensemble \and Large deviation theory \and Gibbs conditioning principle \and Gibbs measure \and Conditional Poisson distribution \and Temperature \and Heat bath 
}
% \PACS{PACS code1 \and PACS code2 \and more}
\subclass{60F05 \and 60F10 \and 82B05 \and 82B30 }
\end{abstract}

\section{Introduction}
\label{ch:introduction}
The canonical ensemble with mechanical energy distribution 
in an exponential form is the centerpiece of equilibrium statistical mechanics.  It represents a {\em weight} for a microstate of a system in thermal equilibrium with its surrounding {\em heat bath} at a fixed temperature, where the bath is usually considered much larger in comparison.  The theory has wide applications from condensed matter physics to biophysical chemistry \cite{dill2012molecular,bialek2012biophysics}.  In textbooks, there are currently two heuristic justifications for the exponential factor.  One is the original derivation by L. Boltzmann in 1877 based on an ideal gas \cite{boltzmann_1877}, 
another is based on the notion of a large heat bath and a
small system within, extensively discussed by J. W. Gibbs in 
his 1902 {\it magnum opus} \cite{gibbs1902elementary}.  After an extensive discussion of the properties of an {\em invariant measure} including demonstrating it has to be a function of the mechanical energy, however, Gibbs did not attempt to derive the canonical distribution; rather he simply stated that an exponential form ``seems to represent the most simple case conceivable''.

Boltzmann's derivation was based on the idea of {\em most probable frequency} under the constraint of given total energy.  In the
process he recognized the entropy $S = -N \sum_i f_i \log f_i$
from the multinomial distribution, where $N$ is the number of gas molecules, and $i$ represents a distinct molecule state 
with kinetic energy $e_i$.  This derivation preceded both the modern
theory of large deviations \cite{dembo1998large,touchette2009large}
as well as the principles of maximum entropy (MaxEnt) championed by 
E. T. Jaynes \cite{jaynes2003probability,presse2013principles}.  In terms of the contraction principle in the former, Boltzmann computed the large-deviation rate function for a sample frequency conditioned on a given sample mean of energy instead of obtaining the rate function for the random variable.  This approach has now been 
made rigorous under the heading of the {\em Gibbs conditioning principle} \cite{kesten2012random,dembo1998large}.  MaxEnt, on the other hand, plays a pivotal role in information theory and machine learning \cite{jaynes1957information,ackley1985learning}.  In the 1980s, Boltzmann's logic was also rigorously developed into providing a connection between maximum entropy and conditional probability \cite{zabell1980rates,van1981maximum}. 

Gibbs' theory for the canonical distribution was based on the concept of {\em heat bath}. In \cite{gibbs1902elementary}, he noted that the
distribution with the exponential form had ``the property that when the system consists of parts with separate energies, the laws of the distribution in the phase of the separate parts are of the same nature". Having energy $E_A$ for the microstate $A$ of the small system and $E_B$ for the microstate $B$ of the heat bath, Gibbs assumed the phase-space distributions follow (i) additivity: $P(A, B) = P(A + B) $ (ii) independence: $P(A, B) = P(A)P(B)$. Under those two assumptions, the only possible probability distribution for $A$ is exponential: $P(A)\propto e^{\lambda E_A}$. Furthermore, all small systems in contact with the same bath share the same parameter $\lambda$, which means they are of the ``same nature".  By assuming that every small system follows  the {\em conjugate distribution laws} (a family of single parameter exponential priors), A. Ya. Khinchin \cite{aleksandr1949mathematical} rigorously proved Gibbs' assertion of the common $\lambda$ and further showed that it is determined by the given total energy.

The aim of the paper is to find a rigorous origin of the exponential weight itself for the canonical distribution from the standpoint of a heat bath. We were inspired by a very widely used derivation in standard statistical physics textbooks - based on Taylor’s expansion of the entropy function of a heat bath \cite{landau1958statistical,huang1975statistical,martin1979statistical}. The present work formulates this approach rigorously in probabilistic terms and then gives a proof. We indeed have obtained a rather general new mathematical theorem. The results can be applied back to particular scenarios in statistical physics under corresponding assumptions. Our theorems have clarified the notions of additivity, independency, and the vague “same natures of systems”. The last is actually a corollary of the existence and uniqueness of a single parameter in the exponential form of the canonical distribution, and independency is equivalent to  additivity of energy functions of two systems during the map from a phase space to its corresponding energy space. We shall emphasize that independency of two systems is a special case of our theorem; the parameter then only depends on fluctuations of the heat bath but independent of the small system.

Our results are obtained based on two mathematical ideas: 
{\em conditional probability} and {\em asymptotics}.
We use a {\em Gedankenexperiment} to illustrate the 
crucial role of the former - {\em conditional probability } - in our theorems:
Let $Z := X + Y$, where $X$ is a random variable for some function (e.g., energy) of a subsystem and $Y$ describes the same quantity in the heat bath. If one is only interested in the {\em static statistics} of $X$, there is a way to set up an experiment: Let $Z(t)$ be a fluctuating total mechanical energy as a function of time, and its distribution has a support on $D \subseteq \mathbb{R}^+$, but one selects only those measurements for $X(t)$ that simultaneously have $Z(t) \in I \subseteq D$. In the language of mathematics, this thought experiment is about the conditional probability of $X(t)$ conditioned on the event $Z(t) \in I$. Why is this thought experiment regarding conditional probability very much in line with the physicist's picture of a canonical ensemble? The answer is in the idea of time-scale separation, which involves three different time scales. The first time scale is for the subsystem $X(t)$ to reach its equilibrium, the second time scale is to restrict the total system $Z(t)$ to be fluctuating inside a finite interval $I$, and the third time scale is for $Z(t)$ to reach its equilibrium. And the first one is much shorter than the second one, which is much shorter than the third one. Based on this  framework of time-scale separation, the canonical ensemble is  the statistical ensemble that represents the possible outcomes of the system of interest on the second time scale, i.e., when the subsystem has reached its equilibrium, but the total system is still ``constrained" in a certain interval. 

In fact, having its own stationary distribution of the total system (if it evolves long enough) is very significant for the theory of conditional probability for two reasons: (1)  knowing the fluctuation of the large system is necessary to define the conditional probability mathematically and (2) to perturb the given condition of the total system to see how it has effects on the subsystem is the essence of our theory of the canonical distribution. In other words, even though the original problem is only about the behavior of $X(t)$ when $Z(t)  \in I$, if we have more information of $Z(t)$ outside of $I$, we are able to seek a deeper understanding of the original problem. Not only for the canonical ensemble, this idea of treating a given constraint (parameter) as a variable with distribution has  also been widely used in many other fields, for example, comparing the quenched and annealed invariance principles for the random conductance model \cite{barlow2016comparison}, and in studying the initial-condition naturalness in the case of statistical mechanics \cite{pittphilsci15757}.
 
Mathematically using conditional probability to understand Gibbs measure has a long history, see O. E. Lanford \cite{lanford1973entropy}, O. A. Vasicek \cite{vasicek1980conditional}, H. O. Georgii \cite{georgii2011gibbs}, and H. Touchette \cite{touchette2015equivalence}. 
In particular, on the basis of Boltzmann's logic, using asymptotic conditional probability to describe the canonical ensemble has been well-established through the Gibbs conditioning principle \cite{kesten2012random,dembo1998large}.  More discussion of 
this is provided in Section \ref{ch:preliminary} for a contradistinction with our own work.  In brief, the Gibbs conditioning principle addresses this question: Given a set $A \in \mathbb{R}$ and a constraint $Z_n \in A $, what are the limit points of the conditional probability 
\begin{align} \label{Q:gibbs.cond.prin}
    \mathbb{P}(X_1 \leq x \mid Z_n \in A) \quad \text{as} \ n\rightarrow \infty \ ?
\end{align}
In Equation \eqref{Q:gibbs.cond.prin}, $Z_n = \frac{1}{n} \sum_{i=1}^n X_i$, where $X_i$ are independent and identically distributed random variables (i.i.d. random variables). We can identify that \eqref{Q:gibbs.cond.prin} is very similar to our setup for the canonical distribution if we consider $Z_n := \frac{X_1}{n} + Y_n$, where $ Y_n = \frac{1}{n} \sum_{i=2}^{n} X_i$ is the measurable function of the heat bath in our approach. However, $Y_n$ in our setup could be defined in a much more general way: we only require that $Y_n$ converges to some random variable $Y$ in distribution (or the law of $Y_n$ satisfies a large deviation principle) rather than has a special form as the sum of  independent and identically distributed random variables. In other words, the present work is not a simple refinement of the Gibbs conditioning principle. Here we give a concrete example to which our theorems can be applied but not the Gibbs conditioning principle: Let $\tilde{\zeta}_n = \xi_1 + \eta_n $ and $\eta_n  = \sum_{i=2}^n \xi_i$, where $\{\xi_i\}_{i=1}^n$ are strongly correlated and not identically distributed, and let $\zeta_n$ be  $\tilde{\zeta}_n$ with appropriate shifting and scaling such that $\zeta_n$ has a limiting distribution (or satisfies a large deviation principle). Subject to these conditions, the Gibbs conditioning principle would not be applicable to find the limit points of the conditional probability
\begin{align} \label{Q:gibbs.cond.prin1}
    \mathbb{P}(\xi_1 \leq x \mid \zeta_n \in A) \quad \text{as} \ n\rightarrow \infty.
\end{align}
The present work will show that the canonical distribution in this non-i.i.d. example could still exist as a good approximation  (Corollary \ref{cor:corollary 1}) or the limiting distribution (Corollary \ref{cor:limit.smooth}, Corollary \ref{cor:limit.ldp}) of the conditional probability \eqref{Q:gibbs.cond.prin1}. In fact, the setup for our theorems is very general in statistical mechanics: (i) a subsystem in contact with a relatively large heat bath, which is including but not limited to the model of a sum of many independent and identical subsystems; and (ii) the subsystem and the heat bath can have weak or strong interaction.

Back to Equation \eqref{Q:gibbs.cond.prin}, it seems that either using the Gibbs conditioning principle or using our approach to derive the canonical distribution, both sides are asking a very similar question: what is the asymptotic behavior of a conditional probability? However, based on the more general setup of the conditional probability, our approach to the asymptotic behavior of this conditional probability is very different from the Gibbs conditioning principle. For the Gibbs conditioning principle, it transforms the original problem to a sampling problem: what are the limit points of
\begin{align}\label{Q:gibbs.cond.prin2}
 \mathbb{E}\left[ L_n \mid L_n \in \Gamma \right] \quad \text{as} \ n \rightarrow \infty \ ?
\end{align}
In Equation \eqref{Q:gibbs.cond.prin2}, $L_n = \frac{1}{n} \sum_{i=1}^n \delta_{X_i}$ is the corresponding empirical measure for $Z_n$ and $\Gamma = \{ \gamma : \int x \gamma(dx) \in A  \}$ is the corresponding constraint of $Z_n$.  In fact, even though this approach is called the ``Gibbs" conditioning principle, its logic exactly follows Boltzmann's derivation of the canonical ensemble. As a consequence of  the Gibbs conditioning principle, it provides a mathematical foundation of why using the maximum entropy principle with certain constraint works to find the canonical distribution \cite{zabell1980rates,van1981maximum}. 

On the other hand, our approach is direct to find the asymptotic behavior of conditional probability \eqref{Q:gibbs.cond.prin} on the basis of two things: (i) a measurable function of the subsystem is asymptotically small relative to the function of the whole and (ii) the distribution of the measurable function of the heat bath converges to a limiting distribution by appropriate shifting and scaling. Intuitively, under this framework, the distribution of the measurable function of the subsystem shall consist of its unconditional distribution and a weight from a linear approximation of the limiting distribution of the measurable function of the heat bath. As we mentioned above, our approach follows Gibbs' theory for the canonical distribution, which involved the idea of ``heat bath" that contributes a ``bias" to the system. The common point of our approach and the Gibbs conditional principle is that both sides started with a very similar question of fundamental importance in statistical mechanics and adopted the concept of conditional probability to describe that problem. However, the method of solving the problem on each side has a very different philosophy, the Gibbs conditional principle is about counting statistics by Boltzmann's logic, and ours is inspired by the idea of a heat bath from Gibbs. 

Besides the conditional probability,  we adopt a very powerful mathematical technique in our theory: asymptotics. Indeed, asymptotics is not only a mathematical technique but also the essence of statistical mechanics. The purpose of statistical mechanics is to derive equilibrium properties of a macroscopic system with enormous numbers of molecules $N$ and occupying a very large volume $V$, then that macroscopic equilibrium thermodynamics is an {\em emergent phenomenon} in the limiting case when $N \rightarrow  \infty$ and $V \rightarrow  \infty$. Following on from this concept, we shall show that the emergence of an exponential factor in the canonical ensemble is also a result of a limit law according to the probability theory. Take an analogy, our limit theorem is to the exponential form of the canonical distribution what the central limit theorem is to a normal distribution. As with every limit theorem, we have to define how our assumptions depend on $n$ carefully. In our work, as $n$ increases, a measurable function of the subsystem becomes ``relatively small" compared with the total system. Based on this main assumption, we obtain two significant results: (i) For a sufficiently large $n$, a conditional distribution can be well-approximated by its unconditional distribution weighted by an exponential factor, and (ii) a sequence of conditional distributions converges to a limit which is the unconditional distribution weighted by a unique exponential factor. 

We obtain two theorems regarding the first result in Section \ref{ch:existence}, and they provide the existence of the canonical distribution when a system is contained in a finitely large total system ($n$ is sufficiently large). Furthermore, we obtain two limit theorems regarding the second result in Section \ref{ch:uniqueness}, and they provide the existence of a unique canonical distribution when the system is contained in an infinitely large total system ($n \rightarrow \infty$). In comparison with Section \ref{ch:uniqueness}, Section \ref{ch:existence} only requires weaker conditions, but the exponential form in the canonical distribution may not be unique since there could be more than one sequence having the same asymptotic behavior. On the other hand, Section \ref{ch:uniqueness} requires stronger conditions, but it gives us a unique canonical distribution in the limit, and this distribution can be applied back to approximate the conditional probabilities for all finitely large $n$. This result can be regarded as an example that the limit theorems from probability predict the laws of nature. Here, we would like to quote from P. W. Anderson \cite{anderson1972more}  ``Starting with the fundamental laws and a computer, we would have to do two impossible things - solve a problem with infinitely many bodies, and then apply the result to a finite system - before we synthesized this behavior."  Our idea echos Anderson's view: To find the limiting behavior of a sequence of conditional probability distributions and apply it back to the distribution of a subsystem contained in a finitely large total system with some fluctuations, and this is how it is used as a scientific theory. 

\subsection{The equivalence of ensembles}

Our work is another way to consider the theory of equivalence of ensembles. As far as we know,  Khinchin's derivation of the canonical ensembles in 1949 \cite{aleksandr1949mathematical} for a subsystem of a large isolated system by a local central limit theorem was the origin of the equivalence of ensembles. Then Dobrushin and Tirozzi in 1977 \cite{dobrushin1977central}  extended Khinchin's result from a classical idea gas to a Gibbs random field. In 1979,  Martin-L\"of \cite{martin1979equivalence,martin1979statistical} further related the microcanonical, canonical, and grand canonical ensembles in the thermodynamic limit when the volume of classical lattice systems tends to infinity. In the 1990s, beyond the scale of the central limit theorem, Deuschel et al. \cite{deuschel1991microcanonical} and Georgii \cite{georgii1995equivalence} showed the equivalence of ensembles on the scale of the large deviation principle. Tasaki \cite{tasaki2018local} recently established the equivalence on the level of local states for large but finite quantum spin systems. A comprehensive introduction to infinite-volume Gibbs measures can be found in Chapter 6 in the textbook by Friedli and Velenik \cite{friedli2017statistical}, and the discussion of the equivalence of ensembles is in Section 6.14.1. 

Recently, a full survey of the equivalence of ensembles at the levels of thermodynamic, macrostates, and measures was presented by Touchette \cite{touchette2015equivalence}. We shall note that discussions on the equivalence of ensembles at the thermodynamics level can also be traced back to the textbook on statistical mechanics by Hill \cite{terrell1987statistical}. In the book, Hill showed the thermodynamic equivalence of ensembles for systems having only a single most probable energy value. In Touchette's recent work, the equivalence was extended to other macrostates, e.g., the mean magnetization of a spin system. This extension was given by the superposition of a mixture of microcanonical ensembles to represent the canonical ensemble of macrostates. Under certain conditions, the equivalence at the macrostate level and the equivalence at the measure level are equivalent. In the language of modern probability, the correspondence between the equivalence at the macrostate level and the equivalence at the measure level is by the Portmanteau theorem  \cite{billingsley2013convergence} for equivalent statements of weak convergence of measures.
We shall emphasize that the conditional distribution of the state of a small subsystem converging to the canonical distribution becomes a corollary of the equivalence of ensembles between the microcanonical ensemble and canonical ensemble at the measure level based on the assumption that the state of a small subsystem is chosen at random with a {\em uniform distribution} in the large whole system.

The essential difference between our approach  and the previous approaches for equivalence of ensembles   is that we don't assume a uniform distribution of the state of a small subsystem in the large container. This assumption is equivalent to say that the heat bath (the large container - the subsystem) has to be considered as identical copies of the subsystems, which was usually given in the previous work for classical ideal gas systems or Gibbs random fields. In contradistinction to this assumption, our theorems treat the subsystem and its heat bath as two random variables via a measurable function, i.e., we only care about the effect of the ``whole" heat bath on the subsystem with respect to that function. 

We want to indicate that applying our mathematical theory to physics is new and original since it extends the applicability of the canonical ensemble: by the pushforward measure (up to a prefactor) via a measurable function, we can derive the canonical distribution of a subsystem without assuming a uniform structure of the whole system. For example, we can apply our results to approximate the distribution of certain measurable functions of a small defect within a material.  We only require the subsystem (the defect) is small relative to its heat bath with respect to the values of the measurable functions, which is different from treating the heat bath as infinitely large $n$ copies of the subsystem, interacting or not, in order to apply the microcanonical ensemble to the canonical ensemble. In biophysics, our theory can predict the distribution of side-chain conformational variations in protein structure \cite{butterfoss2003boltzmann,xiang2007prediction,miao2016quantifying}. Proteins in general have a non-uniform structure, so the canonical distribution of side-chain conformational variations can be justified by our theory but not the other approaches based on a uniform structure of the whole system.  

We further generalize our theorem to a model when a subsystem and its heat bath have {\em strong interaction} (the function is not additive), which is beyond the weakly interacting system (the function is approximately additive). This result is new in both physics and mathematics, as a theory for the {\em Gibbs conditioning principle} for strongly correlated systems. The present work formulates our theory rigorously in probabilistic terms in Section \ref{ch:main}, and then gives a proof in Section \ref{ch:proof}. 

In Section \ref{sec:gibbs}, we apply our theory to concrete examples in statistical mechanics, under two situations when a subsystem and its heat bath are independent or strongly correlated. Since our theory also provides a sharp and precise bound of the convergence rate of conditional probabilities, we use it to approximate the conditional Poisson distribution in Section \ref{sec:poisson}. To build a connection with the equivalence of ensembles using the techniques of the large deviation principle (LDP) \cite{lewis1995entropy} and the central limit theorem (CLT) \cite{dobrushin1977central}, we applied our theory back to particular scenarios when the heat bath can be treated as a sum of identical random variables. The LDP in Section \ref{sec:em.temp.ldp} or the CLT in Section \ref{sec:em.temp.clt} gives us a convergence of a sequence of random variables for the heat bath. Nevertheless, we want to emphasize that our theory does not require the LDT or the CTL in general. By proper scaling and shifting, if there exists a convergence of the heat-bath random variable with a smooth limiting distribution, our theory is still applicable. In Section \ref{sec:heatbath}, we provide a precise formulation of what a temperature bath is in probabilistic terms.

\subsection{Organization of the paper} 
We provide some useful theorems and definitions and explain our motivation in this problem in Section \ref{ch:preliminary}. In Section \ref{ch:main} we state and explain our main results. Proofs of the main results are provided in Section \ref{ch:proof}. In Section \ref{ch:application} we present several applications of our main theorems.

\subsubsection*{Notations} Throughout the paper, we will adopt the notations $a_n=o(b_n)$ when $\lim_{n\to\infty} 
\frac{a_n}{b_n}=0$, and $a_n=O(b_n)$ when $|a_n/b_n|$ is bounded by some constant $C>0$.  

For a set  $\Omega$, we use $C(\Omega)$ to represent the set of all continuous real functions on $\Omega$,  $C_b(\Omega)$ to represent the set of all bounded continuous functions on $\Omega$, and  $C^k(\Omega)$ to represent the set of all functions with continuous derivatives of order $k$ on $\Omega$. 

We sometimes use brief notations of probabilities in our proofs, e.g.,  $P_{X_n \mid Z_n}(x; I) = P( X_n = x \mid Z_n \in I)$. We always use $X_n$, $Y_n$, $Z_n$ to denote sequences of random variables, whose definitions might change in different theorems, but we will give their exact definitions before stating the theorems.

%%%%%%%%%%%%%%%%%%%%%%%%%%%%%

%Preliminaries

%%%%%%%%%%%%%%%%%%%%%%%%%%%%%

\section{Preliminaries}

\label{ch:preliminary}

\subsection{Maximum entropy and conditional probability}

We first recall the following classical results. Here we don't specify the regularity conditions in the statements of the two theorems below. For more details, see the original references.

\begin{theorem}[\cite{zabell1980rates}]
\label{thm:zabell.1980}
Let $\{ X_n \}_{n \in \mathbb{N}} $ be a sequence of independent and identically distributed (i.i.d.) random variables with continuous density $f(x)$, then under appropriate regularity conditions, we have
\begin{align}
\lim_{n\to\infty}P \left( X_1 \leq x \mid S_n  = n \mu + c_n \right)= P \left( X_1 \leq x \right),
\end{align}
where $S_n := X_1 + X_2 + \cdots + X_n$, $\mu := \mathbb{E} [X_1]$, $s_n^2 := \textnormal{Var}\left[ S_n \right]$, and $c_n = O(s_n)$.
\end{theorem}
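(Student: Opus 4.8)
The plan is to reduce the statement to a uniform local central limit theorem (LCLT) for the partial sums and then estimate a ratio of near-Gaussian densities. Since $\{S_n = n\mu + c_n\}$ is a null event, one works with densities. Writing $T_n := X_2+\cdots+X_n$, a sum of $n-1$ i.i.d.\ copies of $X_1$ independent of $X_1$, Bayes' formula gives
\begin{equation}
f_{X_1\mid S_n}(x \mid y) = f(x)\,\frac{f_{T_n}(y-x)}{f_{S_n}(y)},
\end{equation}
so with $y = y_n := n\mu + c_n$ it suffices to prove that $R_n(t) := f_{T_n}(y_n - t)/f_{S_n}(y_n)$ converges to $1$ for each fixed $t$ and is bounded, uniformly over all large $n$, by a function integrable against $f(t)\,dt$ on $(-\infty,x]$. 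Then $P(X_1\le x\mid S_n=y_n)=\int_{-\infty}^x f(t)R_n(t)\,dt \to \int_{-\infty}^x f(t)\,dt = P(X_1\le x)$ by dominated convergence.

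For the ratio, invoke the uniform LCLT --- this is essentially the content of the unspecified ``regularity conditions'': finite variance $\sigma^2$ together with, e.g., boundedness of some convolution power of $f$. It yields $\sup_{u\in\R}\bigl|s_n f_{S_n}(n\mu + u s_n) - \phi(u)\bigr|\to 0$ with $\phi(u)=(2\pi)^{-1/2}e^{-u^2/2}$, and the analogous statement for $T_n$ with $s_{n-1}$ in place of $s_n$ (note $s_{n-1}/s_n\to1$). Since $c_n = O(s_n)$, the normalized location $u_n := c_n/s_n$ stays in a bounded set, so $f_{S_n}(y_n) = s_n^{-1}(\phi(u_n)+o(1))$ with $\phi(u_n)$ bounded away from $0$. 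Writing $y_n - t = (n-1)\mu + (\mu + c_n - t)$ and reading off the LCLT for $T_n$ at normalized location $v_n := (\mu+c_n-t)/s_{n-1} = u_n + O(s_n^{-1})$ gives $R_n(t) = (s_n/s_{n-1})(\phi(v_n)+o(1))/(\phi(u_n)+o(1))$, which tends to $1$ for each fixed $t$, uniformly for $t$ in any compact set. For the tail, the uniform LCLT bounds $s_{n-1}f_{T_n}$ by an absolute constant while $f_{S_n}(y_n)^{-1}=O(s_n)$, so $R_n(t)\le C$ for all $t$ and all large $n$; hence $|f(t)R_n(t)|\le C f(t)$ on $(-\infty,x]$, which is integrable.

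The main obstacle is not the pointwise limit --- once $c_n = O(s_n)$ is used, that is a one-line Gaussian computation --- but the uniformity needed to move the limit inside the integral defining the conditional CDF, i.e.\ producing a single dominating function valid over the whole half-line $(-\infty,x]$ and over $n$. A non-uniform LCLT would give the pointwise statement but no control of $R_n(t)$ as $t\to-\infty$, which is precisely why a uniform LCLT (equivalently, suitable regularity of $f$) is needed; alternatively one splits off the tail $\{|t|>\varepsilon s_n\}$ and bounds $f_{T_n}$ there via a separate Chebyshev or large-deviation estimate.
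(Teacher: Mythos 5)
The paper does not prove Theorem~\ref{thm:zabell.1980}; it only cites it from Zabell (1980) as a known preliminary, so there is no in-paper argument to compare against. Taken on its own merits, your proposal is correct and it is in fact the standard (and Zabell's own) route: pass to densities via Bayes' rule with $T_n := X_2+\cdots+X_n$, show the density ratio $R_n(t) = f_{T_n}(y_n-t)/f_{S_n}(y_n)\to 1$ pointwise using a uniform local CLT at the normalized location $u_n=c_n/s_n$ (bounded precisely because $c_n=O(s_n)$, so $\phi(u_n)$ stays bounded below), and then push the limit through the integral by dominated convergence using the uniform LCLT bound $s_n f_{S_n}\le C$ for the dominating function. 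Two small points worth making explicit in a full write-up: (i) the uniform LCLT requires more than finite variance --- e.g.\ boundedness of some convolution power of $f$ or a Cram\'er-type integrability condition --- and this is exactly what the unstated ``regularity conditions'' conceal, as you correctly flag; (ii) since $u_n$ need not converge (only remain bounded), the pointwise limit $R_n(t)\to 1$ should be argued as you do, via $v_n-u_n\to 0$ with both sequences bounded and $\phi$ uniformly continuous and bounded below on compacts, rather than by assuming $u_n\to u$. With those caveats, the argument is sound and complete in outline.
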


\begin{theorem} [\cite{van1981maximum}] 
\label{thm:von.1981}
Let $\{ X_n \}_{n \in \mathbb{N}} $ and $S_n$ follow definitions in Theorem \ref{thm:zabell.1980}. Let $\alpha \in \mathbb{R}$ and let $f(x)$ be the density function of $X_1$, then under appropriate regularity conditions,
\begin{align}
\lim_{n\to\infty}P\left( X_1 \leq x \mid S_n  = n \alpha \right) = \left( \int_{-\infty}^x  e^{\lambda s} f(s) ds \right) / c(\lambda),
\end{align}
in which
\begin{align}
c(\lambda) = \mathbb{E} \left[ e^{\lambda X_1} \right] < \infty \quad \text{and } \quad  \alpha = \left( \int x e^{\lambda x} f(x) dx   \right) / c(\lambda).
\end{align}

\end{theorem}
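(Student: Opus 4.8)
The plan is to prove this by an exponential change of measure followed by a local central limit theorem for densities. Let $g_k$ denote the density of $S_k$, i.e. the $k$-fold convolution $f^{*k}$. Conditioning $X_1$ on $\{ S_n = n\alpha \}$ and using independence gives the exact identity
\begin{align}
P\left( X_1 \leq x \mid S_n = n\alpha \right) = \int_{-\infty}^{x} \frac{ f(s)\, g_{n-1}(n\alpha - s) }{ g_n(n\alpha) } \, ds ,
\end{align}
so the whole problem reduces to understanding the ratio $g_{n-1}(n\alpha - s)/g_n(n\alpha)$ as $n \to \infty$. Since in general $\alpha \neq \mu$, this is a large-deviation regime for $S_n$, and the first step is to remove the large deviation by tilting. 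Under the regularity conditions one may assume $c(\lambda) = \mathbb{E}[ e^{\lambda X_1} ]$ is finite near the value $\lambda$ determined by $c'(\lambda)/c(\lambda) = \alpha$; such a $\lambda$ exists and is unique because $\log c$ is strictly convex and $\alpha$ is assumed to lie in the interior of the range of attainable means. Introduce $f_\lambda(s) := e^{\lambda s} f(s)/c(\lambda)$ and let $g_k^\lambda := f_\lambda^{*k}$. A one-line convolution computation gives $g_k(t) = c(\lambda)^k e^{-\lambda t} g_k^\lambda(t)$, and substituting this into the ratio yields
\begin{align}
\frac{ g_{n-1}(n\alpha - s) }{ g_n(n\alpha) } = \frac{ e^{\lambda s} }{ c(\lambda) } \cdot \frac{ g_{n-1}^\lambda(n\alpha - s) }{ g_n^\lambda(n\alpha) } .
\end{align}

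The point of the tilt is that under $f_\lambda$ each summand has mean exactly $\alpha$ and some finite variance $\sigma_\lambda^2$, so $n\alpha$ is precisely the mean of $S_n$ under $f_\lambda$, and $n\alpha - s = (n-1)\alpha + (\alpha - s)$ sits at an $O(1)$ distance from the mean of $S_{n-1}$. I would then invoke the local central limit theorem for densities (Gnedenko's theorem, whose hypotheses are exactly the unstated regularity conditions, e.g. boundedness of $f_\lambda$ or of some convolution power): it gives $g_n^\lambda(n\alpha) = (1 + o(1))/\sqrt{2\pi n \sigma_\lambda^2}$ and, uniformly for $s$ in compact sets, $g_{n-1}^\lambda(n\alpha - s) = (1 + o(1))/\sqrt{2\pi n \sigma_\lambda^2}$ as well, since the Gaussian factor $\exp(-(\alpha - s)^2/(2(n-1)\sigma_\lambda^2))$ tends to $1$. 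Hence the displayed ratio converges to $e^{\lambda s}/c(\lambda)$, so the conditional density of $X_1$ given $S_n = n\alpha$ converges pointwise to $f_\lambda(s) = e^{\lambda s} f(s)/c(\lambda)$, which is exactly the claimed limiting density.

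To finish, I would upgrade pointwise convergence of densities to convergence of the distribution functions. Since the conditional density of $X_1$ and the limit $f_\lambda$ are all genuine probability densities, Scheff\'e's lemma promotes pointwise convergence to $L^1$-convergence, hence convergence in total variation, which in particular yields $P(X_1 \leq x \mid S_n = n\alpha) \to \int_{-\infty}^{x} f_\lambda(s)\, ds$ for every $x$, as asserted. The main obstacle I anticipate is the local limit step: one needs the LCLT not merely at the centering point but uniformly over an $O(1)$-window around it, together with a uniform bound on $g_n^\lambda$ near its peak (or a tail estimate drawn from the regularity hypotheses) in order to legitimately pass the limit under the integral $\int_{-\infty}^{x}$ for non-compactly-bounded $x$ — this is precisely where the ``appropriate regularity conditions'' are spent, which is why the theorem is stated with them left implicit.
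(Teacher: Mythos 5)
The paper does not prove this theorem: it is stated in the Preliminaries as a classical result cited from van Campenhout and Cover, with the explicit remark that the regularity conditions are left to the original reference. Your proof --- exponential tilting $f_\lambda(s) = e^{\lambda s} f(s)/c(\lambda)$ so that the conditioning value $n\alpha$ becomes the exact mean under the tilted law, the one-line convolution identity $g_k(t) = c(\lambda)^k e^{-\lambda t} g_k^\lambda(t)$ to reduce the ratio $g_{n-1}(n\alpha - s)/g_n(n\alpha)$ to a ratio of tilted densities, Gnedenko's local CLT for densities to show that ratio tends to $e^{\lambda s}/c(\lambda)$ pointwise, and Scheff\'e's lemma to upgrade pointwise density convergence to convergence of distribution functions --- is correct, and it is essentially the standard route, in the spirit of the cited reference and of the large-deviation treatment of the Gibbs conditioning principle. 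The points you flag as consuming the unstated regularity hypotheses are exactly right: finiteness and smoothness of $c$ near the tilting parameter, existence and uniqueness of $\lambda$ with $c'(\lambda)/c(\lambda) = \alpha$ via strict convexity of $\log c$, and boundedness of $f_\lambda$ or of some convolution power of it so that the local CLT holds uniformly. One can note as well that the uniform $o(1/\sqrt{n})$ error in Gnedenko's theorem makes the ratio $(1+o(1))$ uniformly in $s$, but pointwise convergence suffices for Scheff\'e, so the last step needs no further uniformity.
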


Note that the parameter $\lambda$ is determined by the constraint
\begin{align}
\alpha = \left( \int x e^{\lambda s} f(x) dx   \right) / c(\lambda),
\label{pre:max.contraint}
\end{align} 
and the density $g(x) = e^{\lambda x}f(x)/c(\lambda) $ maximizes the entropy relative to the density $f(x)$ of $X_1$ given by
\begin{align} \label{rel.entropy}
    H(X_1) = - \int g(x) \log \frac{g(x)}{f(x)} dx,
\end{align}
with respect to the constraint that
\begin{align}
\left( \int x g(x) dx   \right) = \alpha.
\end{align}

We see that Theorem \ref{thm:zabell.1980} implies the convergence of the conditional probability distribution of $X_1$ to its unconditional distribution.  In this case, the sum of $X_i$ is conditioned on the scale of Gaussian fluctuations: $S_n=n\mu + c_n$, where $n\mu$ is the mean of $S_n$ and $c_n$ is in the order of standard deviation of $S_n$. On the other hand, we see that Theorem \ref{thm:von.1981} implies the convergence of the conditional probability distribution of $X_1$ to the (normalized) product of its unconditional distribution and the maximal entropy distribution $e^{\lambda x}$. The parameter $\lambda$ is determined by the condition ${S_n}=n \alpha $, which is on the scale of large deviations when $\alpha \neq \mathbb{E}[X_1].$ 

Theorem \ref{thm:von.1981} is a particular case of the {\em Gibbs conditioning principle}, which is the meta-theorem \cite{dembo1996refinements} regarding the conditional probability of $X_i$ given on the {\em empirical measure} of an i.i.d. $ \{X_i\}_{i=1}^n $
\begin{align}
\label{emp.measure}
L_n = \frac{1}{n} \sum_{i=1}^n \delta_{X_i}
\end{align}
belongs to some rare event such as 
\begin{align} \label{emp.measure.cond}
\int x L_n(dx) = \frac{1}{n} \sum_{i=1}^n X_i = \alpha \quad \text{and} \quad  \alpha \neq \mathbb{E}[X_1]. 
\end{align}
Using the empirical measure defined in \eqref{emp.measure} conditioned on the rare event \eqref{emp.measure.cond} to find the limit of conditional probability in Theorem \ref{thm:von.1981} turns out to be equivalent to find the limit 
\begin{align}
\gamma^* : = \lim_{n\rightarrow \infty}\mathbb{E}\left[ L_n \mid L_n \in \Gamma \right], \quad \Gamma = \{ \gamma : \int x \gamma(dx)  = \alpha  \}.
\end{align}
By the Gibbs conditioning principle, under appropriate regularity conditions, $\gamma^*$ minimizes the relative entropy 
$$H(\gamma \mid \mu_X) :=  \int d \gamma \log\left( \frac{d \gamma }{d \mu_X} \right), $$ where $\gamma \in \Gamma $ and $\mu_X$ is the law of $X_1$. In fact, this result implies the limit law derived in Theorem \ref{thm:von.1981}.

One of the most successful approaches to
the Gibbs conditioning principle is through the theory of large deviations \cite{kesten2012random,dembo1996refinements}. This approach involves {\em Sanov's theorem} \cite{sanov1958probability} that provides the large-deviation rate function of the empirical measure induced by a sequence of i.i.d. random variables
and the {\em contraction principle} \cite{donsker1983asymptotic} that describes how continuous mappings preserve the large deviation principle from one space to another space. In short, these theorems regarding counting and transformation in the theory of large deviations yield the Gibbs conditioning principle and provide the foundation of using the maximum entropy distribution under certain constraints to find the limit of a sequence of conditional probabilities.

\subsection{Large deviation theory}
Let $\{ X_n \}_{n \in \mathbb{N}} $ be a sequence of i.i.d. absolutely integrable (i.e. 	$\mathbb{E}|X_1|  < \infty$) real random variables with mean $\mu := \mathbb{E} [X_1]$, and let
\begin{align}
\overline{X}_n := \frac{1}{n} \sum_{i=1}^n X_i 
\end{align}
 By the {\em weak law of large numbers}, \begin{align}
\overline{X}_n  \xrightarrow{P} \mu \quad \text{when} \ n \rightarrow \infty.
\end{align}  
That is, for any  $\epsilon>0$,
\begin{align}
\lim_{n \rightarrow \infty} \mathbb{P}\left( |\overline{X}_n - \mu |  > \epsilon    \right) = 0. 
\end{align}

To study the question how fast this probability tends to zero, Harald Cram\'er  obtained the following theorem in 1938:

\begin{theorem}[Cram\'er's theorem \cite{cramer1976century}] 
\label{thm:cramer}
Assume that
\begin{align*}
A(\lambda) := \log \mathbb{E}[e^{\lambda X_1}] < \infty, \quad \lambda \in \mathbb{R}.
\end{align*}
Then 
\begin{align*}
(i) \lim_{n \rightarrow \infty} \frac{1}{n} \log \mathbb{P} \left[ \overline{X}_n \geq y  \right] &= -\phi (y) \quad \text{when} \ y > \mu, \\ \notag 
(ii) \lim_{n \rightarrow \infty} \frac{1}{n} \log \mathbb{P} \left[ \overline{X}_n \leq y  \right] &= -\phi (y) \quad \text{when} \ y < \mu,
\end{align*}
where $\phi$ is defined by 
\begin{align}\label{eq:LDPphi}
\phi(y) := \sup_{\lambda \in \mathbb{R}} \left[ y\lambda - A(\lambda)   \right] \quad \text{for} \ x \in \mathbb{R}.
\end{align}
\end{theorem}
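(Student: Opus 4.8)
The plan is to prove Cram\'er's theorem the classical way, by pairing a \emph{Chernoff (exponential Markov) upper bound} with an \emph{exponential change-of-measure lower bound}. First I would reduce (ii) to (i): applying (i) to the i.i.d.\ sequence $\{-X_k\}$ at the level $-y>-\mu$ and using $\log\mathbb{E}[e^{-\lambda X_1}]=A(-\lambda)$, so that $\sup_{\lambda\in\mathbb{R}}[(-y)\lambda-A(-\lambda)]=\phi(y)$, yields (ii) at once. It then remains to prove (i) by establishing matching $\limsup$ and $\liminf$ bounds.

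For the upper bound, fix $y>\mu$. Since $A$ is finite on all of $\mathbb{R}$ it is smooth and convex with $A'(0)=\mu$, so the concave map $\lambda\mapsto\lambda y-A(\lambda)$ is increasing at the origin and therefore attains over $\mathbb{R}$ the same supremum $\phi(y)$ as over $[0,\infty)$, with $\phi(y)\ge 0$. For every $\lambda\ge 0$, Markov's inequality applied to $e^{\lambda\sum_{k=1}^{n}X_k}$ gives
\[
\mathbb{P}\bigl[\overline{X}_n\ge y\bigr]\ \le\ e^{-\lambda n y}\,\bigl(\mathbb{E}[e^{\lambda X_1}]\bigr)^{n}\ =\ e^{-n(\lambda y-A(\lambda))},
\]
and minimizing over $\lambda\ge 0$ yields $\tfrac1n\log\mathbb{P}[\overline{X}_n\ge y]\le-\phi(y)$ for every $n$, hence $\limsup_n\tfrac1n\log\mathbb{P}[\overline{X}_n\ge y]\le-\phi(y)$.

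For the lower bound --- the heart of the matter --- I would first treat the generic case in which the supremum defining $\phi(y)$ is attained at an interior point $\lambda^*>0$, equivalently $A'(\lambda^*)=y$. I would introduce the \emph{tilted law} $\mathbb{Q}$ under which $\{X_k\}$ are still i.i.d.\ with density $e^{\lambda^*x-A(\lambda^*)}$ against $\mathbb{P}$; then $\mathbb{E}_{\mathbb{Q}}[X_1]=A'(\lambda^*)=y$, and the finiteness of $A$ on all of $\mathbb{R}$ forces $X_1$ to have moments of every order under $\mathbb{Q}$. Undoing the tilt on the event $\{y\le\overline{X}_n\le y+\delta\}$, where $\sum_k X_k\le n(y+\delta)$ and $\lambda^*>0$, gives
\[
\mathbb{P}\bigl[\overline{X}_n\ge y\bigr]\ \ge\ \mathbb{P}\bigl[y\le\overline{X}_n\le y+\delta\bigr]\ \ge\ e^{-n(\lambda^*(y+\delta)-A(\lambda^*))}\;\mathbb{Q}\bigl[y\le\overline{X}_n\le y+\delta\bigr].
\]
Because the $\mathbb{Q}$-mean of $X_1$ is exactly $y$, the central limit theorem under $\mathbb{Q}$ shows $\mathbb{Q}[y\le\overline{X}_n\le y+\delta]\to\tfrac12$ (or is identically $1$ if $X_1$ is $\mathbb{Q}$-degenerate), so this factor is subexponential and $\liminf_n\tfrac1n\log\mathbb{P}[\overline{X}_n\ge y]\ge-(\lambda^*y-A(\lambda^*))-\lambda^*\delta=-\phi(y)-\lambda^*\delta$; letting $\delta\downarrow 0$ closes the gap with the upper bound. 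Finally I would dispose of the remaining cases: if $X_1$ is a.s.\ constant the statement is trivial; otherwise, writing $b\le+\infty$ for the essential supremum of $X_1$, every $y\in(\mu,b)$ is covered by the generic case, for $y>b$ both sides equal $-\infty$ while $\phi(y)=+\infty$, and for $y=b<\infty$ one has $\mathbb{P}[\overline{X}_n\ge b]=\mathbb{P}[X_1=b]^{n}$ together with $\lambda b-A(\lambda)=-\log\mathbb{E}[e^{\lambda(X_1-b)}]\uparrow-\log\mathbb{P}[X_1=b]$ as $\lambda\to\infty$ (monotone convergence), so $\phi(b)=-\log\mathbb{P}[X_1=b]$ and the two sides agree.

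I expect the main obstacle to be the lower bound: making the change of measure rigorous, confirming that the tilted mean lands exactly on the target value $y=A'(\lambda^*)$, and verifying that the ``local'' probability $\mathbb{Q}[y\le\overline{X}_n\le y+\delta]$ does not decay exponentially --- along with isolating and directly handling the boundary case in which $A'$ never attains $y$ (possible only when $X_1$ has bounded support). The upper bound, by contrast, is essentially a one-line Markov estimate.
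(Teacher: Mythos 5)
The paper states this theorem as a classical preliminary and gives no proof, only a citation, so there is no internal argument to compare against. Your proposal is the standard and correct proof of the scalar Cram\'er theorem under the everywhere-finiteness hypothesis on $A$: the reduction of (ii) to (i) via $\{-X_k\}$, a Chernoff/Markov exponential estimate for the upper bound, an exponential tilting to mean $y$ combined with a CLT estimate of the local probability for the lower bound, and a clean treatment of the degenerate ($X_1$ a.s.\ constant), boundary ($y=b<\infty$), and impossible ($y>b$) cases. The technical points you flag as potential obstacles are all correctly resolved under the stated hypothesis: since $A$ is finite and smooth on all of $\mathbb{R}$ with $A'(0)=\mu$ and $A'(\lambda)\to b$ as $\lambda\to\infty$, the intermediate value theorem produces an interior $\lambda^*>0$ with $A'(\lambda^*)=y$ for every $y\in(\mu,b)$; finiteness of $A$ everywhere gives the tilted law finite exponential moments, hence the CLT applies; and $\mathbb{Q}[y\le\overline{X}_n\le y+\delta]\to\tfrac12$ is indeed subexponential, so letting $\delta\downarrow 0$ closes the gap with the Chernoff bound.
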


The function $A$ is called the {\em logarithmic moment generating function}. In the applications of the large deviation theory to statistical mechanics, $A$ is also called the {\em free energy function} and the function $\phi$ is called the rate function of large deviations  \cite{touchette2009large}. We can recognize that $\phi(y)$ is the {\em Legendre transform} of $A(\lambda)$ ($A$ is a convex function). Therefore, $\phi = A^*$ (the convex conjugate of $A$) and it leads to the following pair of reciprocal equations 
\begin{align}
\frac{d A(\lambda)}{d \lambda} = y	\quad \text{if and only if} \quad  \frac{d \phi(y)}{d y} = \lambda.
\label{pre:legendre}
\end{align}

Now, we can apply this equivalence \eqref{pre:legendre} to Theorem \ref{thm:von.1981}: The parameter $\lambda$ of the maximum entropy distribution $e^{\lambda s}$ is  implicitly solved by \eqref{pre:max.contraint}, which gives rise to $\lambda$ determined  by
\begin{align}
\frac{\ d \log \int e^{\lambda s} f(s) ds  }{d \lambda} = \alpha.
\label{pre:legendre1}
\end{align}
By the definition of $A(\lambda)$ and  \eqref{pre:legendre} and \eqref{pre:legendre1}, we have
\begin{align}
\frac{d A(\lambda)}{d \lambda} = \alpha	\quad \text{if and only if} \quad  \frac{d \phi(\alpha)}{d \alpha} = \lambda.
\label{pre:legendre2}
\end{align}
Therefore, this result \eqref{pre:legendre2} shows that $\lambda$ not only can be determined implicitly by the free energy function $A$ but also can be founded explicitly by the rate function $\phi$. 

One of our main theorems (Theorem \ref{thm:limit.ldp})  can be applied to a particular type of heat bath as the sum of i.i.d. random variables (Theorem \ref{thm:iidsum}), then we directly show that $\lambda$ is uniquely determined by the first derivative of the rate function $\phi$ given on the condition $\alpha$. In this case, we apply the large deviation principle directly to the distribution of the heat bath
$$  Y_n = \frac{1}{n} \sum_{i=2}^{n} X_i $$
rather than use the large deviation principle for the empirical measure
$$ L_n = \frac{1}{n} \sum_{i=1}^n \delta_{X_i}.$$
In fact, the former (our approach) actually follows Gibbs' logic of the canonical distribution through the heat bath method; The later (Gibbs conditioning principle) follows Boltzmann's logic of the canonical distribution  through counting statistics. The reason to call the ``Gibbs" conditioning principle was in order to comprehend Gibbs' prediction of the canonical distribution from a mathematical standpoint \cite{kesten2012random}, however, in our opinion, it is closer to the idea of Boltzmann's derivation of the canonical distribution.

From our perspective, choosing the maximum entropy distribution to approximate the conditional probability is a natural consequence of the emergence of $  e^{\lambda x} f(x)$ when the finite subsystem  is contained in an infinitely large system with a value far from its mean. In other words, (normalized) $  e^{\lambda x} f(x)$ is the density of the limit of a sequence of conditional probabilities and it maximizes the relative entropy \eqref{rel.entropy} as an inevitable corollary from the setup of the heat bath method. In comparison with the Gibbs conditioning principle, our logic provides a very different point of view of why the maximum entropy principle works to find the limit of conditional probabilities.  Even though these two approaches have very different philosophies, in terms of mathematics, they are connected by the reciprocal equations \eqref{pre:legendre} through the Legendre transform.

\subsection{Asymptotic behavior of probabilities}
In order to define how ``good" of an approximation of conditional probability is, we first need to decide which metric we would use in the space of measures. In what follows, let $\Omega$ denote a measurable space with $\sigma$-algebra $\mathcal{F}$ and let $\mathbb{P}$, $\mathbb{Q}$ denote two probability measures on $(\Omega, \mathcal{F})$.

\begin{definition}[KL-divergence]  	For two probability distributions of a continuous random variable, $\mathbb P$ and $\mathbb Q$, the \textit{KL-divergence} is defined by
 	\begin{align}\label{KLdivergence}
 	D_{\text{KL}}(\mathbb P ~\|~ \mathbb Q):=\int_{-\infty}^{+\infty} p(x) \log \left(\frac{p(x)}{q(x)}\right)dx,	
 	\end{align}
where $p,q$ are the density functions of $\mathbb P, \mathbb Q$, respectively. For two probability distributions of a discrete random variable, ${ \mathbb P}$ and ${ \mathbb Q} $, the Kullback-Leibler divergence between them can be written as
\begin{align} 	\label{def:QKL}
D_{\text{KL}}\left({ \mathbb P} ~\|~ { \mathbb Q} \right)&= \sum_{k\in \Omega} P(k) \log \left( \frac{P(k)}{Q(k)} \right),
\end{align} 	
where $P, Q$ are the probability mass functions of ${ \mathbb P},  { \mathbb Q}$, respectively and $\Omega$ is a countable space. By continuity arguments, the convention is assumed that $0 \log \frac{0}{q} = 0 $ for $q \in \mathbb{R}$ and $p\log \frac{p}{0} = \infty$ for $p \in \mathbb{R} \backslash \{ 0 \}$. Therefore, the KL-divergence can take values from zero to infinity.
\end{definition}
 
\begin{definition}[total variation]
 	The total variation distance between two probability measures $\mathbb P,\mathbb Q$ on a sigma-algebra $\mathcal F$ is defined by 
 	$$\delta(\mathbb P,\mathbb Q):=\sup_{A\in\mathcal F}|\mathbb P(A)-\mathbb Q(A)|.
 	$$
\end{definition}

It's well known that we have the following relation between KL-divergence and total variation by Pinsker's inequality \cite{pinsker1964information}:
 \begin{align}\label{eq:totalvariation}
 \delta(\mathbb P,\mathbb Q)\leq \sqrt{\frac{1}{2}D_{\text{KL}}(\mathbb P~\|~ \mathbb Q)}.
 \end{align}
 
\begin{definition}[convergence of measures in total variation]
Given the above definition of total variation distance, let $\{ \mathbb P_n  \}_{n \in \mathbb{N}}$ be a sequence of measures on $(\Omega, \mathcal{F} )$. The sequence is said to converge to a measure $\mathbb P$ on $(\Omega, \mathcal{F} )$ in total variation distance if 
$$  \lim_{n \rightarrow \infty} \delta(\mathbb P_n,\mathbb P)  = 0$$
and it is equivalent to 
$$ \lim_{n \rightarrow \infty}  \sup_{ \| f \|_{\infty} \leq 1  } \biggr\lvert \int f d \mathbb P_n  -  \int f d\mathbb P  \biggr\rvert = 0.  $$
\end{definition}

 \begin{definition}[weak convergence of measures] 
Let $\{ \mathbb P_n  \}_{n \in \mathbb{N}}$ be a sequence of probability measures on $(\Omega, \mathcal{F} )$. We say that $\mathbb P_n $  converges {\em weakly} to a probability measure  $\mathbb P$ on  $(\Omega, \mathcal{F} )$ if 
$$ \lim_{n \rightarrow \infty}  \int f d \mathbb P_n  = \int f d \mathbb P, $$
for all $f \in C_b(\Omega)$. 
 \end{definition} 
From the two definitions above, total variation convergence of measures always implies weak convergence of measures. 

\begin{definition}[convergence in distribution]
A sequence $\{ X_n \}_{n \in \mathbb{N}} $ of random variables is said to {\em convergence in distribution} to the random variable $X$ if 
$$ \mu_{X_n} \rightarrow \mu_X \quad \text{weakly}, $$
in which $\mu_{X_n}$ is the law of $X_n$ and $\mu$ is the law of $X$.
\end{definition}

Even though the KL-divergence is not a metric, by the inequality \eqref{eq:totalvariation}, if the KL-divergence of one sequence of measures from another sequence of measures converges to zero, then the two sequences of measures have to converge to zero in total variation. So they must converge to zero weakly. Following this line of implication, in the present work, we start with defining the KL-divergence between two sequences of measures then understand what conditions guarantee it converges to zero. Once we have that, we will attain both strong convergence and weak convergence of the two sequences of measures to zero under those conditions.

Furthermore we mention two classical theorems (see the reference \cite{shorack2000probability}) regarding the convergence of probability distributions which we will use in our proofs.

\begin{theorem}[Berry-Esseen theorem]
\label{thm:berry-esseen} 
Let $X$ have mean zero, $\mathbb{E}[X^2] = \sigma^2$, and  $\mathbb{E} | X |^3  < \infty$. Let $Z_n = \left( X_1 + \cdots + X_n \right)/ \sqrt{n} \sigma ,$ where $X_1, \cdots, X_n$ are i.i.d. copies of $X$. Then we have
\begin{align}
\left| \mathbb P \left(Z_n < z \right) - \mathbb P \left( G < z  \right) \right|     =  O\left( \frac{\mathbb{E} | X |^3 }{  \sqrt{n}} \right)
\end{align}
for all $z \in \mathbb{R},$ where $G \sim  N(0, 1).$ 	
\end{theorem}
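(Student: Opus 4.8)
The plan is to run the classical Fourier–analytic argument built on Esseen's smoothing inequality. Replacing $X$ by $X/\sigma$ we may assume $\sigma = 1$; then $\rho := \mathbb{E}|X|^3 \geq \left(\mathbb{E}[X^2]\right)^{3/2} = 1$ by Lyapunov's inequality, and the claim becomes $\Delta_n := \sup_z |F_n(z) - \Phi(z)| = O(\rho/\sqrt{n})$, where $F_n$ is the distribution function of $Z_n = (X_1+\cdots+X_n)/\sqrt{n}$ and $\Phi$ is the standard normal CDF. If $\rho/\sqrt{n}$ exceeds a fixed constant the bound is trivial since $\Delta_n \leq 1$, so we may assume $n$ is large enough that $T := \sqrt{n}/\rho$ is as large as needed below.

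First I would invoke Esseen's smoothing lemma: for every $T>0$,
$$\Delta_n \;\leq\; \frac{1}{\pi}\int_{-T}^{T}\left|\frac{\widehat{F_n}(t) - e^{-t^2/2}}{t}\right|\,dt \;+\; \frac{24}{\pi T}\,\sup_z \varphi(z),$$
where $\varphi$ is the standard normal density and $\widehat{F_n}(t) = f(t/\sqrt{n})^n$ with $f(t) := \mathbb{E}[e^{itX}]$. The second term is immediately $O(1/T) = O(\rho/\sqrt{n})$, so the whole problem reduces to estimating the characteristic-function integral on $[-T,T]$.

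Next, using $f(0)=1$, $f'(0)=0$, $f''(0)=-1$ together with the remainder bound $\left|f(t) - 1 + \tfrac{t^2}{2}\right| \leq \tfrac{\rho}{6}|t|^3$, I would show that for $|t|\leq T = \sqrt{n}/\rho$ one has $\left|f(t/\sqrt{n})\right| \leq \exp\!\left(-t^2/(3n)\right)$, hence $\left|f(t/\sqrt{n})\right|^n \leq e^{-t^2/3}$, and then apply the elementary inequality $|a^n - b^n| \leq n\,|a-b|\,\max(|a|,|b|)^{n-1}$ with $a = f(t/\sqrt{n})$ and $b = e^{-t^2/(2n)}$ to obtain
$$\left|\widehat{F_n}(t) - e^{-t^2/2}\right| \;\leq\; C\,\frac{\rho\,|t|^3}{\sqrt{n}}\,e^{-t^2/4}$$
uniformly on $[-T,T]$. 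Dividing by $|t|$ and integrating over all of $\mathbb{R}$ (the integrand $|t|^2 e^{-t^2/4}$ is integrable) bounds the first term in the smoothing inequality by $O(\rho/\sqrt{n})$, and combining the two paragraphs gives $\Delta_n = O(\rho/\sqrt{n})$, which is the assertion. (An alternative route via Stein's method would avoid characteristic functions but needs a comparable amount of bookkeeping; I would keep the Fourier approach.)

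The step I expect to be the main obstacle is the \emph{uniform} control of $\left|f(t/\sqrt{n})\right|^n$ across the entire interval $[-T,T]$: the Taylor estimate for $f$ is only informative for small argument, so one must check that the choice $T = c\sqrt{n}/\rho$ keeps $|t|/\sqrt{n}$ inside the regime where $1 - t^2/(2n)$ plus its cubic error is still dominated from above by a Gaussian, and simultaneously that the cubic remainder does not spoil $|a^n-b^n|$ near the endpoints after multiplication by the factor $n$. Balancing the cutoff $T$, the remainder term, and that factor of $n$ — together with separately disposing of the trivial regime $\rho \gtrsim \sqrt{n}$ — is where the care lies; the remaining Gaussian integral estimates are routine.
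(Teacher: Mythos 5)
The paper does not give its own proof of Theorem \ref{thm:berry-esseen}: it states the result as a classical fact and cites a reference (see also the proof of Corollary \ref{cor:berry-esseen-slusky} in the appendix, which explicitly defers to the Fourier-analytic proof in a textbook). So there is no in-paper argument to compare against line by line. Your proposal is the standard and correct route — Esseen's smoothing inequality reducing the CDF distance to a truncated characteristic-function integral on $[-T,T]$, the third-order Taylor bound $\left|f(t)-1+t^2/2\right|\le \tfrac{\rho}{6}|t|^3$, the Gaussian-type majorant $\left|f(t/\sqrt n)\right|\le e^{-t^2/(3n)}$ for $|t|\le c\sqrt n/\rho$, and the telescoping bound $|a^n-b^n|\le n|a-b|\max(|a|,|b|)^{n-1}$. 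The one place you should be explicit is the part you yourself flag: when comparing $f(t/\sqrt n)$ to $e^{-t^2/(2n)}$ you pick up, besides the cubic remainder, an additional term $\left|1-\tfrac{t^2}{2n}-e^{-t^2/(2n)}\right|=O(t^4/n^2)$; on the regime $|t|\le\sqrt n/\rho$ this is dominated by $\rho|t|^3/n^{3/2}$ using $\rho\ge 1$, so the bound you quote does hold, but the absorption should be stated rather than elided. The choice of cutoff $T=c\sqrt n/\rho$ with a small constant $c$ (rather than $c=1$) makes the inequality $\max(|a|,|b|)^{n-1}\le e^{-t^2/4}$ clean for all $n$; with $c=1$ you need $n$ modestly large, which you cover by disposing of the small-$n$ regime via $\Delta_n\le 1$. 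In short, the proposal is correct and is the same Fourier-analytic approach the paper implicitly relies on through its citation.
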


 \begin{theorem}[Slutsky's theorem]
 \label{thm:slutsky}
Let $\{ Z_n \}_{n \in \mathbb{N}}, \{ W_n \}_{n \in \mathbb{N}}$ be sequences of random variables. If $Z_n$ converges in distribution to a random variable $X$ and $W_n $ converges in probability to a constant $c$, then
\begin{align}
Z_n + W_n \rightarrow X + c \quad \text{in distribution}. 
\end{align}

 \end{theorem}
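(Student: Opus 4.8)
\section*{Proof proposal}

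The plan is to argue directly with cumulative distribution functions. Write $F$ for the distribution function of $X$, so that $t\mapsto F(t-c)$ is the distribution function of $X+c$, and recall that $Z_n\to X$ in distribution means $\mathbb{P}(Z_n\leq s)\to F(s)$ at every continuity point $s$ of $F$. Fix $t$ such that $t-c$ is a continuity point of $F$ (equivalently, $t$ is a continuity point of the distribution function of $X+c$); since the set of exceptional $t$ is at most countable, it suffices to show $\mathbb{P}(Z_n+W_n\leq t)\to F(t-c)$ for every such $t$.

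First I would record the elementary two-sided sandwich. For any $\epsilon>0$, the event inclusions $\{Z_n+W_n\leq t\}\cap\{|W_n-c|\leq\epsilon\}\subseteq\{Z_n\leq t-c+\epsilon\}$ and $\{Z_n\leq t-c-\epsilon\}\cap\{|W_n-c|\leq\epsilon\}\subseteq\{Z_n+W_n\leq t\}$ yield
\[
\mathbb{P}(Z_n\leq t-c-\epsilon)-\mathbb{P}(|W_n-c|>\epsilon)\;\leq\;\mathbb{P}(Z_n+W_n\leq t)\;\leq\;\mathbb{P}(Z_n\leq t-c+\epsilon)+\mathbb{P}(|W_n-c|>\epsilon).
\]
Next I would choose $\epsilon$ so small that both $t-c+\epsilon$ and $t-c-\epsilon$ are continuity points of $F$ (possible because continuity points are dense), let $n\to\infty$, and use $W_n\xrightarrow{P}c$, so $\mathbb{P}(|W_n-c|>\epsilon)\to 0$, together with $Z_n\xrightarrow{d}X$ to obtain
\[
F(t-c-\epsilon)\;\leq\;\liminf_{n\to\infty}\mathbb{P}(Z_n+W_n\leq t)\;\leq\;\limsup_{n\to\infty}\mathbb{P}(Z_n+W_n\leq t)\;\leq\;F(t-c+\epsilon).
\]
Finally, letting $\epsilon\downarrow 0$ through continuity points of $F$ and using that $F$ is continuous at $t-c$, both outer terms converge to $F(t-c)=\mathbb{P}(X+c\leq t)$, which pins the limit and completes the argument.

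I do not expect a genuine obstacle here; the only delicate bookkeeping is ensuring the auxiliary points $t-c\pm\epsilon$ are chosen among continuity points of $F$ and then taking $\epsilon\downarrow 0$ along such a sequence. For robustness one could instead use bounded continuous test functions: for $f\in C_b(\mathbb{R})$, split $\mathbb{E}[f(Z_n+W_n)]-\mathbb{E}[f(X+c)]$ as $\mathbb{E}\big[f(Z_n+W_n)-f(Z_n+c)\big]+\big(\mathbb{E}[f(Z_n+c)]-\mathbb{E}[f(X+c)]\big)$; the second term tends to zero by applying $Z_n\xrightarrow{d}X$ to the bounded continuous map $x\mapsto f(x+c)$, and the first is controlled using tightness of $\{Z_n\}$ (from convergence in distribution), uniform continuity of $f$ on a large compact interval, and $W_n\xrightarrow{P}c$. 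A characteristic-function proof is also available, writing $\mathbb{E}[e^{\mathrm{i}s(Z_n+W_n)}]=e^{\mathrm{i}sc}\,\mathbb{E}[e^{\mathrm{i}sZ_n}]+\mathbb{E}\big[e^{\mathrm{i}sZ_n}(e^{\mathrm{i}sW_n}-e^{\mathrm{i}sc})\big]$, bounding the last term by $\mathbb{E}|e^{\mathrm{i}sW_n}-e^{\mathrm{i}sc}|\to 0$ via bounded convergence and invoking the continuity theorem for the remainder. I would present the distribution-function argument as the primary proof for its self-containedness.
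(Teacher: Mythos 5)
The paper does not prove Slutsky's theorem; it states it as a classical result and cites \cite{shorack2000probability}, so there is no paper proof against which to compare. Your cdf sandwich argument is correct and complete and is the standard textbook proof: the event inclusions are right, the two-sided bound follows from the elementary inequalities $\mathbb{P}(A\cap B)\geq \mathbb{P}(A)-\mathbb{P}(B^c)$ and $\mathbb{P}(A)\leq\mathbb{P}(A\cap B)+\mathbb{P}(B^c)$, and you handle the one genuinely delicate point correctly by choosing $t-c\pm\epsilon$ among the (dense) continuity points of $F$ and then sending $\epsilon\downarrow 0$ along such points, using continuity of $F$ at $t-c$. The two alternative routes you sketch (bounded continuous test functions with tightness and uniform continuity, or characteristic functions with bounded convergence and the continuity theorem) are also valid and standard; any of the three would be acceptable, and the cdf argument you chose as primary is the most self-contained.
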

 
\begin{corollary}
\label{cor:berry-esseen-slusky}
Let $X$ have mean zero, $\mathbb{E}[X^2] = \sigma^2$, and  $\mathbb{E} | X |^3  < \infty$. For some finite $k \in \mathbb{N}$, let $W_n = \left( X_1 + \cdots + X_{k} \right)/ \sqrt{n} \sigma $ and  $ Z_n = \left( X_{k+1} + \cdots + X_{n + k} \right)/ \sqrt{n} \sigma$, where $X_1, \cdots, X_{n+k}$ are i.i.d. copies of $X$. Let $\tilde{Z}_n= Z_n + W_n$, then we have 
\begin{align}
\tilde{Z}_n \rightarrow G \quad \text{in distribution}, \quad G \sim N(0, 1).
\label{pre:cor:slusky}
\end{align}
Furthermore,
\begin{align}
\left| \mathbb P \left( \tilde{Z}_n < z \right) - \mathbb P \left( G < z  \right) \right| =  O\left( \frac{ \mathbb{E} | X |^3 }{  \sqrt{n}} \right)
\label{pre:cor:berry}
\end{align}
for all $z \in \mathbb{R}.$  	
\end{corollary}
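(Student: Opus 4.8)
The plan is to deduce both assertions from Theorem~\ref{thm:berry-esseen} and Theorem~\ref{thm:slutsky}, using the identity $\tilde Z_n = (X_1 + \cdots + X_{n+k})/(\sqrt n\,\sigma)$, which is immediate from $\tilde Z_n = Z_n + W_n$ and the definitions. For the distributional convergence \eqref{pre:cor:slusky}: the variable $Z_n = \frac{1}{\sqrt n\,\sigma}\sum_{i=k+1}^{n+k}X_i$ is a \emph{standardized} sum of exactly $n$ i.i.d.\ copies of $X$, so Theorem~\ref{thm:berry-esseen} gives $\sup_{w}|\mathbb P(Z_n<w)-\Phi(w)| = O(\mathbb E|X|^3/\sqrt n)\to 0$, writing $\Phi$ for the standard normal CDF, hence $Z_n\to G$ in distribution. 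The numerator of $W_n=\frac{1}{\sqrt n\,\sigma}\sum_{i=1}^k X_i$ is a \emph{fixed} random variable of mean $0$ and variance $k\sigma^2$, so $\mathrm{Var}(W_n)=k/n$ and Chebyshev's inequality gives $\mathbb P(|W_n|>\epsilon)\le k/(n\epsilon^2)\to 0$, i.e.\ $W_n\to 0$ in probability; Theorem~\ref{thm:slutsky} then yields $\tilde Z_n = Z_n+W_n\to G$ in distribution.

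For the quantitative bound \eqref{pre:cor:berry} one cannot apply Berry--Esseen to $\tilde Z_n$ directly, because $\tilde Z_n$ is a sum of $n+k$ copies of $X$ normalized by $\sqrt n\,\sigma$ rather than by $\sqrt{n+k}\,\sigma$, so $\mathrm{Var}(\tilde Z_n) = (n+k)/n \ne 1$. Instead I would write $\tilde Z_n = \beta_n Z'_{n+k}$ with $\beta_n := \sqrt{(n+k)/n}\ge 1$ and $Z'_{n+k} := \frac{1}{\sqrt{n+k}\,\sigma}\sum_{i=1}^{n+k}X_i$ a genuinely standardized sum, to which Theorem~\ref{thm:berry-esseen} applies and gives $\sup_{w}|\mathbb P(Z'_{n+k}<w)-\Phi(w)| = O(\mathbb E|X|^3/\sqrt{n+k}) = O(\mathbb E|X|^3/\sqrt n)$. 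Since $\mathbb P(\tilde Z_n<z)=\mathbb P(Z'_{n+k}<z/\beta_n)$ for every $z$, the triangle inequality yields
\begin{align*}
\bigl|\mathbb P(\tilde Z_n<z)-\Phi(z)\bigr| \le \bigl|\mathbb P(Z'_{n+k}<z/\beta_n)-\Phi(z/\beta_n)\bigr| + \bigl|\Phi(z/\beta_n)-\Phi(z)\bigr|,
\end{align*}
and the first term on the right is $O(\mathbb E|X|^3/\sqrt n)$ uniformly in $z$ by the preceding estimate.

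The remaining step --- bounding $|\Phi(z/\beta_n)-\Phi(z)|$ \emph{uniformly} in $z\in\mathbb R$ --- is the one point that is not completely routine, and I expect it to be the main obstacle. A naive Lipschitz estimate $|\Phi(a)-\Phi(b)|\le|a-b|/\sqrt{2\pi}$ leaves a factor $|z|$ that is unbounded. The remedy is a more careful mean value argument: since $\beta_n\ge 1$, one has $|\Phi(z/\beta_n)-\Phi(z)| \le |z|\,(1-1/\beta_n)\,\varphi(z/\beta_n)$, where $\varphi$ is the standard normal density, which is maximized over the interval with endpoints $z/\beta_n$ and $z$ at the endpoint $z/\beta_n$ nearest $0$; and then $|z|\,\varphi(z/\beta_n) = \beta_n\,|z/\beta_n|\,\varphi(z/\beta_n)\le \beta_n\sup_{u\in\mathbb R}|u|\varphi(u)=\beta_n(2\pi e)^{-1/2}$, so that $|\Phi(z/\beta_n)-\Phi(z)|\le(\beta_n-1)(2\pi e)^{-1/2}$. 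Since $k$ is fixed, $\beta_n-1=\sqrt{1+k/n}-1 = O(1/n)$, so this term is $O(1/\sqrt n)$; as $\sigma$ and $k$ are fixed and $\mathbb E|X|^3\ge\sigma^3>0$ by Jensen's inequality, $O(1/\sqrt n)=O(\mathbb E|X|^3/\sqrt n)$. Adding the two contributions proves \eqref{pre:cor:berry} uniformly in $z$, and \eqref{pre:cor:slusky} then follows from it a fortiori since $\Phi$ is continuous.
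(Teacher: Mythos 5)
Your argument is correct, but the mechanism for the quantitative bound \eqref{pre:cor:berry} is genuinely different from the paper's. The paper re-opens the Berry--Esseen machinery: it sets $\epsilon=\mathbb E|X|^3/\sqrt n$, reduces the claim to the estimate $\int_{|t|<c/\epsilon}|\phi_{\tilde Z_n}(t)-e^{-t^2/2}|/(1+|t|)\,dt=O(\epsilon)$ via Esseen's smoothing lemma (following Tao), expands the characteristic function of $\tilde Z_n$ as $\exp[-\tfrac{t^2}{2}\cdot\tfrac{n+k}{n}+O(\epsilon|t|^3\tfrac{n+k}{n})]$, compares it pointwise to $e^{-t^2/2}$, and integrates. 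You instead keep Theorem~\ref{thm:berry-esseen} as a black box: you write $\tilde Z_n=\beta_n Z'_{n+k}$ with $\beta_n=\sqrt{(n+k)/n}$ and $Z'_{n+k}$ a properly standardized sum of $n+k$ terms, apply Berry--Esseen to $Z'_{n+k}$, and pay for the mismatched normalization by the uniform estimate $\sup_z|\Phi(z/\beta_n)-\Phi(z)|\le(\beta_n-1)(2\pi e)^{-1/2}=O(1/n)$, obtained from a mean-value argument together with $\sup_u|u|\varphi(u)=(2\pi e)^{-1/2}$; this is exactly the step where a naive Lipschitz bound would fail, and your refinement handles it correctly. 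Your route is shorter, self-contained at the level of distribution functions, and arguably easier to verify; the paper's characteristic-function approach is heavier but would adapt more readily if one wanted higher-order Edgeworth corrections or a version with non-identically-distributed summands. Both yield the same $O(\mathbb E|X|^3/\sqrt n)$ rate, and your concluding remark that \eqref{pre:cor:slusky} follows from \eqref{pre:cor:berry} (by continuity of $\Phi$) matches the logical dependence that the paper handles via Slutsky instead.
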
 
This corollary follows from Theorem \ref{thm:berry-esseen} and Theorem \ref{thm:slutsky}. The proof is provided in Appendix \ref{appendix:prof.cor.berry}.

%%%%%%%%%%%%%%%%%%%%%%%%%%%%%

%Main Results

%%%%%%%%%%%%%%%%%%%%%%%%%%%%%

\section{Main results}
\label{ch:main}

\subsection{Setup}
\label{ch:setup}
In Section \ref{ch:introduction} of introduction, we have already provided our philosophy of adopting the
{\em conditional probability} to derive the canonical ensemble. In this section of the main results, we are going to rigorously show: when a measurable function of the subsystem is ``small" relative to the whole system,  the ``canonical distribution" is a ``good" approximation of that conditional distribution. For the sake of simplicity, we will use the terms: ``subsystem", ``heat bath", and ``whole system" to represent a measurable function of those systems, respectively. 
Within this framework, we first need to define three things rigorously:
\begin{enumerate}
    \item A relatively small subsystem.
    \label{def.small.subsystem}
    \item Canonical probability distributions.
\label{def.can.dis.}
    \item Good approximations.
\label{def.good.approx.}
\end{enumerate}
For the definition of \eqref{def.small.subsystem}: a relatively small subsystem, we consider a sequence of conditional densities
\begin{align}
f_{X \mid \tilde{Z}_n} (x; E_n), \quad E_n :=  \mu_n +  I / \beta_n,
\label{Section:existence:three.rep}
\end{align}
where $ \tilde{Z}_n := X + \tilde{Y}_n$, $X$ is a nonnegative continuous random variable and  $\tilde{Y}_n$ is a sequence of continuous random variables, $I$ is a finite interval and $\mu_n$, $\beta_n$ are positive sequences. Note that we here use $\tilde{Y}_n, \tilde{Z}_n$ instead of $Y_n, Z_n$ because we will do transformations for $\tilde{Y}_n, \tilde{Z}_n$ later, so $Y_n, Z_n$ will be used to define transformed $\tilde{Y}_n, \tilde{Z}_n$. The formula of $E_n$ is to represent two kinds of transformations that we can do for the interval $I$: $\mu_n$ is the parameter of shifting and $\beta_n$ is the parameter of scaling. Through different combinations of $\mu_n$ and $\beta_n$, the given condition of $\tilde{Z}_n$ will be on certain significant scales. For two examples,
\begin{enumerate}
	\item Assume $\mu_n := \mathbb{E} [\tilde{Z}_n] = n\mu ,$ $\mu$ is a constant and $\beta_n = 1/\sqrt{n}$, then $\tilde{Z} _n $ is conditioned to be inside the interval $E_n = n \mu + \sqrt{n}I $. The interval $E_n$ is then around  $\mathbb{E}[\tilde{Z}_n]$ with a scale of the Gaussian fluctuations in central limit theorem.
	\item Assume $\beta_n = 1/ n$, then  $\tilde{Z}_n $ is conditioned to be inside the interval  $E_n = n \mu + n I$. The interval $E_n$ is then around $\mathbb{E}[\tilde{Z}_n]$ with a scale of the large deviations.
\end{enumerate}
In our theorems, we will assume that 
\begin{align}
	\mathbb{E} [X^j] <\infty, \ \text{for some finite} \ j,  \quad \text{ and } \quad  \beta_n = o(1).
	\label{finite.small.system}
\end{align}
Therefore, the definition \eqref{Section:existence:three.rep} of conditional densities is a sequence of densities for the nonnegative continuous random variable $X$ with $\mathbb{E}[X^j] < \infty$  conditioned on the event $\tilde{Z}_n \in E_n$ with $E_n \rightarrow \infty  \ (\beta_n \rightarrow 0)$. In this way, the positive sequence $\beta_n$ characterizes that the subsystem is relatively ``small" to the given condition of the whole system. 

Then we will extend our definition of a ``small" subsystem to the case when we have discrete random variables. Consider a sequence of conditional probability functions
\begin{align} \label{def:PMF}
P \big( K =  k \mid \tilde{H}_n \in  E_n \big), \quad E_n = \mu_n + I/\beta_n,
\end{align}
where $\tilde{H}_n := K+ \tilde{L}_n$,  $K$ is a nonnegative discrete random variables and we assume that 
\begin{align}
\mathbb{E}[K^j]  < \infty, \ \text{for some finite} \ j, \quad \text{ and } \quad  \beta_n = o(1),
\label{finite.small.system2}
\end{align}
and $\tilde{L}_n$ is a sequence of discrete random variables and $\tilde{H}_n := K + \tilde{L}_n$.

For the definition of \eqref{def.can.dis.}: canonical probability distributions, we are introducing a general form of the canonical probability distribution as follows:  Let $I$ be the interval in the setup \eqref{Section:existence:three.rep}. We consider a sequence of functions $\zeta_n: \mathcal{A} \times \mathbb{R} \rightarrow \mathbb{R} $, where $\mathcal{A}$ is the set of all finite intervals on $\mathbb{R}$. For the  canonical probability distribution of a nonnegative continuous random variable $X$, its density can be represented by
\begin{align}
\frac{f_X(x) e^{-\zeta_n( I ; x) x } }{\displaystyle \int_{\mathbb{R}^+} f_X(x) e^{-\zeta_n( I ; x) x } dx }
\quad \text{and} \quad  0 \leq \zeta_n( I ; x) < \infty, \ \text{for all} \ x \in \mathbb{R^+}.
\label{def:cont.can.dis}
\end{align} 
Consider a sequence of functions $\hat{\zeta}_n: \mathcal{A} \times \mathbb{R} \rightarrow \mathbb{R} $. For the canonical probability distribution of  a nonnegative discrete random variable $K$, it can be represented by
\begin{align}
\frac{P(K=k)e^{-\hat{\zeta}_n( I; k) k } }{\sum_{k \in S} P(K=k)e^{-\hat{\zeta}_n( I; k) k }} \quad \text{and} \quad  0 \leq \hat{\zeta}_n( I; k)  < \infty, \ \text{for all} \ k \in S,
\label{def:disc.can.dis}
\end{align}
where $S$ is a set of the support of $P(K=k)$.

For the definition of \eqref{def.good.approx.}: good approximations, ``good" is defined by a sufficiently small distance of two distributions in total variation \eqref{eq:totalvariation}. In most of our results, we prove that two sequences of distributions converge to zero in KL-divergence, by Pinsker's inequality, it implies those two sequences converge to zero in total variation, i.e., one sequence is a good approximation of the other one.

\subsection{Approximation of conditional probabilities}
\label{ch:existence}

Based on the definitions of \eqref{def.small.subsystem}, \eqref{def.can.dis.}, and \eqref{def.good.approx.} in the setup, we provide two approximation theorems to show the existence of the canonical distributions as good approximations of conditional distributions when the subsystem is sufficiently small relative to the whole systems.

Based on the setup \eqref{Section:existence:three.rep}, let $X_n := \beta_n X$ and take $j = 2$ for the assumption \eqref{finite.small.system}, i.e.,
\begin{align}
\mathbb{E}[X^2] < \infty, \quad \text{ and } \quad  \beta_n = o(1).
\label{finite.small.system3}
\end{align}
Let $a_n := \beta_n^2 \mathbb{E}[X^2] $, hence we have that
\begin{align}
	\mathbb{E}[X_n^2] = a_n, \ a_n=o(1).
\end{align}
Let $Y_n := \beta_n \left(  \tilde{Y}_n - \mu_n  \right)$ and $Z_n := X_n + Y_n$. Note that $Y_n, Z_n$ are the linear transformations of $\tilde{Y}_n, \tilde{Z}_n$, respectively; and recall that $\tilde{Z}_n = X +\tilde{Y}_n$ and the parameters of the transformation, $\beta_n, \mu_n,$ are from $E_n = \mu_n + I/\beta_n$ in the conditional density \eqref{Section:existence:three.rep}. Since we assume $I$ is a finite interval in \eqref{Section:existence:three.rep}, we can define it explicitly as $ I = [h, h + \delta], \ h,\delta \in \mathbb{R}$ and $\delta>0$. 

Based on the definitions given above, let ${\mathbb P}^{(n)}_I$ be a sequence of probability measures with density functions
\begin{align}\label{eq:PIQI}
\frac{f_{X}(x) e^{-\beta_n \psi_n (  I ;  \beta_n x )x}}{\displaystyle \int_{\mathbb{R}^+} f_{X}(x)  e^{-\beta_n \psi_n (I ; \beta_n x )x}dx}, \quad \psi_n(I  ; \beta_n x):= \frac{\partial \log P\big( Y_n  \in [y, y+\delta] \mid X_n = \beta_n x \big)}{\partial y}\biggr\rvert_{y= h }.
\end{align}
And let ${\mathbb Q}^{(n)}_I$ be a sequence of probability measures with density functions 
$f_{X \mid \tilde{Z}_n } \big(x  ;  E_n \big).$

Our first theorem for continuous random variables is as follows:

\begin{theorem}

\label{thm:theorem 1}
  Assume there exist positive constants $C_1, C_2$, a positive sequence $b_n = o(1)$, and an open interval $D$ such that the following holds:

\begin{enumerate}
	\item For all  $x \in \mathbb{R}^+$,  $y \in \mathbb{R}$,  
	\begin{align} \label{eq:logsquare}  \left|\frac{\partial^2  P\big( Y_n  \in [y, y + \delta ] \mid X_n = x \big)}{\partial y^2} \right|\leq C_1, \quad \left|\frac{\partial^2 \log P\big( Y_n  \in [y, y + \delta ] \mid X_n = x \big)}{\partial y^2}\right|\leq C_2. \end{align} 
	\item  For all  $x \in \mathbb{R}^+$ and every $ [y,  y+ \delta] \subset   D$, there exist positive constants $\delta_1, C_3$ depending on $y$ such that
	 \begin{align} \label{eq:fyx}
	 	&   P\big(Y_n \in [y,  y+ \delta] \mid X_n= x \big) \geq \delta_1 ,\quad 0 \leq \frac{\partial \log P\big(Y_n \in [y,  y+ \delta] \mid X_n= x \big)}{ \partial y} \leq C_3,   \\
	 	 \label{eq:fyx2}
	 	& \big| P\big(Y_n \in [y,  y+ \delta] \mid X_n= x \big)  -  P\big( Y_n \in [y,  y+ \delta] \big) \big| \leq b_n  P\big( Y_n \in [y,  y+ \delta]  \big) .
	 \end{align}
	\item  For every $ [z, z+\delta] \subset D$, there exists a positive constant $\delta_2$ depending on $z$ such that
	\begin{align} \label{eq:condforz}
	 P\big( Z_n \in [z, z+\delta] \big) \geq \delta_2.	
	\end{align}

\end{enumerate}
Given an interval $I   \subset D$, then 
\begin{align}
D_{\textnormal{KL}}\left({\mathbb P}^{(n)}_I \|~ {\mathbb Q}^{(n)}_I\right) =O(a_n+b_n),
\end{align} 
and ${\mathbb P}^{(n)}_I$  satisfies  the definition of the canonical probability distributions in \eqref{def:cont.can.dis}.
 
\end{theorem}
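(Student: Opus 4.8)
The plan is to compute the likelihood ratio $\mathrm d\mathbb P^{(n)}_I/\mathrm d\mathbb Q^{(n)}_I$ in closed form, observe that it factors as an $x$-independent constant $C_n$ times $\big(1+\epsilon_n(x)\big)e^{R_n(x)}$ with two small corrections, and then read off the divergence. First I would rewrite the conditioning event: since $\beta_n>0$, $\{\tilde Z_n\in E_n\}=\{\beta_n(\tilde Z_n-\mu_n)\in I\}=\{X_n+Y_n\in I\}=\{Z_n\in I\}$, so $\mathbb Q^{(n)}_I$ is the law of $X$ conditioned on $Z_n\in I$ and, by Bayes' rule, has density $q(x)=f_X(x)\,G_n(h-\beta_n x;\beta_n x)/P(Z_n\in I)$, where $G_n(y;x'):=P\big(Y_n\in[y,y+\delta]\mid X_n=x'\big)$ and $P(Z_n\in I)>0$ by \eqref{eq:condforz}.

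The core step is a second-order Taylor expansion of $\log G_n(\,\cdot\,;\beta_n x)$ about $y=h$ — a genuine $C^2$ identity on $\mathbb R$, since the bounds in \eqref{eq:logsquare} presuppose $G_n(\,\cdot\,;x')$ positive and twice continuously differentiable there. It gives $\log G_n(h-\beta_n x;\beta_n x)=\log G_n(h;\beta_n x)-\beta_n x\,\psi_n(I;\beta_n x)+R_n(x)$, with $\psi_n$ exactly the function in \eqref{eq:PIQI} and $|R_n(x)|\le\tfrac12 C_2\beta_n^2x^2$ by the second bound in \eqref{eq:logsquare}. Exponentiating and inserting \eqref{eq:fyx2} in the form $G_n(h;\beta_n x)=P(Y_n\in I)\big(1+\epsilon_n(x)\big)$, $\sup_x|\epsilon_n(x)|\le b_n$, I obtain $q(x)=C_n\big(1+\epsilon_n(x)\big)e^{R_n(x)}p(x)$, where $p$ is the density of $\mathbb P^{(n)}_I$, its normalizer is $N_n=\int_{\mathbb R^+}f_X(x)e^{-\beta_n\psi_n(I;\beta_n x)x}\,dx$, and $C_n:=N_nP(Y_n\in I)/P(Z_n\in I)$ does not depend on $x$.

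Therefore $D_{\mathrm{KL}}(\mathbb P^{(n)}_I\,\|\,\mathbb Q^{(n)}_I)=-\log C_n-\mathbb E_p\big[\log(1+\epsilon_n(X))\big]-\mathbb E_p\big[R_n(X)\big]$, writing $\mathbb E_p,\mathbb E_q$ for expectations under $\mathbb P^{(n)}_I,\mathbb Q^{(n)}_I$. The two expectation terms are routine: $\psi_n\ge0$ (by \eqref{eq:fyx}) gives $p\le f_X/N_n$, and $N_n\to1$ by dominated convergence (using $0\le\psi_n\le C_3$ and $\beta_n=o(1)$), so $\mathbb E_p[X^2]=O(1)$ and $|\mathbb E_p[R_n(X)]|\le\tfrac12 C_2\beta_n^2\mathbb E_p[X^2]=O(\beta_n^2)=O(a_n)$, while $|\mathbb E_p[\log(1+\epsilon_n(X))]|\le 2b_n$. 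The delicate term is $-\log C_n$, and this is the hard part: I would \emph{not} estimate $C_n^{-1}=\mathbb E_p[(1+\epsilon_n(X))e^{R_n(X)}]$ by expanding the exponential, because $R_n(x)$ is of order $\beta_n^2x^2$ and that route would demand control of $\mathbb E[X^4e^{c\beta_n^2X^2}]$, far beyond the assumed second moment. Instead I would play the two divergences against each other: from $q/p=C_n(1+\epsilon_n)e^{R_n}$, $0\le D_{\mathrm{KL}}(\mathbb Q^{(n)}_I\,\|\,\mathbb P^{(n)}_I)=\log C_n+\mathbb E_q[\log(1+\epsilon_n(X))]+\mathbb E_q[R_n(X)]$; and since $G_n\le1$ and $P(Z_n\in I)\ge\delta_2$ we also have $q\le f_X/\delta_2$, hence $\mathbb E_q[X^2]=O(1)$ and the same two estimates hold under $\mathbb E_q$. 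This forces $-\log C_n\le O(a_n+b_n)$, and combining with the trivial lower bound $D_{\mathrm{KL}}(\mathbb P^{(n)}_I\,\|\,\mathbb Q^{(n)}_I)\ge0$ yields $D_{\mathrm{KL}}(\mathbb P^{(n)}_I\,\|\,\mathbb Q^{(n)}_I)=O(a_n+b_n)$.

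Finally I would check the canonical form: $\mathbb P^{(n)}_I$ has density proportional to $f_X(x)e^{-\zeta_n(I;x)x}$ with $\zeta_n(I;x):=\beta_n\psi_n(I;\beta_n x)$, and \eqref{eq:fyx} gives $0\le\psi_n(I;\beta_n x)\le C_3$, so $0\le\zeta_n(I;x)\le\beta_nC_3<\infty$ for every $x\in\mathbb R^+$ and the normalizing integral is $\le1$, which is exactly \eqref{def:cont.can.dis}. The main obstacle is the normalizing constant $C_n$ treated in the previous paragraph; the only remaining technical points are the $C^2$ smoothness and positivity of $G_n$ needed to license the Taylor expansion, and keeping $N_n$, $P(Y_n\in I)$, $P(Z_n\in I)$ bounded away from $0$ and $\infty$, both of which follow from \eqref{eq:logsquare}--\eqref{eq:condforz} together with $\beta_n=o(1)$.
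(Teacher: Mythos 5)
Your argument is correct, and the key move is genuinely different from the paper's. You both start from the same Bayes expansion and the same two-term decomposition of $\log\bigl(f_{X_n\mid Z_n}/(A_n f_{X_n}e^{-\psi_n x})\bigr)$: a quadratic Taylor remainder $R_n$ of order $\beta_n^2 x^2$ controlled by \eqref{eq:logsquare}, and an $x$-dependent ``constant'' piece $\tfrac{P_{Y_n\mid X_n}(I;x)}{P_{Z_n}(I)A_n}$, which in your notation is $C_n\bigl(1+\epsilon_n(x)\bigr)$. The difference is in how that second piece is controlled. The paper establishes directly, via the normalization identity $1=\int f_{X_n\mid Z_n}\,dx$ together with \eqref{eq:fyx}--\eqref{eq:condforz}, that $\tfrac{P_{Y_n\mid X_n}(I;x)}{P_{Z_n}(I)A_n}=1+O(a_n+b_n)$ uniformly in $x$, and then takes logs. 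You instead bound $|\log(1+\epsilon_n)|=O(b_n)$ outright, and then pinch the remaining $x$-free constant $-\log C_n$ between $0\le D_{\mathrm{KL}}(\mathbb Q^{(n)}_I\,\|\,\mathbb P^{(n)}_I)$ and $0\le D_{\mathrm{KL}}(\mathbb P^{(n)}_I\,\|\,\mathbb Q^{(n)}_I)$, using that both $q\le f_X/\delta_2$ and $p\le f_X/N_n$ give $O(1)$ second moments. This avoids the normalization-identity computation entirely and is cleaner: the dual-KL pinching converts what the paper treats as a delicate asymptotic estimate into two applications of Jensen. The one point you should make explicit, though you implicitly have it, is that $D_{\mathrm{KL}}(\mathbb Q^{(n)}_I\,\|\,\mathbb P^{(n)}_I)$ is a well-defined finite sum of the three pieces (integrability of $R_n$ under $q$, which your $\mathbb E_q[X^2]<\infty$ supplies), so the nonnegativity inequality can legitimately be rearranged. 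Your verification of the canonical form \eqref{def:cont.can.dis} is the same as the paper's.
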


\begin{remark}
By Pinsker's inequality, Theorem \ref{thm:theorem 1} implies that
\begin{align*}
\delta \left({\mathbb P}^{(n)}_I, {\mathbb Q}^{(n)}_I\right) =O\left(\sqrt{ a_n+b_n}\right).
\end{align*}
\end{remark}

\begin{remark} \label{rmk:smallandweak}
Interpretations of Theorem \ref{thm:theorem 1} for statistical mechanics: the sequence  $a_n = o(1)$ represents that the second moment of the function of the subsystem $X$ scaled by the size of the given condition of the whole system asymptotically goes to zero. And the sequence $b_n = o(1)$ represents that $X_n$ and $Y_n$ are asymptotically independent. By our approximation theorem, 
using the canonical distribution to approximate the conditional distribution results in a very small error $O(\sqrt{ a_n+b_n})$ when $n $ is sufficiently large, i.e., 
\begin{enumerate}
\item \label{cond:stat.mech.small} The subsystem is small relative to the whole system.
\item \label{cond:stat.mech.indep} The subsystem has weak interaction with its surrounding. 
\end{enumerate}
Note that these conditions \eqref{cond:stat.mech.small} and \eqref{cond:stat.mech.indep} echo the physicist's setup of the canonical ensemble in statistical mechanics. 
\end{remark}

For Theorem \ref{thm:theorem 1}, we require the condition \ref{eq:fyx2} and the sequence $b_n$ in that condition is asymptotic to zero. As Remark \ref{rmk:smallandweak}, it means that the subsystem and the heat bath are asymptotically independent. In the following corollary, we are going to extend Theorem \ref{thm:theorem 1} to the case when the subsystem $X_n$ and its surrounding (the heat bath) $Y_n$ are not asymptotically independent.

Recall that  $ I = [h, h + \delta], \ h,\delta \in \mathbb{R}$ and $\delta>0$ and ${\mathbb Q}^{(n)}_I$ is a sequence of probability measures with density functions 
$f_{X \mid \tilde{Z}_n } \big(x  ;  E_n \big).$ Let $\hat{{\mathbb P}}^{(n)}_I$ be a sequence of probability measures with density functions
\begin{align*}
\frac{f_{X}(x) e^{-\beta_n \phi_n (  I )x}}{\displaystyle \int_{\mathbb{R}^+} f_{X}(x)  e^{-\beta_n \phi_n (I )x}dx},
\end{align*}
where
\begin{align} \label{cor:parm.exp}
   \phi_n(I):= \frac{\partial \log P\big( Y_n  \in [y, y+\delta] \mid X_n = 0 \big)}{\partial y}\biggr\rvert_{y= h } - \frac{\partial \log P\big( Y_n  \in [y, y+\delta] \mid X_n = 0 \big)}{\partial x}\biggr\rvert_{y= h }.  
\end{align}

\begin{corollary}

\label{cor:corollary 1}
  Assume there exist positive constants $C_1, C_2$, and an open interval $D$ such that the following holds:

\begin{enumerate}
	\item For all  $x \in \mathbb{R}^+$,  $y \in \mathbb{R}$,  
	\begin{align} \label{cor:eq:logsquare}  \left| \partial^{(2)}  P\big( Y_n  \in [y, y + \delta ] \mid X_n = x \big) \right|\leq C_1, \quad \left|\partial^{(2)} \log P\big( Y_n  \in [y, y + \delta ] \mid X_n = x \big)\right|\leq C_2, \end{align} 
	where $\partial^{(2)}$ denotes all the second order partial derivatives. 
	\item  For all  $x \in \mathbb{R}^+$ and every $ [y,  y+ \delta] \subset   D$, there exist positive constants $\delta_1, C_3$ depending on $y$ such that
	 \begin{align} \label{cor:eq:fyx}
	 	&    P\big(Y_n \in [y,  y+ \delta] \mid X_n= x \big) \geq  \delta_1 ,\quad 0 \leq \partial^{(1)}  \log P\big(Y_n \in [y,  y+ \delta] \mid X_n= x \big) \leq C_3,
	 \end{align}
	 where $\partial^{(1)}$ denotes all the first order partial derivatives. 
	\item  For every $ [z, z+\delta] \subset D$, there exists a positive constant $\delta_2$ depending on $z$ such that
	\begin{align} \label{cor:eq:condforz}
	 P\big( Z_n \in [z, z+\delta] \big) \geq \delta_2.	
	\end{align}
\end{enumerate}
Given an interval $I   \subset D$, then 
\begin{align}
D_{\textnormal{KL}}\left(\hat{{\mathbb P}}^{(n)}_I \|~ {\mathbb Q}^{(n)}_I\right) =O(a_n),
\end{align} 
and $ \hat{{\mathbb P}}^{(n)}_I$  satisfies  the definition of the canonical probability distributions in \eqref{def:cont.can.dis}.
 
\end{corollary}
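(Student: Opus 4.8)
The plan is to follow the strategy behind Theorem \ref{thm:theorem 1}, but to \emph{linearize the exact conditional weight itself} instead of first replacing it by the $Y_n$-marginal weight. That is exactly what lets us drop the asymptotic-independence hypothesis (the analogue of \eqref{eq:fyx2}), and why only $O(a_n)$ — rather than $O(a_n+b_n)$ — remains.

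\textbf{Step 1 (Bayes rewriting).} First I would write the density of $\mathbb{Q}^{(n)}_I$ by conditioning on $\{X=x\}$. Since $\tilde Z_n=X+\tilde Y_n$ and $E_n=\mu_n+I/\beta_n$, on $\{X=x\}$ the event $\{\tilde Z_n\in E_n\}$ coincides with $\{Y_n\in[h-\beta_n x,\,h+\delta-\beta_n x]\}$ with $X_n=\beta_n x$. Hence $f_{X\mid\tilde Z_n}(x;E_n)$ is proportional in $x$ to $f_X(x)\,P\big(Y_n\in[h-\beta_n x,\,h+\delta-\beta_n x]\mid X_n=\beta_n x\big)$, the normalizer being $P(\tilde Z_n\in E_n)=P(Z_n\in I)\ge\delta_2>0$ by \eqref{cor:eq:condforz}, so $\mathbb{Q}^{(n)}_I$ is well defined.

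\textbf{Step 2 (Taylor expansion giving $\phi_n$).} Set $G_n(y,x):=\log P(Y_n\in[y,y+\delta]\mid X_n=x)$ and $g_n(t):=G_n(h-\beta_n t,\beta_n t)$. By the chain rule $g_n'(0)=-\beta_n\partial_y G_n(h,0)+\beta_n\partial_x G_n(h,0)=-\beta_n\phi_n(I)$ with $\phi_n(I)$ \emph{exactly} as defined in \eqref{cor:parm.exp}, and $|g_n''|\le 4C_2\beta_n^2$ by \eqref{cor:eq:logsquare}. Taylor's theorem gives $g_n(x)=G_n(h,0)-\beta_n\phi_n(I)x+r_n(x)$ with $|r_n(x)|\le 2C_2\beta_n^2x^2$. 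Therefore $\mathbb{Q}^{(n)}_I$ has density $\propto f_X(x)e^{-\beta_n\phi_n(I)x+r_n(x)}$ while $\hat{\mathbb{P}}^{(n)}_I$ has density $\propto f_X(x)e^{-\beta_n\phi_n(I)x}$: the two differ only by the factor $e^{r_n}$ and a constant. Substituting into \eqref{KLdivergence}, the $f_X$ and $e^{-\beta_n\phi_n(I)x}$ factors cancel and the two normalizers combine into $\mathbb{E}_{\hat{\mathbb{P}}^{(n)}_I}[e^{r_n(X)}]$, yielding the exact identity $D_{\text{KL}}(\hat{\mathbb{P}}^{(n)}_I\,\|\,\mathbb{Q}^{(n)}_I)=-\mathbb{E}_{\hat{\mathbb{P}}^{(n)}_I}[r_n(X)]+\log\mathbb{E}_{\hat{\mathbb{P}}^{(n)}_I}[e^{r_n(X)}]$.

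\textbf{Step 3 (bounding the two terms) and the main obstacle.} The first term is $O(\beta_n^2\,\mathbb{E}_{\hat{\mathbb{P}}^{(n)}_I}[X^2])$; since $\phi_n(I)\in[0,2C_3]$ (finite and nonnegative by \eqref{cor:eq:fyx}) the exponential tilt does not increase the second moment, and the normalizer $\int f_X e^{-\beta_n\phi_n(I)x}\,dx$ is bounded below once $\beta_n$ is small, so $\mathbb{E}_{\hat{\mathbb{P}}^{(n)}_I}[X^2]=O(\mathbb{E}[X^2])$ and the term is $O(a_n)$ with $a_n=\beta_n^2\mathbb{E}[X^2]$ as in \eqref{finite.small.system3}. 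The delicate part is showing $\mathbb{E}_{\hat{\mathbb{P}}^{(n)}_I}[e^{r_n(X)}]=1+O(a_n)$, because $r_n(x)=O(\beta_n^2x^2)$ is \emph{not} uniformly small and only a second moment of $X$ is assumed. I would split $\mathbb{R}^+=[0,1/\beta_n]\cup(1/\beta_n,\infty)$: on $[0,1/\beta_n]$ one has $|r_n(x)|\le 2C_2$, so $|e^{r_n}-1|\le C|r_n|\le C\beta_n^2x^2$ and this piece contributes $O(\beta_n^2\mathbb{E}_{\hat{\mathbb{P}}^{(n)}_I}[X^2])=O(a_n)$; on the tail, $\hat{\mathbb{P}}^{(n)}_I\big((1/\beta_n,\infty)\big)\le\beta_n^2\mathbb{E}_{\hat{\mathbb{P}}^{(n)}_I}[X^2]=O(a_n)$ by Chebyshev, and $\int_{1/\beta_n}^{\infty}e^{r_n}\,d\hat{\mathbb{P}}^{(n)}_I$ equals a bounded multiple of $\mathbb{Q}^{(n)}_I\big((1/\beta_n,\infty)\big)\le P(X>1/\beta_n)/\delta_2=O(a_n)$, the boundedness coming from $P(Y_n\in I\mid X_n=0)\ge\delta_1$ in \eqref{cor:eq:fyx} together with the normalizer lower bound. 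Combining, $\mathbb{E}_{\hat{\mathbb{P}}^{(n)}_I}[e^{r_n}]-1=O(a_n)$, hence $\log\mathbb{E}_{\hat{\mathbb{P}}^{(n)}_I}[e^{r_n}]=O(a_n)$, and $D_{\text{KL}}(\hat{\mathbb{P}}^{(n)}_I\,\|\,\mathbb{Q}^{(n)}_I)=O(a_n)$. Since $0\le\phi_n(I)<\infty$, $\hat{\mathbb{P}}^{(n)}_I$ is of the form \eqref{def:cont.can.dis}; Pinsker's inequality \eqref{eq:totalvariation} then upgrades the estimate to $\delta(\hat{\mathbb{P}}^{(n)}_I,\mathbb{Q}^{(n)}_I)=O(\sqrt{a_n})$, exactly as in the remark after Theorem \ref{thm:theorem 1}. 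I expect Step 3's tail control — keeping the non-uniform quadratic remainder within $O(a_n)$ using only second moments — to be the sole real difficulty.
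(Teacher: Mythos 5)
Your proposal is correct and takes a genuinely different route from the paper's proof. The paper proves Corollary \ref{cor:corollary 1} (in Appendix \ref{proof:cor:corollary 1}) by closely mirroring the proof of Theorem \ref{thm:theorem 1}: it Taylor-expands the probability $P(Y_n\in[y,y+\delta]\mid X_n=x)$ itself to second order in both variables, rewrites the linear piece as an exponential plus a quadratic remainder $k_n(x)x^2$ via the identity $e^{u}=1+u+\tfrac12 e^{\gamma u}u^2$, and \emph{separately} Taylor-expands $\log P$; the KL-divergence is then split by the triangle inequality into a normalizer-ratio term $|\log(B_n/A_n)|$ — controlled by the normalization identity $\int f_{X_n\mid Z_n}\,dx=1$ together with uniform boundedness of $k_n$ — plus a ratio term controlled by the $\log$-expansion. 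You instead expand $\log P$ only once, obtaining the exact identity $D_{\mathrm{KL}}(\hat{\mathbb P}^{(n)}_I\|\mathbb Q^{(n)}_I)=\log\mathbb E_{\hat{\mathbb P}^{(n)}_I}[e^{r_n(X)}]-\mathbb E_{\hat{\mathbb P}^{(n)}_I}[r_n(X)]$, and bound both terms directly; the price is the tail splitting $[0,1/\beta_n]\cup(1/\beta_n,\infty)$ needed to control $\mathbb E_{\hat{\mathbb P}^{(n)}_I}[e^{r_n}]$, which the paper avoids because its $k_n$ is uniformly bounded on all of $\mathbb R^+$ and a one-line moment bound then suffices. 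Your decomposition is tidier (no double expansion, no $k_n$), and the tail argument is sound under the stated second-moment hypothesis and the bounds $\hat A_n/C_n=O(1)$.

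One caveat that applies equally to both proofs: the claim $\phi_n(I)\ge 0$ does not actually follow from \eqref{cor:eq:fyx}, which only yields $\partial_y\log P,\,\partial_x\log P\in[0,C_3]$ and hence $\phi_n(I)\in[-C_3,C_3]$. You use $\phi_n(I)\ge0$ to assert that the tilt does not inflate $\mathbb E_{\hat{\mathbb P}^{(n)}_I}[X^2]$ and that $\hat A_n$ stays bounded; the paper uses it to bound $A_n$ and $k_n$ and to conclude that $\hat{\mathbb P}^{(n)}_I$ fits the definition \eqref{def:cont.can.dis}. This is a gap in the hypotheses of the corollary itself rather than a defect specific to your argument, but it is worth flagging since you invoke nonnegativity explicitly.
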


The proof of Corollary \ref{cor:corollary 1} basically follows from the proof of Theorem \ref{thm:theorem 1}, and we provide the details of the proof in Appendix \ref{proof:cor:corollary 1}. 

\begin{remark}
Here we want to emphasize the difference between  Theorem \ref{thm:theorem 1} and
Corollary \ref{cor:corollary 1}: On the one hand, Corollary \ref{cor:corollary 1} requires a stronger condition that those partial derivatives in the conditions \eqref{cor:eq:logsquare} and \eqref{cor:eq:fyx} have to be bounded both in the $x$ and $y$ directions; however,
Theorem \ref{thm:theorem 1} only requires that the partial derivatives in the conditions \eqref{eq:logsquare} and \eqref{eq:fyx} are bounded in the $y$ direction. On the other hand, Corollary \ref{cor:corollary 1} does not require the condition \eqref{eq:fyx2} in Theorem \ref{thm:theorem 1}, which is to define the asymptotic independence between $X_n$ and $Y_n$. Based on the difference of those conditions, Theorem \ref{thm:theorem 1} and Corollary \ref{cor:corollary 1} give rise to distinct parameters of the exponential factors. The parameter of the exponential factor \eqref{cor:parm.exp} in Corollary \ref{cor:corollary 1} includes one additional term which involves the partial derivative with respect to $x$.

Note that the parameter of the exponential factor \eqref{cor:parm.exp} can be rewritten as
\begin{align} \label{cor:parm.exp2}
   \phi_n(I) = \frac{\partial \log P\big( Y_n  \in [y, y+\delta] \big)}{\partial y}\biggr\rvert_{y= h } + \left( \frac{ \partial \log C(x,y)}{\partial y} -  \frac{ \partial \log C(x,y)}{\partial x}
    \right) \biggr\rvert_{(x=0, y= h)},
\end{align}
where 
\begin{align} \label{cor:parm.exp3}
    C(x,y) = \frac{P\big( Y_n  \in [y, y+\delta] \mid X_n = x \big)}{P\big( Y_n  \in [y, y+\delta] \big)}.
\end{align}
Corollary \ref{cor:corollary 1} with the parameter represented by $\eqref{cor:parm.exp2}$ has a critical interpretation in statistical mechanics: For a system in contact with a heat bath, if the interaction are not weak (i.e. the correlation in mathematical terms does not approach zero), then the effect of this interaction will appear in the parameter of the exponential factor as the function of $C(x, y)$ in \eqref{cor:parm.exp2} for the canonical distribution. This result is different from the standard example in statistical mechanics: in the limit where the interaction goes to zero, the parameter only includes the effect of the fluctuations of heat bath (the first term on the right side of \eqref{cor:parm.exp2}) without any effect from the correlation  (the second term on the right side of \eqref{cor:parm.exp2}). 

\end{remark}

Now we extend our approximation theorem to discrete random variables based on the setup \eqref{def:PMF}. Recall that  $\tilde{H}_n = K+ \tilde{L}_n$ and $E_n = \mu_n + I/\beta_n$ defined in the conditional probability mass function  \eqref{def:PMF}. Take $j = 2$ for the assumption \eqref{finite.small.system2}, i.e.,
\begin{align}
\mathbb{E}[K^2] < \infty,
\label{finite.small.system4}
\end{align}
and by the definition \eqref{def:disc.can.dis}, we have a set $S$ such that 
\begin{align}\label{def:setS}
S:=\{k \in \mathbb R: P(K = k)>0 \}.
\end{align}

Let $K_n := \beta_n K$ be a sequence of nonnegative discrete random variables and let $a_n := \beta_n^2 \mathbb{E}[K^2] $. By \eqref{finite.small.system4} and \eqref{def:setS}, we have that
\begin{align}
\mathbb{E}[K_n^2] = a_n, \ a_n=o(1),
\end{align} 
and a sequence of sets $S_n$ such that
\begin{align}\label{eq:setS}
S_n :=\{  \beta_n k \in \mathbb R: P(K_n =  \beta_n k )>0 \}.
\end{align}

By shifting with $\mu_n$ and scaling with $\beta_n$, we can define a linear transformation of $\tilde{L}_n$, $L_n := \beta_n \left( \tilde{L}_n - \mu_n \right)$, and let $H_n := K_n + L_n$. Furthermore, let  $Y_n$ be a sequence of continuous random variables and $Z_n := K_n + Y_n$. Based on the given definitions, our second theorem for discrete random variables is as follows:

\begin{theorem}
\label{thm:d+d} 

Assume the following conditions hold:
\begin{enumerate}
    \item \label{cond:sum.dis}
    All conditions in Theorem \ref{thm:theorem 1} hold for $K_n, Y_n, Z_n$ on an open interval $D$.
    \item There exists a set $D' \subset D$ and a positive sequence $c_n = o(1)$ such that for every interval $ I' \subset D' $,
\begin{align}\label{eq:domain}
\sup_{ \beta_n k \in S_n}\biggr\lvert P\big( K_n = \beta_n k \mid H_n \in I'  \big) - P \big( K_n =   \beta_n k \mid Z_n \in  I' \big) \biggr\rvert= O(c_n).
\end{align}
\end{enumerate}
Given an interval $ I \subset D' $, then
\begin{align}\label{eq:cor55triangle}
\sup_{ k\in S}\biggr\lvert  P \big( K =  k \mid \tilde{H}_n \in  E_n \big)  -  B_n P( K = k) e^{-\beta_n\hat{\psi}_n (I; \beta_n k) k} \biggr\rvert = O\left(c_n + \sqrt{a_n + b_n} \right),
\end{align}   
where 
\begin{align*} 
\frac{1}{B_n} : = {\sum_{k\in S} P(K=k)  e^{-\beta_n \psi_n(I ; \beta_n k)k}}    \quad \text{and} \quad  \hat{\psi}_n(I ; \beta_n k):= \frac{\partial \log P\big( Y_n  \in [y, y + \delta ] \mid  K_n = \beta_n k \big)}{\partial y}\biggr\rvert_{y=h}, 
\end{align*}
and $b_n$ is defined in Condition \eqref{eq:fyx2} of Theorem \ref{thm:theorem 1}. Furthermore,
\begin{align}
B_n P( K = k) e^{-\beta_n\hat{\psi}_n (I; \beta_n k) k} 
\end{align} 
satisfies the definition of the canonical probability distribution in \eqref{def:disc.can.dis}.
\end{theorem}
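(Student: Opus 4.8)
The plan is to reduce the statement to Theorem \ref{thm:theorem 1} in three steps — a rescaling identity, the coupling hypothesis \eqref{eq:domain}, and the (discrete analogue of the) continuous approximation theorem — and then to glue the three errors together by the triangle inequality.

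First I would record the exact identity $P(K = k \mid \tilde H_n \in E_n) = P(K_n = \beta_n k \mid H_n \in I)$. Since $\beta_n>0$, the map $k\mapsto\beta_n k$ is a bijection, so $\{K=k\}=\{K_n=\beta_n k\}$; and since $H_n = K_n+L_n = \beta_n K + \beta_n(\tilde L_n-\mu_n) = \beta_n(\tilde H_n-\mu_n)$, we get $\{\tilde H_n\in\mu_n+I/\beta_n\}=\{H_n\in I\}$, so the two conditionings coincide verbatim. Next, because $I\subset D'$, hypothesis \eqref{eq:domain} with $I'=I$ lets me replace the genuine discrete total $H_n$ by the continuous surrogate $Z_n=K_n+Y_n$ at a cost of $O(c_n)$, uniformly in $k\in S$. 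This is the one step with no counterpart in the continuous Theorem \ref{thm:theorem 1}, where $Z_n$ is literally the total.

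The core step is to compare $P(K_n=\beta_n k\mid Z_n\in I)$ with the canonical pmf $B_n P(K=k)e^{-\beta_n\hat\psi_n(I;\beta_n k)k}$ by applying Theorem \ref{thm:theorem 1} to the triple $(K_n,Y_n,Z_n)$, whose hypotheses hold by Condition \eqref{cond:sum.dis}. One observes that the proof of Theorem \ref{thm:theorem 1} never uses absolute continuity of the small variable: replacing the density $f_X(x)\,dx$ by the mass $P(K=k)$ and the integral over $\mathbb R^+$ by the sum over $S_n$, the same Taylor expansion of $y\mapsto\log P(Y_n\in[y,y+\delta]\mid K_n=\beta_n k)$ about $y=h$ — whose remainder is bounded by $\tfrac12 C_2\beta_n^2 k^2$ via \eqref{eq:logsquare}, hence is $O(a_n)$ after summing against $P(K=k)$ — together with the near $k$-independence of $P(Y_n\in[h,h+\delta]\mid K_n=\beta_n k)$ furnished by \eqref{eq:fyx2}, which contributes $O(b_n)$, yields $D_{\mathrm{KL}}({\mathbb P}^{(n)}_I\,\|\,{\mathbb Q}^{(n)}_I)=O(a_n+b_n)$, where ${\mathbb P}^{(n)}_I$ has pmf $B_n P(K=k)e^{-\beta_n\psi_n(I;\beta_n k)k}$ (and $\psi_n=\hat\psi_n$ for this triple) and ${\mathbb Q}^{(n)}_I$ is the law of $K_n$ given $Z_n\in I$ (well defined since $P(Z_n\in I)>0$ by \eqref{eq:condforz}). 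Pinsker's inequality \eqref{eq:totalvariation} then gives $\delta({\mathbb P}^{(n)}_I,{\mathbb Q}^{(n)}_I)=O(\sqrt{a_n+b_n})$, and for discrete measures $|{\mathbb P}^{(n)}_I(\{k\})-{\mathbb Q}^{(n)}_I(\{k\})|\le 2\,\delta({\mathbb P}^{(n)}_I,{\mathbb Q}^{(n)}_I)$ for every $k$, so this bound is uniform in $k$.

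Chaining the three estimates by the triangle inequality yields $\sup_{k\in S}|P(K=k\mid\tilde H_n\in E_n)-B_n P(K=k)e^{-\beta_n\hat\psi_n(I;\beta_n k)k}|=O(c_n+\sqrt{a_n+b_n})$. The canonical-form claim is then immediate: setting $\hat\zeta_n(I;k):=\beta_n\hat\psi_n(I;\beta_n k)$, condition \eqref{eq:fyx} (applicable as $I\subset D'\subset D$) gives $0\le\hat\psi_n(I;\beta_n k)\le C_3<\infty$ for $\beta_n k\in\mathbb R^+$, so $0\le\hat\zeta_n(I;k)<\infty$ on $S$, while $B_n^{-1}=\sum_{k\in S}P(K=k)e^{-\hat\zeta_n(I;k)k}\in(0,1]$ is the normalizer, which is exactly the form \eqref{def:disc.can.dis}. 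I expect the main obstacle to be this transfer bookkeeping — checking that the proof of Theorem \ref{thm:theorem 1} really does carry over verbatim to a discrete (atomic) small variable while $Y_n$, hence $Z_n$, stays continuous, and that the Pinsker bound genuinely upgrades to the pointwise, uniform-in-$k$ control on probability mass functions that the statement asks for — rather than any new idea; throughout, the shift/scale relations $E_n=\mu_n+I/\beta_n$ and $H_n=\beta_n(\tilde H_n-\mu_n)$ must be tracked carefully.
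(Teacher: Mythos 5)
Your proposal is correct and follows essentially the same route as the paper: rescale to $(K_n,H_n,I)$, use hypothesis \eqref{eq:domain} to swap $H_n$ for $Z_n$ at cost $O(c_n)$, apply the discrete analogue of Theorem \ref{thm:theorem 1} (which the paper records as a lemma, proved by the same Taylor-expansion argument with sums in place of integrals) to get a KL bound and hence, via Pinsker, a total-variation bound $O(\sqrt{a_n+b_n})$, then chain by the triangle inequality and verify the canonical form via \eqref{eq:fyx}. The only cosmetic difference is your factor of $2$ in passing from total variation to the pointwise bound on probability masses — taking the singleton event directly gives $|{\mathbb P}(\{k\})-{\mathbb Q}(\{k\})|\le\delta({\mathbb P},{\mathbb Q})$ — which does not affect the order of the estimate.
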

Note that the given assumption \eqref{cond:sum.dis} in Theorem \ref{thm:d+d}: all conditions in Theorem \ref{thm:theorem 1} hold for $K_n, Y_n, Z_n$ on an open interval $D$, in which $K_n$ is corresponding to $X_n$ in Theorem \ref{thm:theorem 1}; and all conditions defined for ``all $x \in \mathbb{R}^+$" in Theorem \ref{thm:theorem 1} become defined for ``all $x \in S_n$" for Theorem \ref{thm:d+d}. In this way, even $K_n$ is a sequence of discrete random variables, all conditions in Theorem \ref{thm:theorem 1} are well-defined.  

\begin{remark}\label{rmk:approx.thm}
In Theorem \ref{thm:theorem 1} and Theorem \ref{thm:d+d}, $X$ and $K$ are defined as a nonnegative random variable. In the following two points, we extend our approximation theorem to the case when $X$ (or $K$) is bounded from below (shifting property) and the case when $X$ (or $K$) is a nonpositive  random variable (reflection property):

\begin{enumerate}
    \item \label{rmk:shift.prop.}
    (Shifting property)
Let $X$ be a continuous random variable bounded from below. By change of variables, let $\hat{X_n} := \beta_n (X - C)$, where $C$ is the finite lower bound, since $\beta_n = o(1)$, we still have
\begin{align}
\mathbb{E}[\hat{X}_n^2] = o(1).
\end{align}
In addition, assume that the conditional probability 
$$P\big( Y_n  \in [y, y + \delta ] \mid \hat{X}_n = x \big) $$
satisfies all of the conditions in Theorem \ref{thm:theorem 1}, then we can apply Theorem \ref{thm:theorem 1} to obtain the canonical distribution for $X$. We call this the {\em shifting property} of the canonical distributions. For the discrete random variable $K$, its canonical probability distribution has this property as well. This {\em shifting property} can be interpreted as the extension of the cases restricted to nonnegative quantities (e.g., energy and number of molecules) for the canonical ensemble and the grand canonical ensemble in statistical mechanics: the canonical distribution can be generalized to represent the possible values of a function which is {\em bounded from below} of the subsystem in thermal equilibrium with the heat bath at a {\em positive temperature }(In Theorem \ref{thm:theorem 1}, we choose the condition $I$ such that $ 0 \leq \psi_n( I; \beta_n x) < \infty$ ).
    \item \label{rmk:ref.prop.}
    (Reflection property) Let $X$ be a nonpositive continuous random variable.  Assume the condition \eqref{eq:fyx} in Theorem \ref{thm:theorem 1} becomes
\begin{align}
	& P\big(Y_n \in [y,  y+ \delta] \mid X_n= x \big) \geq \delta_1,\quad 
-C_3 < \frac{\partial \log P\big(Y_n \in [y,  y+ \delta] \mid X_n=  x \big)}{ \partial y} \leq 0,
\end{align}
for all $x \in  \mathbb{R^-}$. And assume all of the other conditions in Theorem \ref{thm:theorem 1} are satisfied, then Theorem \ref{thm:theorem 1} can be applied to an interval $I = [h, h+ \delta] \subset D$ such that $ -\infty < \psi_n( I; \beta_n x) \leq 0,$ for all $x \in  \mathbb{R^-}$. We call this {\em reflection property} of the canonical distributions. For the discrete random variable $K$, its canonical probability distribution has this property as well. Here is our interpretation of this {\em reflection property} for statistical mechanics: When a given condition $I$ of the whole system gives rise to a negative parameter ($ -\infty < \psi_n( I; \beta_n x) \leq 0$) in the exponential weight of the canonical distribution, our approximation theorem can be applied to the case of a nonpositive function of the subsystem. In combination with this property with the shifting property, the canonical distribution can represent the possible values of a function which is {\em bounded from above} of the subsystem in thermal equilibrium with the heat bath at a {\em negative temperature} (Here we choose the condition $I$ such that $ -\infty < \psi_n( I; \beta_n x) \leq 0$).
\end{enumerate} 

\end{remark}

\subsection{Limit theorems for conditional probabilities}

\label{ch:uniqueness}

In this section, we provide two limit theorems to show that a sequence of conditional distributions converges to a unique canonical distribution  by appropriate shifting and scaling, where the convergence is also in a corresponding scaling of the KL-divergence of this sequence of conditional distributions from its limit distribution. In contrast to the section \ref{ch:existence}, here we obtain a unique canonical distribution at the appropriate scale when a system is conditioned on an infinitely large total system ($n \rightarrow \infty$). It is different from the section \ref{ch:existence} in which we derive the canonical distribution for each finitely large $n$ directly.

Recall that from the section \ref{ch:existence}, for a sufficiently large $n$, we know that $\mathbb{Q}^{(n)}_I$ with density function $$f_{X \mid \tilde{Z}_n}(x; E_n)$$ can be well-approximated by $\mathbb{P}^{(n)}_I$ with density function
\begin{align}
\frac{ f_{X}(x) e^{-   \beta_n \psi_n ( I ; \beta_n x)x}}{\displaystyle \int_{\mathbb{R}^+} f_{X}(x)  e^{-\beta_n \psi_n (I ; \beta_n x)x}dx} \quad \text{and} \quad \psi_n(I  ; \beta_n x):= \frac{\partial \log P\big( Y_n  \in [y, y+\delta] \mid X_n = \beta_n x \big)}{\partial y}\biggr\rvert_{y= h }.
\label{consequence.of.existence}
\end{align}
Note that the parameter of the exponential function $\psi_n ( I ; \beta_n x)$ in \eqref{consequence.of.existence} depends on $n$ and $x$. 

Through our limit theorems in this section, we show that the sequence of measures $\mathbb{Q}^{(n)}_I$ can be well-approximated by a unique (sequence of) canonical distribution(s) with density function(s) 
\begin{align}
\frac{ f_X(x)e^{- \lambda_n(I) x}}{ \displaystyle\int_{\mathbb{R}^+}  f_X(x)e^{- \lambda_n(I) x} dx}
\label{uniq.can.dis}
\end{align}
in one of the cases: 
\begin{enumerate}
\item $\lambda_n(I)  = \beta_n \psi(I)$, where $\beta_n = o(1)$, and $\psi: \mathcal{A} \rightarrow \mathbb{R}$ is a function such that   $\mathcal{A}$ is the set of all finite intervals on $\mathbb{R}$, and $0 < \psi(I) < \infty$. \label{uniq.case1}
\item $\lambda_n(I) = \varphi(I)$, where $\varphi : \mathcal{A} \rightarrow \mathbb{R}$ is a function satisfying $0 < \varphi(I) < \infty$.  \label{uniq.case2}
\end{enumerate}
Note that $ \psi(I) $ and $\varphi(I)$ are independent of $x$ and $n$ in comparison with $\psi_n(I; x)$ in \eqref{consequence.of.existence}. One of the main ideas behind the proof of our limit theorems is as follows: Let $\tilde{\mathbb P}^{(n)}_I$ be a sequence of probability measures with density functions (normalized) 
$f_X(x) e^{-\beta_n\psi(I)x},
$ and let $\mathbb{P}_I$ be a probability measure with density function (normalized) 
$f_X(x) e^{-\varphi(I)x}.$
With $D_{\textnormal{KL}}$ defined as KL-divergence, Case \eqref{uniq.case1} can be considered as 
\begin{align}
D_{\textnormal{KL}}\left(\tilde{\mathbb P}^{(n)}_I~\|~ \mathbb{Q}^{(n)}_I \right) \rightarrow  0\quad \text{as} \ n\rightarrow \infty;
\end{align}
Case \eqref{uniq.case2}  can be considered as 
\begin{align}
D_{\textnormal{KL}}\left( \mathbb{P}_I ~\|~  \mathbb{Q}^{(n)}_I \right) \rightarrow  0\quad \text{as} \ n\rightarrow \infty.
\end{align}
Note that in Case \eqref{uniq.case1}, since $\beta_n =o(1)$, the sequence $\lambda_n(I) \rightarrow 0$ for any bounded $\psi(I)$. Therefore, we have to scale the distance $D_{\textnormal{KL}}$ by some function of $\beta_n$ to guarantee the uniqueness of $\psi(I)$. More details  are provided in Theorem \ref{thm:limit.smooth}.

Furthermore, we require stronger conditions than the conditions for \eqref{consequence.of.existence} in order to apply Lemma \ref{thm:unique1} and Lemma \ref{thm:unique2} to the proof of our limit theorems. Here is the essence of those two lemmas: under appropriate regularity conditions, the sequence $\lambda_n(I) $ in \eqref{uniq.can.dis} is uniquely determined by a linear approximation of the following sequence	 
\begin{align}
\log\left( \frac{f_{X \mid \tilde{Z}_n}(x ; E_n) }{ f_X(x)} \right).
\label{uni:lin.approx}
\end{align} 
Therefore, most of the conditions in our limit theorems are required to guarantee that \eqref{uni:lin.approx} is well-approximated by a linear function and the remainder term converges to zero fast enough.

Recall that $X_n := \beta_n X$, $Y_n := \beta_n \left( \tilde{Y}_n - \mu_n \right)$, and $Z_n := X_n + Y_n$, where $\beta_n, \mu_n$ are positive sequences and $\beta_n = o(1)$, and $E_n = \mu_n + I/\beta_n$, $ I = [h, h + \delta], \ h,\delta \in \mathbb{R}$ and $\delta>0$. 

Our first limit theorem for Case \eqref{uniq.case1}: $\lambda_n(I)  = \beta_n \psi(I)$ is as follows 
\begin{theorem}
\label{thm:limit.smooth}
Consider a function $\psi: \mathcal{B}\left( \mathbb{R}\right) \rightarrow \mathbb{R}$ such that $ 0 < \psi(I) < \infty$ for the given interval $I$.   Let $\tilde{\mathbb P}^{(n)}_I$ be a sequence of probability measures with density functions
\begin{align}
\frac{f_X(x) e^{-\beta_n\psi(I)x}}{\displaystyle \int_{\mathbb{R}^+}  f_X(x) e^{-\beta_n \psi(I)x } dx }.
\end{align}
 Assume the following conditions hold:
\begin{enumerate}
\item  \label{cond:limit.smooth1} $X$ is a nonconstant random variable with $\mathbb{E}[X^3] < \infty$ and $\displaystyle \frac{f_{X \mid \tilde{Z}_n}(x ; E_n) }{ f_X(x)} $ is uniformly bounded on $\mathbb{R}^+$.
\item \label{cond:limit.smooth2} $Y_n \rightarrow Y$ in distribution. The distribution function of $\ Y$ is bounded on $\mathbb{R}^+$ and satisfies  
\begin{align}
\log P \left( Y \in \left[ y,  y+\delta \right] \right) \in C^2(D) \quad \text{and} \quad 0 < \frac{\partial \log P(Y \in [y, y+\delta])}{\partial y}\biggr\rvert_h < \infty,
\end{align}
where $D$ is an open interval containing $h$.
\item  \label{cond:limit.smooth3} There exists a sequence of functions $g_n : \mathbb{R} \rightarrow \mathbb{R}$ with
$ \left| g_n(x)e^{-\beta_n \xi x} \right|$ uniformly bounded on $\mathbb{R}^+$ for any $\xi>0$ and $\mathbb{E}\left[ g_n(X)^2 \right] \rightarrow 0$
such that on $I_n = [0, d_n]$ with $d_n = O\left(\frac{1}{\beta_n}\right)$,
\begin{align} \label{cond:lin.approx.smooth}
\log \left( \frac{P\left( Y_n \in I - \beta_n x  \mid X_n = \beta_n x  \right)}{P \left( Z_n \in I \right)} \right) = \log \left(    \frac{ P\left( Y \in I - \beta_n x  \right)}{P\left( Y\in I\right)} \right) +  \beta_n g_n(x).
\end{align}

\end{enumerate}
Then
$$ \displaystyle \lim_{n \rightarrow \infty}\frac{{D_{\textnormal{KL}}\left( \tilde{\mathbb P}^{(n)}_I ~ \|~ {\mathbb Q}^{(n)}_I \right)}}{\beta_n^2}  
 = 0  \quad \text{if and only if} \quad \psi(I) =  \frac{\partial \log P(Y \in [y, y+\delta])}{\partial y}\biggr\rvert_h. $$
And $ \tilde{\mathbb P}^{(n)}_I $ satisfies the definition of the canonical probability distributions in \eqref{def:cont.can.dis}.

\end{theorem}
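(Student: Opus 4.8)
The plan is to make the conditional density explicit by Bayes' rule, linearize it using the three hypotheses, and then recognize $D_{\textnormal{KL}}(\tilde{\mathbb P}^{(n)}_I\,\|\,\mathbb Q^{(n)}_I)$ as a Bregman-type divergence of the cumulant generating function $A(\theta):=\log\mathbb E[e^{\theta X}]$ whose leading $\beta_n^2$-term is a Fisher-information quadratic in $\psi(I)-\psi^*$, where I abbreviate $\psi^*:=\frac{\partial\log P(Y\in[y,y+\delta])}{\partial y}\big|_h$. This gives both directions of the equivalence at once, the right side of the identity being $0$ exactly when $\psi(I)=\psi^*$ because $X$ is nonconstant.

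\textbf{Step 1: a formula for the conditional density.} Since $\{\tilde Z_n\in E_n\}=\{Z_n\in I\}$ and $\{X=x\}=\{X_n=\beta_n x\}$, Bayes' rule gives
\[
\frac{f_{X\mid\tilde Z_n}(x;E_n)}{f_X(x)}=\frac{P\big(Y_n\in I-\beta_n x\mid X_n=\beta_n x\big)}{P(Z_n\in I)}.
\]
On $I_n=[0,d_n]$, Condition \eqref{cond:lin.approx.smooth} rewrites the right side as $\dfrac{P(Y\in I-\beta_n x)}{P(Y\in I)}e^{\beta_n g_n(x)}$; writing $G(y):=\log P(Y\in[y,y+\delta])$, which lies in $C^2(D)$ by Condition \eqref{cond:limit.smooth2}, and Taylor expanding $G$ to second order at $h$ — legitimate because $\beta_n d_n=O(1)$ keeps $h-\beta_n x$ in a fixed neighborhood of $h$ on which $G''$ is bounded — I obtain, for $0\le x\le d_n$,
\[
\log\frac{f_{X\mid\tilde Z_n}(x;E_n)}{f_X(x)}=-\beta_n\psi^*\,x+\tfrac12\beta_n^2x^2G''(\xi_x)+\beta_n g_n(x),\qquad \xi_x\in[h-\beta_n x,\,h].
\]
For $x>d_n$ only the uniform bound of Condition \eqref{cond:limit.smooth1} is available, but combined with $\mathbb E[X^3]<\infty$ (using $d_n\to\infty$, and $\mathbb E[X^4\mathbf 1_{X\le d_n}]\le d_n\mathbb E[X^3]$ for the quadratic term) this makes the contribution of $\{X>d_n\}$ to every integral that follows of order $o(\beta_n^2)$.

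\textbf{Step 2: expanding the divergence.} Writing $p_n$ for the density of $\tilde{\mathbb P}^{(n)}_I$ and using $\log\frac{p_n(x)}{f_X(x)}=-\beta_n\psi(I)x-A(-\beta_n\psi(I))$, I decompose
\[
D_{\textnormal{KL}}\big(\tilde{\mathbb P}^{(n)}_I\,\|\,\mathbb Q^{(n)}_I\big)=\int p_n\log\frac{p_n}{f_X}\;-\;\int p_n\log\frac{f_{X\mid\tilde Z_n}(\cdot;E_n)}{f_X},
\]
insert the previous display, and expand in $\beta_n$: the quantities $A(-\beta_n\psi(I))$ and the tilted mean $\mathbb E_{p_n}[X]$ are expanded to second order (using $\mathbb E[X^3]<\infty$); the quadratic Taylor remainder enters only at order $o(\beta_n^2)$; and the $g_n$-term is controlled by $\mathbb E_{p_n}[g_n(X)^2]\le\mathbb E[g_n(X)^2]/\mathbb E[e^{-\beta_n\psi(I)X}]\to0$, hence $|\mathbb E_{p_n}[g_n(X)]|\to0$, by Condition \eqref{cond:limit.smooth3} and Cauchy--Schwarz. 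After the $O(\beta_n)$ contributions cancel, what remains is the Bregman divergence of $A$ between $\theta_1=-\beta_n\psi(I)$ and $\theta_2=-\beta_n\psi^*$; since $A(0)=0$, $A'(0)=\mathbb E[X]$, $A''(0)=\textnormal{Var}(X)$,
\[
D_{\textnormal{KL}}\big(\tilde{\mathbb P}^{(n)}_I\,\|\,\mathbb Q^{(n)}_I\big)=\tfrac12 A''(0)(\theta_1-\theta_2)^2+o(\beta_n^2)=\tfrac12\,\textnormal{Var}(X)\,\beta_n^2\,(\psi(I)-\psi^*)^2+o(\beta_n^2).
\]
Dividing by $\beta_n^2$ and letting $n\to\infty$ yields $\tfrac12\textnormal{Var}(X)(\psi(I)-\psi^*)^2$, which vanishes if and only if $\psi(I)=\psi^*$ since $\textnormal{Var}(X)>0$. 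Finally $\tilde{\mathbb P}^{(n)}_I$ has density $f_X(x)e^{-\beta_n\psi(I)x}$ normalized with $0\le\beta_n\psi(I)<\infty$ (as $\beta_n>0$, $0<\psi(I)<\infty$), so it is of the canonical form \eqref{def:cont.can.dis}.

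\textbf{Main obstacle.} The real work is the bookkeeping in Step 2: showing that the three perturbations of the pure tilt $f_X(x)e^{-\beta_n\psi^* x}$ — the tail $\{X>d_n\}$, where the explicit formula breaks down; the quadratic Taylor remainder $\tfrac12\beta_n^2x^2G''$ accumulated over $x$ up to $d_n=O(1/\beta_n)$; and the $\beta_n g_n$ term — each affect $D_{\textnormal{KL}}$ only at order $o(\beta_n^2)$ after the $O(\beta_n)$ cancellations, while keeping the normalization $\int f_{X\mid\tilde Z_n}(\cdot;E_n)=1$ compatible with the truncated expansion. The ingredients that make this go through are $\mathbb E[X^3]<\infty$ plus the cutoff $d_n=O(1/\beta_n)$ (which control the tail and $\mathbb E[X^4\mathbf 1_{X\le d_n}]$), the $C^2$-regularity and local boundedness of $G$, and $\mathbb E[g_n(X)^2]\to0$ together with the uniform bound on $g_n(x)e^{-\beta_n\xi x}$ (which push $\mathbb E_{p_n}[g_n(X)]$ and $\mathbb E_{p_n}[g_n(X)^2]$ to zero).
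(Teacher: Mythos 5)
Your route is genuinely different from the paper's. The paper proves the statement by reducing to Lemma \ref{thm:unique2}, whose proof is split into two separate directions: the ``if'' direction is a direct KL estimate that controls $\log A_n - \beta_n b - \beta_n\int A_n f_X e^{-\beta_n\psi(E)x}q_n\,dx$ through the normalization identity $\int f_{X\mid \tilde Z_n}=1$, the Markov tail bound $P(X>d_n)=O(\beta_n^3)$, and Taylor-remainder control on $\exp(\beta_n q_n)$; the ``only if'' direction is argued independently via Pinsker's inequality and the triangle inequality (two tilts both $o(\beta_n)$-close in TV to $\mathbb Q^{(n)}_I$ are $o(\beta_n)$-close to each other), followed by a Taylor expansion of the two tilted normalizing constants to force equality of the tilt parameters using that $X$ is nonconstant. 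You instead try to establish a single sharp asymptotic, $D_{\textnormal{KL}}=\tfrac12\textnormal{Var}(X)\beta_n^2(\psi(I)-\psi^*)^2+o(\beta_n^2)$, from which both implications follow at once. That would be cleaner and strictly stronger than what the paper proves.

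But there is a genuine gap at exactly the place you flag as ``the main obstacle.'' The cancellation of the $O(\beta_n)$ contributions is not a free Taylor-expansion fact; it is enforced by the constraint $\int f_{X\mid\tilde Z_n}(\cdot;E_n)\,dx=1$, which couples the unknown $g_n$ to the normalizing constant. In your notation, $\log A_n - \beta_n(\psi(I)-\psi^*)\mathbb E_{p_n}[X]$ leaves a residual $\beta_n\psi^*\mathbb E[X]$ that must be matched by $-\beta_n\mathbb E_{p_n}[g_n(X)]$, and that matching is exactly what the normalization identity supplies — it is the content of the chain of inequalities \eqref{temp A_n 1}--\eqref{temp A_n 5} in the paper's Lemma \ref{thm:unique1} (and their $\beta_n$-rescaled versions in Lemma \ref{thm:unique2}). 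You assume this bookkeeping works but do not carry it out. Moreover, once it is carried out, the $\beta_n^2$-coefficient of $D_{\textnormal{KL}}$ picks up contributions beyond the Bregman quadratic: from $\tfrac12\beta_n^2 x^2 G''(\xi_x)$ integrated against $p_n$, from the normalization-forced value of $\mathbb E[g_n(X)]$, and from the $O(\beta_n)$ shift between $\mathbb E[\cdot]$ under $f_X$ and under the $\psi(I)$-tilt integrated against $g_n$. Showing that these extra $\beta_n^2$-terms either vanish or are dominated by $\tfrac12\textnormal{Var}(X)(\psi(I)-\psi^*)^2$ is nontrivial and is not addressed. The paper sidesteps having to identify the leading $\beta_n^2$-constant at all — the ``if'' direction needs only an $o(\beta_n^2)$ upper bound, and the ``only if'' direction is a TV argument that is agnostic to that constant — so completing your plan requires redoing Lemma \ref{thm:unique2}'s normalization bookkeeping in any case, and then also proving the extra sharpness of the Fisher-quadratic claim.
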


Our second limit theorem for Case \eqref{uniq.case2}: $\lambda_n(I)  =  \varphi(I)$ is as follows

\begin{theorem}
\label{thm:limit.ldp}
Let  $\varphi: \mathcal{B}\left( \mathbb{R}\right) \rightarrow \mathbb{R}$ be a function such that $ 0 < \varphi(I) < \infty$ for the given interval $I$.  Let $\mathbb{P}_I$ be a probability measure with density function
\begin{align*}
\frac{f_X(x) e^{-\varphi(I)x}}{\displaystyle \int_{\mathbb{R}^+}  f_X(x) e^{-\varphi(I)x } dx }.
\end{align*}
Assume the following conditions hold:
\begin{enumerate}
\item \label{cond:limit.ldp1} $X$ is a nonconstant random variable with $\mathbb{E}[X] < \infty$ and $\displaystyle \frac{f_{X \mid \tilde{Z}_n}(x ; E_n) }{ f_X(x)} $ is uniformly bounded on $\mathbb{R}^+$.
\item \label{cond:limit.ldp2} $Y_n \rightarrow \mu$ in probability, for some constant $\mu \notin I$. The sequence of laws of $Y_n$  satisfies a large deviation principle with speed $1/\beta_n$ and rate function $\phi \in C^2(D),$ where $D$ is an open interval containing $I$, and $ -\infty < \phi'(y) < 0  \text{ for all} \ y \in I$.
\item \label{cond:limit.ldp3} There exists a sequence of functions $r_n : \mathbb{R} \rightarrow \mathbb{R}$ with $  \left| r_n(x)e^{- \xi x} \right|$  uniformly bounded on  $\mathbb{R}^+$   for any   $\xi>0$ and $\mathbb{E}\left[ r_n(X)^2 \right]  \rightarrow 0$
such that on $I_n = [0, d_n]$ with $ d_n = O\left(\frac{1}{\beta_n}\right)$,
\begin{align} \label{cond:lin.approx.ldp}
\log \left( \frac{P\left( Y_n \in I - \beta_n x  \mid X_n = \beta_n x  \right)}{P \left( Z_n \in I \right)} \right) &= \log \left(    \frac{  \exp{ \left[ - \frac{1}{\beta_n}\phi \left(y^* - \beta_n x \right) \right] } }{  \exp{ \left[-  \frac{1}{\beta_n}\phi\left(y^* \right)\right] } } \right) + r_n(x),  \\
y^* &= \{y: \inf_{y\in I} \phi(y) \}.
\end{align}
\end{enumerate}
Then
$$ \displaystyle \lim_{n \rightarrow \infty}{D_{\textnormal{KL}}\left( {\mathbb P}_I ~ \|~ {\mathbb Q}^{(n)}_I \right)}  
 = 0  \quad \text{if and only if} \quad \varphi(I) = -\phi'(y^*).  $$
And ${\mathbb P}_I $ satisfies  the definition of the canonical probability distributions in \eqref{def:cont.can.dis}.

\end{theorem}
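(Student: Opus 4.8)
The plan is to turn the KL-divergence into a $\mathbb{P}_I$-expectation of an explicit log-likelihood-ratio, feed in the linear-approximation hypothesis \eqref{cond:lin.approx.ldp}, and then let the $n$-dependent quantities converge. \emph{Reduction.} Since $Z_n=\beta_n(\tilde Z_n-\mu_n)$, the conditioning event $\{\tilde Z_n\in E_n\}$ equals $\{Z_n\in I\}$, and $\{X=x\}$ equals $\{X_n=\beta_n x\}$; hence by Bayes' rule the density of $\mathbb{Q}^{(n)}_I$ is $f_{X\mid\tilde Z_n}(x;E_n)=f_X(x)\,\dfrac{P(Y_n\in I-\beta_n x\mid X_n=\beta_n x)}{P(Z_n\in I)}$. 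Writing $p$ for the density of $\mathbb{P}_I$ and $c_\varphi:=\int_{\mathbb{R}^+}f_X(u)e^{-\varphi(I)u}\,du$, the factor $f_X$ cancels in $\log(p/q_n)$, so that
\[
D_{\mathrm{KL}}\!\left(\mathbb{P}_I\,\|\,\mathbb{Q}^{(n)}_I\right)=\mathbb{E}_{\mathbb{P}_I}\!\left[\,-\varphi(I)X-\log c_\varphi-\log\frac{P(Y_n\in I-\beta_n X\mid X_n=\beta_n X)}{P(Z_n\in I)}\,\right].
\]

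\emph{Linearization and Taylor expansion.} On the event $\{X\le d_n\}$, the hypothesis \eqref{cond:lin.approx.ldp} replaces the last logarithm by $-\tfrac1{\beta_n}\big[\phi(y^*-\beta_n X)-\phi(y^*)\big]+r_n(X)$. Since $d_n=O(1/\beta_n)$, choosing the implicit constant so that $[y^*-\beta_n d_n,\,y^*]\subset D$, a second-order Taylor expansion of $\phi$ at $y^*$ gives $\tfrac1{\beta_n}\big[\phi(y^*-\beta_n x)-\phi(y^*)\big]=-\phi'(y^*)x+\tfrac{\beta_n}{2}\phi''(\xi_{n,x})x^2$ with $|\phi''(\xi_{n,x})|\le\sup_{[y^*-\beta_n d_n,\,y^*]}|\phi''|<\infty$. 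Thus, up to the contribution of $\{X>d_n\}$,
\[
D_{\mathrm{KL}}\!\left(\mathbb{P}_I\,\|\,\mathbb{Q}^{(n)}_I\right)=\big(-\varphi(I)-\phi'(y^*)\big)\,\mathbb{E}_{\mathbb{P}_I}[X]-\log c_\varphi+\tfrac{\beta_n}{2}\,\mathbb{E}_{\mathbb{P}_I}\!\big[\phi''(\xi_{n,X})X^2\big]-\mathbb{E}_{\mathbb{P}_I}[r_n(X)]+(\text{tail}).
\]

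\emph{Error control and conclusion.} The density of $\mathbb{P}_I$ decays exponentially because $\varphi(I)>0$, so $\mathbb{E}_{\mathbb{P}_I}[X^2]<\infty$ and the Taylor-remainder term is $O(\beta_n)\to0$; by Cauchy--Schwarz and $e^{-\varphi(I)x}\le1$ on $\mathbb{R}^+$, $|\mathbb{E}_{\mathbb{P}_I}[r_n(X)]|\le c_\varphi^{-1}\int f_X|r_n|\le c_\varphi^{-1}\sqrt{\mathbb{E}[r_n(X)^2]}\to0$; and the $\{X>d_n\}$ slice is absorbed using the uniform boundedness of $f_{X\mid\tilde Z_n}/f_X$ on $\mathbb{R}^+$, the bound $\mathbb{E}[X]<\infty$, $d_n\to\infty$, the log-sum inequality, and the large-deviation estimate $\beta_n\log P(Z_n\in I)\to-\phi(y^*)$ coming from the large-deviation hypothesis. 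Collecting the terms, $D_{\mathrm{KL}}(\mathbb{P}_I\,\|\,\mathbb{Q}^{(n)}_I)$ converges to $D_{\mathrm{KL}}(\mathbb{P}_I\,\|\,\mathbb{P}_I^{\circ})$, where $\mathbb{P}_I^{\circ}$ is the probability measure with density proportional to $f_X(x)e^{\phi'(y^*)x}$ (the weak limit of $\mathbb{Q}^{(n)}_I$); since a KL-divergence vanishes exactly when its two arguments coincide and $\mathbb{P}_I=\mathbb{P}_I^{\circ}$ if and only if $\varphi(I)=-\phi'(y^*)$, both directions of the equivalence follow. Finally $0<\varphi(I)<\infty$ and $x\ge0$ give $0\le\varphi(I)x<\infty$, so $\mathbb{P}_I$ has the canonical form \eqref{def:cont.can.dis}. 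I expect the main obstacle to be the $\{X>d_n\}$ estimate just sketched: the linear-approximation hypothesis is only available on the growing window $[0,d_n]$, and the Taylor remainder $\tfrac{\beta_n}{2}\phi''(\xi_{n,x})x^2$ is not dominated by a fixed integrable function there, so justifying the interchange of limit and integral --- together with the matching of the normalizations of $\mathbb{Q}^{(n)}_I$ and $\mathbb{P}_I^{\circ}$ --- must lean on the regularity packaged into the hypotheses rather than on a soft weak-convergence argument.
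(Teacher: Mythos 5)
Your reduction is sound and the overall strategy parallels the paper's: the paper proves this via Lemma \ref{thm:unique1}, whose proof is essentially the KL expansion you wrote down, with the linearization hypothesis doing the work and the reverse implication extracted from the fact that a limit of KL-divergences toward a fixed target vanishes only when the densities coincide. Two points that you flag but do not close are, however, genuine gaps, and they are precisely what Lemma \ref{thm:unique1} is built to handle.

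First, the normalization matching. Your explicit formula for the limit is $(-\varphi(I)-\phi'(y^*))\,\mathbb{E}_{\mathbb{P}_I}[X]-\log c_\varphi$, but $D_{\mathrm{KL}}(\mathbb{P}_I\,\|\,\mathbb{P}_I^\circ)$ equals $(-\varphi(I)-\phi'(y^*))\,\mathbb{E}_{\mathbb{P}_I}[X]-\log c_\varphi+\log c^\circ$ with $c^\circ:=\int_{\mathbb{R}^+}f_X(x)e^{\phi'(y^*)x}\,dx$. Identifying your limit with a KL-divergence therefore requires $c^\circ=1$, and without that neither direction of the equivalence follows from your formula. The constraint $c^\circ=1$ is not an independent assumption but is forced by the hypotheses: writing $q_n(x)=f_X(x)\,P(Y_n\in I-\beta_n x\mid X_n=\beta_n x)/P(Z_n\in I)$ and using $\int_{\mathbb{R}^+}q_n\,dx=1$, the uniform bound on $f_{X\mid\tilde Z_n}/f_X$ (condition (\ref{cond:limit.ldp1})), the window $[0,d_n]$ with $d_n\to\infty$, and the $L^2$-convergence $\mathbb{E}[r_n(X)^2]\to 0$ (which gives a.e.\ convergence along a subsequence) let you pass to the limit by dominated convergence and conclude $\int f_X e^{\phi'(y^*)x}\,dx=1$. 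This is exactly what the paper's Lemma \ref{thm:unique1} derives in disguised form when it establishes $e^{-b}=A^{-1}+o(1)$ hence $\log A-b=o(1)$, and it must be stated, not assumed.

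Second, the tail estimate. On $\{X>d_n\}$ you have no linearization, and the summand in the KL-integral contains $\log\!\big(P(Y_n\in I-\beta_n x\mid X_n=\beta_n x)/P(Z_n\in I)\big)$, which is bounded above by (\ref{cond:limit.ldp1}) but has no a priori lower bound. Your invocation of "the log-sum inequality and the large-deviation estimate" is not enough; what controls it is the second boundedness hypothesis of Lemma \ref{thm:unique1} (uniform boundedness of $e^{-\xi x}\log(f_{X\mid Z_n}/f_X)$ for every $\xi>0$), which the paper derives in the proof of Theorem \ref{thm:limit.ldp} precisely from the combination of (\ref{cond:lin.approx.ldp}), the growth of $\phi'(y^*)x+O(\beta_n x^2)$, and the uniform bound on $r_n(x)e^{-\xi x}$ from condition (\ref{cond:limit.ldp3}). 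With that hypothesis in hand the tail contributes $O(\mathbb{E}[X]/d_n)\to 0$ by Markov's inequality, exactly as in the paper's $O(\epsilon_n)$ terms. Without making that bound explicit, your tail argument is incomplete.

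A smaller remark: your Taylor-remainder term $\tfrac{\beta_n}{2}\phi''(\xi_{n,x})x^2$ is actually unproblematic. On $[0,d_n]$ the argument $y^*-\beta_n x$ stays in a fixed compact subinterval of $D$, so $\phi''$ is bounded there and the remainder is $\leq C\beta_n x^2$; integrating against $\mathbb{P}_I$ (which has all moments because $\varphi(I)>0$) gives $O(\beta_n)\to 0$ directly. The obstacle you should be worried about is not this remainder but the two points above. Once those are filled in, the argument is equivalent to the paper's Lemma \ref{thm:unique1}, and your simultaneous treatment of both directions via the single identity $\lim_n D_{\mathrm{KL}}(\mathbb{P}_I\,\|\,\mathbb{Q}^{(n)}_I)=D_{\mathrm{KL}}(\mathbb{P}_I\,\|\,\mathbb{P}_I^\circ)$ is a modest streamlining compared to the paper's handling of the converse (which reruns the forward argument with the linearization coefficient and then invokes Pinsker plus the triangle inequality).
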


The following is the discussion about the circumstances when the condition \eqref{cond:lin.approx.smooth} (or the condition \eqref{cond:lin.approx.ldp}) for Theorem \ref{thm:limit.smooth} (or Theorem \ref{thm:limit.ldp}) could hold. Here we only discuss the condition \eqref{cond:lin.approx.smooth} but it is applied to the condition \eqref{cond:lin.approx.ldp} as well. 
We can consider three circumstances
\begin{enumerate}
    \item \label{dis:condition1} When $Y_n \rightarrow Y$ in distribution.  
    \item \label{dis:condition2} When $ X_n \rightarrow 0 $ in probability. 
    \item \label{dis:condition3} When $X_n$ and $Y_n$ are asymptotically independent.  
\end{enumerate}
Even thought the condition  \eqref{cond:lin.approx.smooth} and the combination of circumstances \eqref{dis:condition1} - \eqref{dis:condition3} are not the exact same, they are very close; therefore, these three circumstances provide us an insight regarding three elements of the condition \eqref{cond:lin.approx.smooth}: Circumstance \eqref{dis:condition1} means the heat bath has a limiting distribution; Circumstance \eqref{dis:condition2} means the subsystem is relatively small in comparison with the whole system; Circumstance  \eqref{dis:condition3} means those two systems have weak interaction. 

As Corollary \ref{cor:corollary 1} for the approximation theorem \ref{thm:theorem 1}, we are going to extend our limit theorems to the case when $X_n$ and $Y_n$ are not asymptotically independent.

\begin{corollary}
\label{cor:limit.smooth}
Let $G: \mathcal{A} \times \mathbb{R} \rightarrow \mathbb{R}$ be a function such that $ \mathcal{A}$ is the set of all finite intervals. For the given interval $I$, $G( I; 0)=1$ and $\log G(I, \xi) \in C^2(\mathbb{R}^+)$ with respect to $\xi$.
Assume the condition \eqref{cond:lin.approx.smooth} in Theorem \ref{thm:limit.smooth} becomes
\begin{align} \label{cor:cond:lin.approx.smooth}
\log \left( \frac{P\left( Y_n \in I - \beta_n x  \mid X_n = \beta_n x  \right)}{P \left( Z_n \in I \right)} \right) = \log \left(    \frac{ P\left( Y \in I - \beta_n x  \right) \cdot G(I; \beta_n x )}{P\left( Y\in I\right)} \right)  +  \beta_n g_n(x),
\end{align}
and the other conditions in Theorem \ref{thm:limit.smooth} hold. 
Then
$$ \displaystyle \lim_{n \rightarrow \infty}\frac{{D_{\textnormal{KL}}\left( \tilde{\mathbb P}^{(n)}_I ~ \|~ {\mathbb Q}^{(n)}_I \right)}}{\beta_n^2}  
 = 0  \quad \text{if and only if} \quad \psi(I) =  \frac{\partial \log P(Y \in [y, y+\delta])}{\partial y}\biggr\rvert_h - \frac{\partial \log G( I ; \xi)}{\partial \xi}\biggr\rvert_0. $$
And $ \tilde{\mathbb P}^{(n)}_I $ satisfies the definition of the canonical probability distributions in \eqref{def:cont.can.dis}.
\end{corollary}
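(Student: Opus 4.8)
The plan is to retrace the proof of Theorem \ref{thm:limit.smooth} line by line, the only new ingredient being the bookkeeping for the extra factor $G(\beta_n x;I)$ in hypothesis \eqref{cor:cond:lin.approx.smooth}. First I would record the Bayes identity that drives the whole argument: under the affine reparametrization $Y_n=\beta_n(\tilde Y_n-\mu_n)$, $Z_n=X_n+Y_n$, $E_n=\mu_n+I/\beta_n$, the event $\{\tilde Z_n\in E_n\}$ coincides with $\{Z_n\in I\}$, and
\begin{align*}
f_{X\mid\tilde Z_n}(x;E_n)=f_X(x)\,\frac{P\bigl(Y_n\in I-\beta_n x\mid X_n=\beta_n x\bigr)}{P\bigl(Z_n\in I\bigr)},
\end{align*}
so that $\log\bigl(f_{X\mid\tilde Z_n}(x;E_n)/f_X(x)\bigr)$ is exactly the left-hand side of \eqref{cor:cond:lin.approx.smooth}. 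Substituting \eqref{cor:cond:lin.approx.smooth} gives
\begin{align*}
\log\frac{f_{X\mid\tilde Z_n}(x;E_n)}{f_X(x)}=\log\frac{P(Y\in I-\beta_n x)}{P(Y\in I)}+\log G(\beta_n x;I)+\beta_n g_n(x).
\end{align*}

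Next I would Taylor-expand in $\beta_n$. Writing $F(y):=\log P(Y\in[y,y+\delta])$, hypothesis \eqref{cond:limit.smooth2} gives $F\in C^2(D)$, so $\log\tfrac{P(Y\in I-\beta_n x)}{P(Y\in I)}=F(h-\beta_n x)-F(h)=-\beta_n x\,F'(h)+\tfrac12\beta_n^2x^2F''(\theta_{1,x})$; and since $G(0;I)=1$ with $\log G(\,\cdot\,;I)\in C^2(\mathbb{R}^+)$, $\log G(\beta_n x;I)=\beta_n x\,\partial_\xi\log G(\xi;I)\big|_0+\tfrac12\beta_n^2x^2\,\partial_\xi^2\log G(\xi;I)\big|_{\theta_{2,x}}$. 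Hence
\begin{align*}
\log\frac{f_{X\mid\tilde Z_n}(x;E_n)}{f_X(x)}=-\beta_n\,\psi^*(I)\,x+\rho_n(x),\qquad \psi^*(I):=F'(h)-\partial_\xi\log G(\xi;I)\big|_0,
\end{align*}
with $\rho_n(x)=\tfrac12\beta_n^2x^2\bigl[F''(\theta_{1,x})+\partial_\xi^2\log G(\theta_{2,x};I)\bigr]+\beta_n g_n(x)$. The key observation is that the presence of $G$ merely shifts the linear coefficient from $F'(h)$ — the value appearing in Theorem \ref{thm:limit.smooth} — to $F'(h)-\partial_\xi\log G(\xi;I)\big|_0$, which is precisely the value of $\psi(I)$ asserted in the statement, while the new quadratic contribution is of the same type already handled in Theorem \ref{thm:limit.smooth}, because $\partial_\xi^2\log G(\,\cdot\,;I)$ is continuous, hence bounded on the compact range $\{\beta_n x:x\in[0,d_n]\}\subseteq[0,\beta_n d_n]$ with $\beta_n d_n=O(1)$.

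From here the argument is identical to that of Theorem \ref{thm:limit.smooth}. Since $\mathbb{Q}^{(n)}_I$ has density $f_X(x)e^{-\beta_n\psi^*(I)x+\rho_n(x)}$ while $\tilde{\mathbb P}^{(n)}_I$ has density $f_X(x)e^{-\beta_n\psi(I)x}/C_n$ with $C_n=\int_{\mathbb{R}^+}f_X(x)e^{-\beta_n\psi(I)x}\,dx$, one writes
\begin{align*}
D_{\textnormal{KL}}\bigl(\tilde{\mathbb P}^{(n)}_I\,\|\,\mathbb{Q}^{(n)}_I\bigr)=\beta_n\bigl(\psi^*(I)-\psi(I)\bigr)\,\mathbb{E}_{\tilde{\mathbb P}^{(n)}_I}[X]-\log C_n-\mathbb{E}_{\tilde{\mathbb P}^{(n)}_I}\bigl[\rho_n(X)\bigr],
\end{align*}
expands $C_n$ and $\mathbb{E}_{\tilde{\mathbb P}^{(n)}_I}[X]$ to second order in $\beta_n$ (using $\mathbb{E}[X^3]<\infty$ from \eqref{cond:limit.smooth1}), bounds the quadratic part of $\rho_n$ using $\mathbb{E}[X^2]<\infty$ together with the local boundedness of $F''$ and $\partial_\xi^2\log G(\,\cdot\,;I)$, bounds the $\beta_n g_n$ part using the uniform boundedness of $g_n(x)e^{-\beta_n\xi x}$, $\mathbb{E}[g_n(X)^2]\to 0$ and Cauchy--Schwarz (exactly as for \eqref{cond:limit.smooth3}), and absorbs the contribution of the region $\{x>d_n\}$, where \eqref{cor:cond:lin.approx.smooth} is not assumed, via the moment bound on $X$. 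As in Theorem \ref{thm:limit.smooth} this yields $D_{\textnormal{KL}}\bigl(\tilde{\mathbb P}^{(n)}_I\,\|\,\mathbb{Q}^{(n)}_I\bigr)/\beta_n^2\to\tfrac12\bigl(\psi(I)-\psi^*(I)\bigr)^2\mathrm{Var}(X)$; since $X$ is nonconstant, $\mathrm{Var}(X)>0$, so the limit is $0$ if and only if $\psi(I)=\psi^*(I)$, which is the claimed equivalence. That $\tilde{\mathbb P}^{(n)}_I$ meets \eqref{def:cont.can.dis} is immediate since $\beta_n\psi(I)$ is a finite positive constant independent of $x$.

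I expect the only genuine obstacle to be the error analysis in this last step, i.e.\ checking that the two $G$-induced terms do not spoil any of the $o(\beta_n^2)$ estimates used for Theorem \ref{thm:limit.smooth}: the first-order term $\beta_n x\,\partial_\xi\log G(\xi;I)\big|_0$ must be regrouped cleanly into the effective exponential tilt, and the second-order term must be shown to contribute only $o(\beta_n^2)$ after integration against $\tilde{\mathbb P}^{(n)}_I$. This is exactly where $\log G(\,\cdot\,;I)\in C^2(\mathbb{R}^+)$ and $d_n=O(1/\beta_n)$ (so that $\beta_n x$ stays in a fixed compact set on $I_n$) are used; granted the corresponding estimates from Theorem \ref{thm:limit.smooth}, nothing else in the proof changes.
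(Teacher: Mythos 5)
Your proof is correct and follows the same route as the paper's: Taylor-expand $\log G(\beta_n x;I)$ at $\xi=0$ (using $G(0;I)=1$ and $\log G(\cdot;I)\in C^2(\mathbb{R}^+)$), absorb the first-order coefficient $\partial_\xi\log G(\xi;I)|_0$ into the effective tilt so that the linear slope becomes $\psi^*(I)=F'(h)-\partial_\xi\log G(\xi;I)|_0$, and push the second-order remainder --- which is uniformly controlled because $\beta_n x$ stays in a compact subset of $\mathbb{R}^+$ for $x\in[0,d_n]$ --- into the $q_n$ term of Lemma~\ref{thm:unique2}. One small overstatement: you attribute to Theorem~\ref{thm:limit.smooth} the explicit limit $D_{\textnormal{KL}}/\beta_n^2\to\tfrac12(\psi(I)-\psi^*(I))^2\mathrm{Var}(X)$, but the paper never computes this; Lemma~\ref{thm:unique2} establishes the ``only if'' direction instead by a Pinsker-plus-triangle-inequality argument comparing two exponential tilts, and then reads off $c=\psi(E)$ from the resulting equality of densities --- your explicit-limit heuristic is plausible but unverified, and you should cite the lemma's two-sided argument rather than a limit the paper does not claim.
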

\begin{remark}
The function $G$ in \eqref{cor:cond:lin.approx.smooth} can be considered as an approximation: 
\begin{align*}
    \frac{P\left( Y_n \in I - \xi  \mid X_n = \xi  \right)}{P\left( Y_n \in I - \xi  \right)} \approx G( I; \xi),
\end{align*}
in which the left side is equivalent to the joint probability of $X_n, Y_n$ divided by the products of their marginal probabilities. Therefore, $G$ could represent an estimation of the correlation of $X_n$ and $Y_n$; in information theory, the function $G$ is closely related to the mutual information between $X_n$ and $Y_n$.
\end{remark}

\begin{corollary}
\label{cor:limit.ldp}
Let $R: \mathcal{A} \times \mathbb{R} \rightarrow \mathbb{R}$ be a function, for the given interval $I$, $R( I; 0)=1$ and $\log R(I, \xi) \in C^2(\mathbb{R}^+)$ with respect to $\xi$.
Assume the condition \eqref{cond:lin.approx.ldp} in Theorem \ref{thm:limit.ldp} becomes
\begin{align} \label{cor:cond:lin.approx.ldp}
\log \left( \frac{P\left( Y_n \in I - \beta_n x  \mid X_n = \beta_n x  \right)}{P \left( Z_n \in I \right)} \right) &= \log \left(    \frac{  \exp{ \left[ - \frac{1}{\beta_n}\phi \left(y^* - \beta_n x \right) \right] } \cdot \left( R( I; \beta_n x)\right)^{\frac{1}{\beta_n}}}{  \exp{ \left[-  \frac{1}{\beta_n}\phi\left(y^* \right)\right] } } \right) +  r_n(x),  \\
y^* &= \{y: \inf_{y\in I} \phi(y) \},
\end{align}
and the other conditions in Theorem \ref{thm:limit.ldp} hold.
Then
$$ \displaystyle \lim_{n \rightarrow \infty}{D_{\textnormal{KL}}\left( {\mathbb P}_I ~ \|~ {\mathbb Q}^{(n)}_I \right)}  
 = 0  \quad \text{if and only if} \quad \varphi(I) = -\phi'(y^*)  - \frac{\partial \log R(I; \xi)}{\partial \xi}\biggr\rvert_0.  $$
And ${\mathbb P}_I $ satisfies  the definition of the canonical probability distributions in \eqref{def:cont.can.dis}.
\end{corollary}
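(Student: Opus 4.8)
The plan is to mimic the proof of Theorem~\ref{thm:limit.ldp} essentially verbatim, the only new ingredient being a second-order Taylor expansion of $\log R(\,\cdot\,;I)$ at the origin. First I would note that hypothesis~\eqref{cor:cond:lin.approx.ldp} differs from hypothesis~\eqref{cond:lin.approx.ldp} of Theorem~\ref{thm:limit.ldp} only through the extra factor $(R(\beta_n x;I))^{1/\beta_n}$ in the numerator, so that the right-hand side of~\eqref{cor:cond:lin.approx.ldp} equals that of~\eqref{cond:lin.approx.ldp} plus $\tfrac{1}{\beta_n}\log R(\beta_n x;I)$. Using $R(0;I)=1$ and $\log R(\,\cdot\,;I)\in C^2(\mathbb{R}^+)$, Taylor's theorem with remainder gives, for $x\in I_n=[0,d_n]$,
\[ \frac{1}{\beta_n}\log R(\beta_n x; I) = x\,\frac{\partial \log R(\xi;I)}{\partial \xi}\Big|_{0} + \frac{\beta_n}{2}\,x^2\,\frac{\partial^2 \log R(\xi;I)}{\partial \xi^2}\Big|_{\eta_{n,x}}, \qquad \eta_{n,x}\in[0,\beta_n x]. \]

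Next I would feed this into the analysis of Theorem~\ref{thm:limit.ldp}. By Bayes' rule the logarithmic density ratio $\log\big(f_{X\mid \tilde Z_n}(x;E_n)/f_X(x)\big)$ equals the left-hand side of~\eqref{cor:cond:lin.approx.ldp}; combining the display above with the expansion $\phi(y^*-\beta_n x)-\phi(y^*)=-\beta_n x\,\phi'(y^*)+\tfrac{1}{2}(\beta_n x)^2\phi''(\theta_{n,x})$ already used for Theorem~\ref{thm:limit.ldp}, one obtains, on $I_n$,
\[ \log\!\left( \frac{f_{X\mid \tilde Z_n}(x;E_n)}{f_X(x)} \right) = -\varphi(I)\,x + \rho_n(x), \qquad \varphi(I):=-\phi'(y^*)-\frac{\partial \log R(\xi;I)}{\partial \xi}\Big|_{0}, \]
where $\rho_n(x)=r_n(x)-\tfrac{\beta_n}{2}x^2\phi''(\theta_{n,x})+\tfrac{\beta_n}{2}x^2\,\partial_\xi^2\log R(\xi;I)\big|_{\eta_{n,x}}$. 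This is precisely the linear-plus-remainder form to which the uniqueness lemmas (Lemma~\ref{thm:unique1} and Lemma~\ref{thm:unique2}) behind Theorem~\ref{thm:limit.ldp} apply, now with linear coefficient $-\varphi(I)$; and $0<\varphi(I)<\infty$ makes the normalized density $f_X(x)e^{-\varphi(I)x}$ a member of the family~\eqref{def:cont.can.dis}.

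I would then verify that $\rho_n$ is negligible in the sense those lemmas require. The two quadratic pieces can be handled the same way: the $C^2$ hypotheses keep $\phi''$ and $\partial_\xi^2\log R(\,\cdot\,;I)$ bounded on the compact regions that matter ($[y^*-\beta_n d_n,\,y^*]\subset D$, respectively $[0,\beta_n d_n]\subset\mathbb{R}^+$), and since $f_X(x)e^{-\varphi(I)x}$ decays exponentially on $\mathbb{R}^+$ the integral $\int_{\mathbb{R}^+}x^2 f_X(x)e^{-\varphi(I)x}\,dx$ is finite, so both pieces contribute $O(\beta_n)\to 0$ after integration against ${\mathbb P}_I$; the term $r_n$ is controlled by $\mathbb{E}[r_n(X)^2]\to 0$ together with the uniform boundedness of $|r_n(x)e^{-\xi x}|$, exactly as in the original proof, and truncating to $I_n$ with $d_n\to\infty$ costs nothing because ${\mathbb P}_I([d_n,\infty))\to 0$. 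Invoking the uniqueness lemma would then give $D_{\textnormal{KL}}({\mathbb P}_I\,\|\,{\mathbb Q}^{(n)}_I)\to 0$ precisely when $\varphi(I)=-\phi'(y^*)-\frac{\partial\log R(\xi;I)}{\partial\xi}\big|_{0}$; the ``only if'' direction is the uniqueness clause of that lemma, since any other value of $\varphi(I)$ leaves a mismatched $O(1)$ linear term in the log-ratio and forces a strictly positive limit.

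I expect the only real obstacle to be bookkeeping: reusing — rather than re-deriving — the error estimates of Theorem~\ref{thm:limit.ldp} while carrying along the additional quadratic remainder uniformly over $x\in I_n$, and checking that $\eta_{n,x}$ never leaves the region where the $C^2$ bound on $\log R(\,\cdot\,;I)$ holds (which it does not, as $\eta_{n,x}\in[0,\beta_n d_n]\subset\mathbb{R}^+$). Since the new term enters exactly parallel to the $\phi'(y^*)$ and $\phi''$ terms, no genuinely new estimate is needed; should the compactness of $[y^*-\beta_n d_n,\,y^*]$ inside $D$ ever threaten to fail, it would suffice to take $d_n=o(1/\beta_n)$, which changes nothing since ${\mathbb P}_I$ is already localized by its exponential tilt.
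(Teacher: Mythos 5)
Your proposal is correct and follows essentially the same route as the paper's: Taylor-expand $\tfrac{1}{\beta_n}\log R(\beta_n x;I)$ at $\xi=0$ using $R(0;I)=1$ and $\log R(\cdot;I)\in C^2(\mathbb{R}^+)$, absorb the resulting extra linear term $\big(\partial_\xi\log R(\xi;I)\big|_0\big)x$ into the slope $c$ of the linear approximation, fold the $O(\beta_n x^2)$ second-order remainder into $q_n$, and then invoke Lemma~\ref{thm:unique1} exactly as in the proof of Theorem~\ref{thm:limit.ldp}. The paper's own proof is a terser version of the same argument, and your aside about possibly shrinking to $d_n=o(1/\beta_n)$ is unnecessary since $d_n=O(1/\beta_n)$ already keeps $\eta_{n,x}$ and $\theta_{n,x}$ in fixed compact sets where the $C^2$ bounds apply.
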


\begin{remark}
The function $R$ in \eqref{cor:cond:lin.approx.ldp} can be considered as an approximation:
\begin{align*}
    \frac{P\left( Y_n \in I - \xi  \mid X_n = \xi  \right)}{P\left( Y_n \in I - \xi  \right)} \approx \left( R( I, \xi)\right)^{\frac{1}{\beta_n}}.
\end{align*}
In comparison with Corollary \ref{cor:limit.smooth}, when the sequence of laws of $Y_n$  satisfies a large deviation principle with speed $1/\beta_n$, the correlation of the subsystem and its heat bath has to be in $O( R ^{\frac{1}{\beta_n}})$ to contribute an additional term in the parameter of the exponential weight. Otherwise, if the correlation is just in $O(R)$ as the order in Corollary \ref{cor:limit.smooth}, then it has no influence on the canonical distribution. 
\end{remark}

The proof of Corollary \ref{cor:limit.smooth} (Corollary \ref{cor:limit.ldp}) basically follows from the proof of Theorem \ref{thm:limit.smooth} (Theorem \ref{thm:limit.ldp}).  We provide the details of proof in Appendix \ref{proof:cor:corollary 1}.

As our approximation theorems in Section \ref{ch:existence}, we can extend our limit theorems to discrete random variables, random variables bounded below, and random variables bounded above as follows:

\begin{enumerate}
\item Discrete random variables:
Theorem \ref{thm:limit.smooth} and Theorem \ref{thm:limit.ldp} can also be applied to the case when we have a nonnegative discrete random variable $K$,  a sequence of discrete random variables $\tilde{L}_n$, and $\tilde{H}_n := K+\tilde{L}_n $. It is said that the sequence of conditional probabilities $P(K=k \mid \tilde{H}_n \in E_n)$ has a limit ( by appropriate scaling) 
\begin{align}
\frac{P(K = k) e^{ -  \lambda_n(I)k} }{\sum_{k \in S} P(K = k) e^{-  \lambda_n(I)k}}.
\label{limit:dis.rvs}
\end{align}
The case of  $\lambda_n(I) = \beta_n \psi(I)$ follows from Theorem \ref{thm:limit.smooth}; The case of  $\lambda_n(I) = \varphi(I)$  follows from Theorem \ref{thm:limit.ldp}.  Furthermore, the probability function \eqref{limit:dis.rvs}  satisfies  the definition of the canonical probability distribution in \eqref{def:disc.can.dis}.

\item Random variables bounded below: As Remark \ref{rmk:approx.thm}, we can extend those limit theorems to the case when $X$ is bounded below. By change of variable, let $\hat{X_n} := \beta_n (X - C)$, where $C$ is the finite lower bound, we still have
\begin{align}
\mathbb{E}[ (X - C)^j ] < \infty, \ j=1 \ \text{or} \ 3.
\end{align}
Note that $j=3$ is for Theorem \ref{thm:limit.smooth} and $j=1$ is for Theorem \ref{thm:limit.ldp}. In addition, assume
$$ \log \left( \frac{P\left( Y_n \in I - \beta_n x  \mid X_n = \beta_n x  \right)}{P \left( Z_n \in I \right)} \right) $$
satisfies the condition of linear approximation in \eqref{cond:lin.approx.smooth} and \eqref{cond:lin.approx.ldp}, for Theorem \ref{thm:limit.smooth} and  Theorem \ref{thm:limit.ldp}, respectively. Then we can apply those limit theorems to obtain a unique canonical distribution of $X$.  Therefore, as the point \eqref{rmk:shift.prop.} in Remark \ref{rmk:approx.thm}, a unique canonical distribution derived by the limit of a sequence of conditional distributions has the ``shifting property". For the discrete random variable $K$, its unique canonical distribution has this property as well. 

\item Random variables bounded above:  Let $X$ be a nonpositive continuous random variable and the corresponding canonical distribution be a sequence of distributions with density functions
\begin{align}
 \frac{f_{X}(x) e^{-\lambda_n(I) x}}{\displaystyle\int_{\mathbb{R}^- } f_{X}(x) e^{-\lambda_n(I) x} dx }, \quad -\infty < \lambda_n(I) < 0.
\end{align}
When $\lambda_n(I) = \beta_n \psi(I) $, Theorem \ref{thm:limit.smooth} can be applied to an interval $I$ such that  $  -\infty <\psi(I) < 0$; When $\lambda_n(I) =  \varphi(I) $, Theorem \ref{thm:limit.ldp} can be applied to an interval $I$ such that  $   -\infty < \varphi(I) < 0$.  Therefore, as the point \eqref{rmk:ref.prop.} in Remark \ref{rmk:approx.thm}, a unique canonical distribution derived by the limit of a sequence of  conditional distributions has the “reflection  property”. For the discrete random variable $K$, its unique canonical distribution has this property as well. This reflection property provides us an explanation of the possibility of {\em negative temperature}: For some given condition of the whole system which arises a {\em negative} parameter $( -\infty < \lambda_n(I) < 0)$ in the exponential weight, a unique canonical distribution for a function {\em bounded from above} of the subsystem emerges as the limit of a sequence of conditional distributions.
\end{enumerate}

%%%%%%%%%%%%%%%%%%%%%%%%%%%%%

%Proofs

%%%%%%%%%%%%%%%%%%%%%%%%%%%%%

\section{Proofs of main results}\label{ch:proof}

\subsection{Proofs of Theorem \ref{thm:theorem 1} and Theorem \ref{thm:d+d}}

\subsubsection{Proof of Theorem \ref{thm:theorem 1}}

\begin{proof}
\label{proof:theorem 1}
We first prove for the case: $\{x: f_{X_n}(x) > 0 \} = \mathbb{R^+}$. 
In this case, $P(Z_n \in I \mid X_n =x)$ is well-defined for all $x \in \mathbb{R}^+.$ 
Let 
\[I = [h, h + \delta] \subseteq D, \quad  I - x := \left\{y- x: \ y \in I  \right\}, \]
with Condition \eqref{eq:condforz}: for $ I \subseteq D$, $ P\big( Z_n \in [h, h+\delta] \big) \geq \delta_2,$ we can derive the following conditional density by Bayes' theorem
\begin{align} \label{condprob}
f_{ X_n | Z_n}(x; I) &= 	\frac{  f_{X_n} (x) P(Z_n \in I \mid X_n =x)}{   P(Z_n \in I) } = \frac{  f_{X_n} (x)  P(Y_n \in I - x \mid X_n = x)}{  P(Z_n \in I) } 
, \quad \text{for} \ x \in \mathbb{R}^+.
\end{align} 
Note that $P(Y_n \in I - x \mid X_n = x) =P(Y_n \in [h - x , h + \delta - x]   \mid X_n = x)  $. Define \[G_\delta(y, x) :=  P\big( Y_n \in [y, y+\delta]  \mid X_n = x \big).\] 
By Taylor expansion and Condition \eqref{eq:logsquare}, we can expand $ G_\delta(h - x, x)$ at $(h, x)$  to get \[G_\delta(h - x, x)  = G_\delta(h , x) - \frac{\partial G_\delta(h , x) }{\partial y} x  + \frac{\partial^2 G_\delta(h - \alpha_n x , x) }{2 \partial y^2} x^2,  \quad \text{for some } \alpha_n \in (0,1).\] 
It implies that
\begin{align} \label{eq:taylorexpansion}
&P \big( Y_n \in [h -x, h +\delta-x] \mid X_n = x \big) \notag \\
= & P \big( Y_n\in [h , h +\delta] \mid X_n= x \big) -\frac{\partial P \big( Y_n \in [y, y+\delta] \mid X_n = x \big)}{\partial y}\biggr\rvert_{y=h} \cdot x  + r_n(x)x^2 \notag \\
= & P \big( Y_n \in [h, h+\delta] \mid X_n= x \big) \left[1 - \psi_n(I; x) \cdot x +   \frac{r_n(x)x^2}{ P \big( Y_n \in [h, h+\delta] \mid X_n =  x \big)}  \right] \notag \\
= & P \big( Y_n \in [h, h+\delta] \mid X_n= x \big) \left[ e^{-\psi_n(I ; x) x } -  \frac{(\psi_n(I ; x)x)^2e^{- \gamma_n \cdot \psi_n(I; x) x }}{2}  +   \frac{r_n(x)x^2}{ P \big( Y_n \in [h, h+\delta] \mid X_n =  x \big)}  \right] \notag \\
=& P \big( Y_n \in [h, h+\delta] \mid X_n= x \big) \left[ e^{-\psi_n(I ; x) x } + k_n(x)x^2 \right],
\end{align}
where 
\begin{align} \label{ansequence}
	\psi_n(I; x) &=  \frac{\partial \log  P \big( Y_n \in [y, y+\delta] \mid  X_n = x \big)}{\partial y}\biggr\rvert_{y=h}, \\
	 \label{ansequence2}
r_n(x) &= \frac{1}{2}  \frac{\partial^2  P \big( Y_n \in [y, y+\delta] \mid X_n= x \big)}{\partial y^2} \biggr\rvert_{y=h -\alpha_n x},  \\
 \label{ansequence3}
 k_n(x) &=  \frac{r_n(x)}{ P \big( Y_n \in [h, h+\delta] \mid X_n =  x \big)}   -  \frac{\psi_n(I ; x)^2e^{- \gamma_n \cdot \psi_n(I; x) x }}{2},
\end{align}
and we have applied Taylor's expansion 
$$ e^{y_n} = 1 + y_n + \frac{(y_n)^2 e^{\gamma_n y_n}}{2}, \quad \text{for some} \ \gamma_n \in (0,1) \quad \text{and} \quad y_n:=\psi_n(I; x) x$$
to the third equation in \eqref{eq:taylorexpansion}.
Note that by Condition \eqref{eq:fyx}, 
\begin{align}\label{eq:nonnegativephi}
 0\leq \psi_n(I ; x)\leq C_3,    
\end{align} and by Conditions \eqref{eq:logsquare} and \eqref{eq:fyx}, for all $ x \in \mathbb{R}^+$, $k_n(x)$ is uniformly bounded. Therefore, by the results of \eqref{condprob} and \eqref{eq:taylorexpansion}, for all $x\in \mathbb{R}^+$, we obtain that
\begin{align} \label{condprob:approx}
f_{X_n \mid Z_n } \big( x ; I \big) &= \frac{f_{X_n}(x)  P(Y_n \in  I \mid X_n = x) (e^{-\psi_n(I ; x) x } + k_n(x)x^2 ) }{P(Z_n \in I) }. \end{align}
In the following, we will use  brief notations 
\[P_{Y_n \mid X_n} \big( I ; x \big) :=  P(Y_n \in I \mid X_n = x),\quad P_{Z_n} \big(I \big ) := P(Z_n \in I).\] First, we let 
\begin{align} \label{ABh2}
 A_n:=\frac{1}{ \displaystyle \int_{ \mathbb{R}^+} f_{X_n}( x) e^{-\psi_n(I ; x) x} dx}.
\end{align}
Since $  \int_{\mathbb{R}^+} f_{X_n}(x) dx = 1$, from \eqref{eq:nonnegativephi}, we have $\displaystyle  \int_{\mathbb{R}^+} f_{X_n}(x) e^{-\psi_n(I ; x) x}dx \leq 1$, hence $A_n\geq 1$ for all $n\geq 1$.
% There exists a bounded closed set $D_n \subset \mathbb{R}^+$ such that$\displaystyle \int_{D_n} f_{X_n}(x) dx > 0$. Hence, for each $n$,
% \begin{align}
% A_n \leq \frac{1 }{ \displaystyle \int_{D_n} f_{X_n}(x) e^{-\psi_n(I;  x) x} dx  } \leq \frac{1 }{ \displaystyle \inf_{x \in D_n} \left| e^{-\psi_n(I ; x) x}  \right| \int_{D_n} f_{X_n}(x) dx  } < \infty,
% \label{ABhbd}
% \end{align}
% Therefore, we have $A_n \in (0, \infty).$ for every $n$. 
By definition $X_n = \beta_n X$, $\beta_n \rightarrow 0$, we also have 
$$ \lim_{n\to\infty} \frac{1}{A_n}=\lim\limits_{n \rightarrow \infty} \int_{\mathbb{R}^+} f_{X_n}(x) e^{-\psi_n(I ; x) x}dx = \lim\limits_{n \rightarrow \infty} \int_{\mathbb{R}^+} f_{X}(x) e^{-\psi_n(I ; \beta_n x) \beta_n x}dx = 1 $$
by the dominated convergence theorem, so $A_n$ is uniformly bounded from above and from below.

Recall the definition of KL-divergence from \eqref{KLdivergence}, and the definitions of $\mathbb P^{(n)}_I$ and $\mathbb Q^{(n)}_I$ from \eqref{eq:PIQI}, we have 
\begin{align}
D_{\text{KL}}\left(\mathbb P^{(n)}_I ~\|~ \mathbb Q^{(n)}_I \right) &= \int_{\mathbb{R}^+} A_n f_{X}(x) e^{-\psi_n(I ; \beta_n x) \beta_n x} \log \left( \frac{ A_n f_{X}(x) e^{-\psi_n(I ; \beta_n x) \beta_n x} }{f_{X \mid \tilde{Z}_n}(x, E_n )}\right) dx \notag\\
&= \int_{\mathbb{R}^+} A_n f_{X_n}(x) e^{-\psi_n(I ; x) x} \log \left( \frac{ A_n f_{X_n}(x) e^{-\psi_n(I ; x) x} }{f_{X_n|Z_n}(x, I )}\right) dx \label{eq:KL1}\\
&= \left| \int_{\mathbb{R}^+} A_n f_{X_n}(x) e^{-\psi_n(I ; x) x} \log \left( \frac{f_{X_n|Z_n}(x, I )}{ A_n f_{X_n}(x) e^{-\psi_n(I ; x) x}}\right) dx\right|.\label{KL1}
\end{align}
\eqref{eq:KL1} is obtained by the change of variables $X_n = \beta_n X$ and the scale invariant property of the KL-divergence.   \eqref{KL1} is true because the KL-divergence is nonnegative. With \eqref{condprob}, the right hand side in \eqref{KL1} can be written as 
\begin{align} 
 & \left| \int_{\mathbb{R}^+}  A_n f_{X_n}(x)e^{-\psi_n(I ; x)}  \log \left( \frac{f_{X_n}(x) P_{Y_n | X_n} \big( I - x; x \big)  }{P_{Z_n}\big(I \big) }\cdot \frac{1}{ A_n f_{X_n}(x) e^{-\psi_n(I ; x) x}}\right)dx \right| \notag\\
= & \left| \int_{\mathbb{R}^+} A_n f_{X_n}(x)e^{-\psi_n(I ; x)} \log \left( \frac{P_{Y_n | X_n} \big(I - x ; x \big)  }{ P_{Y_n | X_n} \big( I ; x \big)   e^{-\psi_n(I ; x) x}} \cdot  \frac{ P_{Y_n | X_n} \big( I ; x \big)  }{ P_{Z_n}\big(I \big) A_n}\right) dx \right|  \notag\\
\leq &  \left| \int_{\mathbb{R}^+} A_n f_{X_n}(x)e^{-\psi_n(I ; x)} \log \left(  \frac{P_{Y_n | X_n} \big( I ; x \big) }{P_{Z_n}\left(I \right) A_n }\right) dx \right|   \notag\\
 &+\left| \int_{\mathbb{R}^+}A_n f_{X_n}(x)e^{-\psi_n(I ; x)} \log \left( \frac{ P_{Y_n | X_n} \big( I -x ; x \big)}{  P_{Y_n | X_n} \big( I ; x \big) e^{-\psi_n(I ; x) x}  }\right) dx \right|. \label{small2}
\end{align}
From the expression of $f_{X_n|Z_n}\big(x; I  \big)$ in  (\ref{condprob:approx}), we have the following identity
\begin{align}
\label{sum2.2}
 1&= \int_{\mathbb{R}^+} f_{X_n|Z_n} \big(x; I \big) dx \notag\\
  & = \frac{\displaystyle \int_{\mathbb{R}^+} f_{X_n}( x)e^{-\psi_n(I ; x)x } P_{Y_n | X_n} \big( I  ; x \big) dx  }{ P_{Z_n}\big(I  \big) } +  \frac{ \displaystyle \int_{\mathbb{R}^+} f_{X_n}( x) k_n(x)x^2 P_{Y_n | X_n} \big( I ; x \big) dx }{ P_{Z_n}\big(I \big)}.
\end{align}
For the second term in \eqref{sum2.2},  Conditions \eqref{eq:logsquare} and \eqref{eq:fyx} imply that $P_{Y_n | X_n} \big(I  ; x \big)$ and $k_n(x)$ are uniformly bounded and Condition \eqref{eq:condforz} implies that $P_{Z_n}\big(I \big)$ is uniformly bounded from below. Then by the assumption $\mathbb{E}[X_n^2] = a_n$, the first term in (\ref{sum2.2}) satisfies 
\begin{align}
\left|  \frac{\displaystyle \int_{\mathbb{R}^+} f_{X_n}( x)e^{-\psi_n(I ; x)x } P_{Y_n | X_n} \big( I ; x \big)   dx }{ P_{Z_n}\big(I \big)}  \right|=\left|  \frac{\displaystyle \int_{\mathbb{R}^+} f_{X_n}( x) k_n(x)x^2 P_{Y_n | X_n} \big(I  ; x \big) dx}{P_{Z_n}\big(I \big)}-1\right|=1+O(a_n).
\label{bofsec}
\end{align}
With Condition \eqref{eq:fyx}, \eqref{bofsec} implies 
\begin{align}
  \frac{ 1}{P_{Z_n}\big(I  \big)A_n}= \frac{\displaystyle \int_{\mathbb{R}^+} f_{X_n}( x)e^{-\psi_n(I ; x)x }    dx }{ P_{Z_n}\big(I \big)}   \leq \frac{1}{\delta_1}+O(a_n).  \label{bd_ZnAn}
\end{align}
% Therefore
% \begin{align}
%  &= \frac{  P_{Y_n | X_n} \big( I  ; x \big)  \displaystyle  \int_{\mathbb{R}^+} f_{X_n}( x)e^{-\psi_n(I ; x)x } dx}{  P_{Y_n | X_n} \big( I  ; x \big)  P_{Z_n}\big(I \big)}  \notag \\
% &\leq   \frac{\displaystyle \int_{\mathbb{R}^+} f_{X_n}( x) e^{-\psi_n(I ; x)x }  P_{Y_n | X_n} \big(I ; x \big) dx }{ \delta_1 P_{Z_n}\big(I \big)}=\frac{1}{\delta_1}+O(a_n). 
% \end{align}
By Conditions \eqref{eq:fyx} and \eqref{eq:fyx2}: $\big|P_{Y_n | X_n} \big(I  ; x \big) -  P_{Y_n}\big(I \big) \big| \leq b_n P_{Y_n}\big(I \big) $ with $b_n \rightarrow 0$, therefore
\begin{align}
 P_{Y_n}\big(I  \big) \leq \frac{P_{Y_n | X_n} \big(I ; x \big)}{1 - b_n} \leq K_1
 \label{bd_fYn}
\end{align}
for some constant $K_1>0$. With \eqref{bd_ZnAn} and \eqref{bd_fYn} and recalling the definition of $A_n$ in \eqref{ABh2}, we have 
\begin{align}
& \left|\frac{\displaystyle \int_{\mathbb{R}^+} f_{X_n}(x)e^{-\psi_n(I ; x)x } P_{Y_n | X_n} \big( I ; x \big) dx  }{P_{Z_n}\big( I  \big)}   -  \frac{P_{Y_n}\big(I  \big) }{ P_{Z_n}\big(I  \big) A_n} \right| \notag \\
= & \left| \frac{\displaystyle \int_{\mathbb{R}^+} f_{X_n}(x)e^{-\psi_n(I ; x)x } \left( P_{Y_n | X_n} \big( I ; x \big) -P_{Y_n}\big(I  \big) \right)  dx }{ P_{Z_n}\big(I \big)}    \right| \notag\\
 \leq & \frac{\displaystyle \int_{\mathbb{R}^+} f_{X_n}(x)e^{-\psi_n(I ; x)x } \left| P_{Y_n | X_n} \big( I ; x \big) -P_{Y_n}\big(I  \big) \right| dx }{P_{Z_n}\big(I \big)} \notag \\
 \leq & \frac{b_n P_{Y_n}\big(I  \big)  \displaystyle \int_{\mathbb{R}^+} f_{X_n}(x)e^{-\psi_n(I ; x)x }  dx }{ P_{Z_n}\big(I \big) } = \frac{b_n P_{Y_n}\big(I \big)  }{ P_{Z_n}\big(I \big) A_n}  
\leq  K_1 b_n \left(  \frac{1}{\delta_1} +O(a_n) \right) = O(b_n).
 \label{bofc2}
\end{align}
And similarly,
\begin{align}
\bigg\lvert \frac{P_{Y_n | X_n} \big(I ; x \big)}{P_{Z_n}\big(I \big) A_n  }  - \frac{P_{Y_n}\big(I \big) }{P_{Z_n}\big(I \big) A_n  }\bigg\rvert 
 & \leq  \frac{ b_n P_{Y_n}\big(I  \big) }{P_{Z_n}\big(I \big) A_n  }  = O(b_n).
  \label{bofc3}
\end{align}
By the triangle inequality, from  \eqref{bofsec}, \eqref{bofc2} and \eqref{bofc3}, we have
\begin{align*}
 \frac{ P_{Y_n | X_n} \big(I ; x \big)}{P_{Z_n}\big(I \big) A_n  }  = 1+O(a_n+b_n).
\end{align*}
Since $\log(1 + x) \leq x$ for all $x > -1$, for sufficiently large $n$, we have
\begin{align}
\log  \left( \frac{ P_{Y_n | X_n} \big( I ; x \big)}{P_{Z_n}\big(I \big) A_n }   \right) =O(a_n+b_n).
\label{bd_log}
\end{align}
Note that the term $O(a_n+b_n)$ in \eqref{bd_log} is independent of $x$. Therefore, for the first term in (\ref{small2}) we have
\begin{align}
 &\left| \displaystyle \int_{\mathbb{R}^+} A_n f_{X_n}(x)e^{-\psi_n(I ; x)x}  \log \left( \frac{  P_{Y_n | X_n} \big(I  ; x \big)  }{P_{Z_n}\big(I \big) A_n  }\right) dx \right|  \notag\\
 \leq &\sup_x \left| \log \left( \frac{  P_{Y_n | X_n} \big(I  ; x \big)  }{P_{Z_n}\big(I \big) A_n  }\right) \right|    \displaystyle \int_{\mathbb{R}^+} A_n f_{X_n}(x)e^{-\psi_n(I ; x)x}   dx \notag \\
 =& \sup_x \left| \log \left( \frac{  P_{Y_n | X_n} \big(I   ; x \big)  }{P_{Z_n}\big(I \big) A_n  }\right) \right|   \cdot 1 =  O(a_n+b_n).
\label{small2.5}
\end{align}
Define \[\hat{G}_\delta(y, x) := \log P\big( Y_n \in [y, y+\delta]  \mid X_n = x \big).\] 
Then by Taylor expansion and the conditions \eqref{eq:logsquare}, \eqref{eq:fyx}, we can expand $ \hat{G}_\delta(h - x, x)$ at $(h, x)$  to get
\begin{align} \label{taylor.log}
\hat{G}_\delta(h - x, x)  &= \hat{G}_\delta(h , x) - \frac{\partial \hat{G}_\delta(h , x) }{\partial y} x  + \frac{\partial^2 \hat{G}_\delta(h - \hat{\alpha}_n x , x) }{2 \partial y^2} x^2,  \quad \text{for some } \hat{\alpha}_n \in (0,1), \notag \\
& =  \hat{G}_\delta(h , x) - \psi_n(I  ; x) x + q_n(x)x^2,
\end{align}
where $\displaystyle
q_n(x):= \frac{1}{2}\frac{\partial^2 \log P \big(Y_n \in [y, y + \delta] \mid X_n =  x\big)}{\partial y^2} \biggr\rvert_{y = h -\hat{\alpha}_n x} $. Therefore, for the second term in (\ref{small2}),  by   \eqref{taylor.log},  we can get
\begin{align*}
 \log \left( \frac{ P \big( Y_n \in [ h - x, h +\delta - x] \mid X_n =  x \big) }{  P \big( Y_n \in [ h, h +\delta] \mid X_n = x \big) e^{-\psi_n(I ; x) x}}\right)  
=& \hat{G}_\delta(h - x, x)- \hat{G}_\delta(h , x)  + \psi_n(I  ; x) x =q_n(x)x^2.
\end{align*}
 And by Condition \eqref{eq:logsquare}, for all $x\in \mathbb{R}^+$, there is a constant $K_2>0$ such that 
\begin{align}
|e^{-\psi_n(I ; x)x} q_n(x)| \leq K_2.
\label{taylor-qn2}
\end{align}
In the following, we use a brief notation $P_{Y_n \mid X_n} \big(E_n -x; x   \big) =  P \big(Y_n \in [y, y + \delta] \mid X_n =  x\big)$. By \eqref{taylor-qn2}, and the uniform boundedness of $A_n$, and the assumption: $\mathbb E \big[X^2_n \big] = a_n$,  the second term in \eqref{small2} satisfies
\begin{align}
\label{small3}
&\Bigg\rvert \displaystyle \int_{\mathbb{R}^+} A_n f_{X_n}(x)e^{-\psi_n(I ; x)x}\log \left( \frac{ P_{Y_n | X_n} \big( I -x  ; x \big)}{  P_{Y_n | X_n} \big( I ; x \big) e^{-\psi_n(I ; x) x}  }\right) dx \Bigg\rvert \notag\\
= &  \Bigg\rvert \displaystyle \int_{\mathbb{R}^+} A_n e^{-\psi_n(I ; x)x} f_{X_n}(x)  q_n(x)x^2   dx  \Bigg\rvert       \leq  M \Bigg\rvert \displaystyle \int_{\mathbb{R}^+}  f_{X_n}(x) e^{-\psi_n(I ; x)x} q_n(x)x^2  dx  \Bigg\rvert 
\leq  M  K_2 \mathbb E \big[X^2_n \big]=O( a_n).   
\end{align}
Combining \eqref{KL1}, \eqref{small2}, \eqref{small2.5} and \eqref{small3},
\begin{align}
D_{\text{KL}}\left ({\mathbb P}^{(n)}_I ~\|~ {\mathbb Q}^{(n)}_I\right)=O(a_n+b_n).
\end{align}

For the case $S_n := \{x: f_{X_n}(x)>0 \} \subset \mathbb{R}^+$, we can only define $P(Z_n \in I \mid X_n= x)$ on $S_n$. But we can still  define the KL-divergence on $\mathbb{R}^+$ since the part of KL-divergence on $\mathbb{R}^+ \backslash S_n $ is $0$.
Therefore, in the same way as \eqref{eq:KL1},
\begin{align}\label{KL2}
D_{\text{KL}}\left(\mathbb P^{(n)}_I ~\|~ \mathbb Q^{(n)}_I \right) 
% \int_{\mathbb{R}^+} A_n f_{X}(x) e^{-\psi_n(I ; \beta_n x) \beta_n x} \log \left( \frac{ A_n f_{X}(x) e^{-\psi_n(I ; \beta_n x) \beta_n x} }{f_{X \mid \tilde{Z}_n}(x, E_n )}\right) dx \notag\\
&= \int_{\mathbb{R}^+} A_n f_{X_n}(x) e^{-\psi_n(I ; x) x} \log \left(  \frac{A_n f_{X_n}(x) e^{-\psi_n(I ; x) x}}{ f_{X_n|Z_n}(x, I ) }\right) dx \notag \\
 &= \int_{S_n} A_n f_{X_n}(x) e^{-\psi_n(I ; x) x} \log \left(  \frac{A_n f_{X_n}(x) e^{-\psi_n(I ; x) x}}{ f_{X_n|Z_n}(x, I ) }\right) dx \notag \\
&= \left| \int_{S_n} A_n f_{X_n}(x) e^{-\psi_n(I ; x) x} \log \left( \frac{f_{X_n|Z_n}(x, I )}{ A_n f_{X_n}(x) e^{-\psi_n(I ; x) x}}\right) dx\right|.
\end{align}
Then we can follow every step from the step \eqref{KL1} in our proof for the case $\{x: f_{X_n}(x)>0 \} = \mathbb{R}^+$ to get 
\begin{align}
D_{\text{KL}}\left ({\mathbb P}^{(n)}_I ~\|~ {\mathbb Q}^{(n)}_I\right)=O(a_n+b_n).
\end{align}
Furthermore, let ${\zeta}_n(I; x) := \beta_n {\psi}_n(I; \beta_n x)$, by the condition \eqref{eq:fyx}, there is a constant $C>0$ such that for all $x \in \mathbb{R}^+,$
$ 0 \leq {\zeta}_n(I; x) < C.$
Therefore,   ${\mathbb P}^{(n)}_I$ with the density function $A_n f_X(x) e^{- \beta_n {\psi}_n(I; \beta_n x) x} $ satisfies the definition of the canonical probability distributions in \eqref{def:cont.can.dis}.
\end{proof}

\subsubsection{Proof of Theorem \ref{thm:d+d}}

For a finite interval $ I = [h, h + \delta], \ h,\delta \in \mathbb{R}$ and $\delta>0$, let 
\begin{align*}
\mathbb{\hat{P}}^{(n)}_I =P(K_n =  \beta_n k \mid Z_n \in I) \quad \text{and} \quad \mathbb{\hat{Q}}^{(n)}_I = B_n P(K_n =  \beta_n k ) e^{-\hat{\psi}(I;   \beta_n k )  \beta_n k },
\end{align*} 
where 
\begin{align*}
\frac{1}{B_n} := {\sum_{ \beta_n k \in S_n} P(K_n=  \beta_n k )  e^{-\hat{\psi}(I ;  \beta_n k) \beta_n k }} 
\end{align*}
and
\begin{align*}
    \hat{\psi}_n(I ; \beta_n k):= \frac{\partial \log P\big( Y_n  \in [y, y + \delta ] \mid  K_n =  \beta_n k \big)}{\partial y}\biggr\rvert_{y=h}. 
\end{align*}
We first state the following lemma. The  proof   follows from the proof of Theorem \ref{thm:theorem 1} with the Definition of KL-divergence for discrete probability distributions in \eqref{def:QKL}.
\begin{lemma}\label{lm:d+c}
 Assume there exist positive constants $\delta_1, \delta_2,\{C_i, 1\leq i\leq 3\}$, a sequence $b_n = o(1)$, and an open interval $D$ such that the conditions \eqref{eq:logsquare} -- \eqref{eq:condforz} in Theorem \ref{thm:theorem 1} hold for $K_n$, $Y_n$, and $Z_n$. Then
 \begin{align}
D_{\textnormal{KL}}\left(\hat{\mathbb P}^{(n)}_I \|~ \hat{\mathbb Q}^{(n)}_I\right) =O(a_n+b_n), \quad \text{for every} \ I \subseteq D.
\end{align}
\end{lemma}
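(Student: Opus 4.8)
The plan is to run the proof of Theorem \ref{thm:theorem 1} essentially verbatim, replacing every integral $\int_{\mathbb{R}^+} f_{X_n}(x)(\cdot)\,dx$ by the sum $\sum_{\beta_n k\in S_n} P(K_n=\beta_n k)(\cdot)$, and the hypothesis $\mathbb{E}[X_n^2]=a_n$ by $\mathbb{E}[K_n^2]=a_n$, and interpreting the ``for all $x\in\mathbb{R}^+$'' conditions of Theorem \ref{thm:theorem 1} as holding at the points $x=\beta_n k\in S_n$ (as the remark after Theorem \ref{thm:d+d} allows). One small point deserves attention: here the conditional law $\hat{\mathbb P}^{(n)}_I$ is the first argument of the KL-divergence, whereas in Theorem \ref{thm:theorem 1} the canonical law sat there; but every estimate in that proof controls a log-ratio in absolute value by $O(a_n+b_n)$, so the argument is insensitive to this reordering, and in fact the absolute-value device used after \eqref{eq:KL1} becomes unnecessary.

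First I would note that $Z_n=K_n+Y_n$ is continuous because $Y_n$ is, so by Condition \eqref{eq:condforz} for $(K_n,Y_n,Z_n)$ we have $P(Z_n\in I)\geq\delta_2>0$, and Bayes' theorem gives, for $\beta_n k\in S_n$,
\[
\hat{\mathbb P}^{(n)}_I(\beta_n k)=P(K_n=\beta_n k\mid Z_n\in I)=\frac{P(K_n=\beta_n k)\,P\big(Y_n\in I-\beta_n k\mid K_n=\beta_n k\big)}{P(Z_n\in I)}.
\]
Taylor-expanding $G_\delta(y,x):=P(Y_n\in[y,y+\delta]\mid K_n=x)$ and $\hat G_\delta(y,x):=\log P(Y_n\in[y,y+\delta]\mid K_n=x)$ in $y$ around $h$, exactly as in \eqref{eq:taylorexpansion} and \eqref{taylor.log} and using Condition \eqref{eq:logsquare} to control the second-order remainders, yields
\[
P\big(Y_n\in I-\beta_n k\mid K_n=\beta_n k\big)=P\big(Y_n\in I\mid K_n=\beta_n k\big)\big[e^{-\hat\psi_n(I;\beta_n k)\beta_n k}+k_n(\beta_n k)(\beta_n k)^2\big],
\]
with $0\leq\hat\psi_n\leq C_3$ by Condition \eqref{eq:fyx} and $k_n$, $q_n$ uniformly bounded on $S_n$. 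Setting $B_n^{-1}=\sum_{\beta_n k\in S_n}P(K_n=\beta_n k)e^{-\hat\psi_n(I;\beta_n k)\beta_n k}$, the same dominated-convergence argument as for $A_n$ in Theorem \ref{thm:theorem 1} (using $\beta_n=o(1)$, so $\beta_n K\to 0$) shows $B_n\to 1$, hence $B_n$ is bounded above and below.

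Then I would split
\[
\log\frac{\hat{\mathbb P}^{(n)}_I(\beta_n k)}{\hat{\mathbb Q}^{(n)}_I(\beta_n k)}=\log\!\left(\frac{P(Y_n\in I-\beta_n k\mid K_n=\beta_n k)}{P(Y_n\in I\mid K_n=\beta_n k)\,e^{-\hat\psi_n(I;\beta_n k)\beta_n k}}\right)+\log\!\left(\frac{P(Y_n\in I\mid K_n=\beta_n k)}{B_n\,P(Z_n\in I)}\right),
\]
the first term being exactly $q_n(\beta_n k)(\beta_n k)^2$ by the log-expansion, and the second being $O(a_n+b_n)$ uniformly in $k$ — this last fact is the discrete transcription of \eqref{sum2.2}–\eqref{bd_log}, using the normalization identity $1=\sum_{\beta_n k\in S_n}\hat{\mathbb P}^{(n)}_I(\beta_n k)$ together with Condition \eqref{eq:fyx2} in the form $|P(Y_n\in I\mid K_n=\beta_n k)-P(Y_n\in I)|\leq b_n P(Y_n\in I)$. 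Summing against $\hat{\mathbb P}^{(n)}_I$ in \eqref{def:QKL}, the $O(a_n+b_n)$ piece contributes $O(a_n+b_n)$ because $\hat{\mathbb P}^{(n)}_I$ is a probability measure, and since $P(Y_n\in I-\beta_n k\mid K_n=\beta_n k)\leq 1$ and $P(Z_n\in I)\geq\delta_2$ give $\hat{\mathbb P}^{(n)}_I(\beta_n k)\leq P(K_n=\beta_n k)/\delta_2$, the $q_n$ piece is bounded by $(\sup|q_n|/\delta_2)\sum_{\beta_n k\in S_n}P(K_n=\beta_n k)(\beta_n k)^2=O(\mathbb{E}[K_n^2])=O(a_n)$. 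This yields $D_{\textnormal{KL}}(\hat{\mathbb P}^{(n)}_I\,\|\,\hat{\mathbb Q}^{(n)}_I)=O(a_n+b_n)$.

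The main obstacle is bookkeeping rather than mathematics: one must make sure each ``for all $x\in\mathbb{R}^+$'' hypothesis is invoked only at $x=\beta_n k\in S_n$, that all sums over $S_n$ converge (guaranteed by $\mathbb{E}[K_n^2]<\infty$), and that the convention $0\log(0/q)=0$ in \eqref{def:QKL} handles any boundary values of the mass function. No estimate beyond those already appearing in the proof of Theorem \ref{thm:theorem 1} is required.
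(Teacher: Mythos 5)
Your proof is correct and, at the level of strategy, coincides with the paper's intended route: the paper's own ``proof'' of Lemma \ref{lm:d+c} is a one-line remark that it follows from the proof of Theorem \ref{thm:theorem 1}, with the integrals replaced by the discrete KL-divergence \eqref{def:QKL}. What you add, and what the paper silently glosses over, is the observation that the arguments of the KL-divergence are swapped relative to Theorem \ref{thm:theorem 1}: there the \emph{canonical} measure $\mathbb{P}^{(n)}_I$ sits in the first slot and the conditional measure $\mathbb{Q}^{(n)}_I$ in the second, whereas in the lemma $\hat{\mathbb{P}}^{(n)}_I$ is the \emph{conditional} law and $\hat{\mathbb{Q}}^{(n)}_I$ the canonical one. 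Since KL-divergence is not symmetric, the paper's transcription argument would, read literally, produce a bound on $D_{\textnormal{KL}}(\hat{\mathbb Q}^{(n)}_I \,\|\, \hat{\mathbb P}^{(n)}_I)$ rather than the stated quantity. You correctly note that the swap is harmless here: the $\log\bigl(P_{Y_n|K_n}(I;\beta_n k)/(B_n P_{Z_n}(I))\bigr)$ piece of the log-ratio is bounded uniformly in $k$ by $O(a_n+b_n)$, so its average against any probability measure is $O(a_n+b_n)$; and the $q_n(\beta_n k)(\beta_n k)^2$ piece is controlled, after using $\hat{\mathbb P}^{(n)}_I(\beta_n k)\le P(K_n=\beta_n k)/\delta_2$ (which follows from $P(Y_n\in I-\beta_n k\mid K_n=\beta_n k)\le 1$ and \eqref{eq:condforz}) and the uniform bound $\sup|q_n|\le C_2/2$ from \eqref{eq:logsquare}, by $(\sup|q_n|/\delta_2)\,\mathbb{E}[K_n^2]=O(a_n)$. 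This makes the argument for the direction actually stated in the lemma explicit. Your remaining bookkeeping (interpreting the ``for all $x\in\mathbb{R}^+$'' hypotheses at $x=\beta_n k\in S_n$, the dominated-convergence argument that $B_n$ is bounded, and handling of boundary cases via the convention $0\log(0/q)=0$) matches what a careful reading of the paper's proof of Theorem \ref{thm:theorem 1} requires.
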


Now we are ready to prove Theorem \ref{thm:d+d}.
\begin{proof}[Proof of Theorem \ref{thm:d+d}]
All of the conditions in Theorem \ref{thm:theorem 1} hold for $K_n, Y_n, Z_n$ by the assumptions, hence  Lemma \ref{lm:d+c} can be applied. Therefore, we obtain the following relation between total variation and KL-divergence from \eqref{eq:totalvariation}:  for every $I \subseteq D$,  
\begin{align} 
\label{eq:TV}
& \sup_{\beta_n k \in S_n}\biggr\lvert P\big( K_n =  \beta_n k \mid Z_n \in  I \big) -  B_n P(K_n =   \beta_n k) e^{-\hat{\psi}_n(I;  \beta_n k)  \beta_n k} \biggr\rvert \notag \\
\leq & \delta\left(\hat{\mathbb P}^{(n)}_I, \hat{\mathbb Q}^{(n)}_I \right)\leq \frac{1}{2} \sqrt{D_{\textnormal{KL}}\left( 
\hat{\mathbb P}^{(n)}_I \|~\hat{\mathbb Q}^{(n)}_I\right)}=O\left(\sqrt{a_n+b_n}\right).
\end{align}
With \eqref{eq:domain} and \eqref{eq:TV}, the conclusion \eqref{eq:cor55triangle} follows from the change of variable $K_n =\beta_n K $ and the triangle inequality: 
\begin{align*}
& \sup_{k \in S}\biggr\lvert  P\big( K = k \mid \tilde{H}_n \in  E_n  \big) -  B_n P(K =  k) e^{- \beta_n \hat{\psi}_n(I; \beta_n k) k} \biggr\rvert \\
=&  \sup_{ \beta_n k \in S_n }\biggr\lvert P\big( K_n =  \beta_n k \mid {H}_n \in  I  \big) -  B_n P(K_n  =   \beta_n k) e^{-  \hat{\psi}_n(I;  \beta_n k)  \beta_n k} \biggr\rvert \notag \\
\leq& \sup_{ \beta_n k \in S_n}\biggr\lvert P\big( K_n =  \beta_n k \mid Z_n \in  I \big) -  B_n P(K_n =   \beta_n k) e^{-\hat{\psi}_n(I;  \beta_n k)  \beta_n k} \biggr\rvert \notag \\
& + \sup_{ \beta_n k \in S_n}\biggr\lvert  P\big( K_n =  \beta_n k \mid H_n \in  I \big) -  P\big( K_n =  \beta_n k \mid Z_n \in  I \big)  \biggr\rvert \\
=&  O(c_n + \sqrt{a_n + b_n}).
\end{align*}

Furthermore, let $\hat{\zeta}_n(I; k) := \beta_n \hat{\psi}_n(I; \beta_n k)$.  We can check that 
$ 0 \leq \hat{\zeta}_n(I; k) < C$  for all $\ k\in S$ and a constant $C>0$.
Therefore,  $B_n P(K =  k) e^{- \beta_n \hat{\psi}_n(I; \beta_n k) k} $ satisfies the definition of the canonical probability distributions in \eqref{def:disc.can.dis}.
\end{proof}

\subsection{Proofs of  Theorem \ref{thm:limit.smooth} and Theorem \ref{thm:limit.ldp}}

Let $X$ be a nonnegative continuous random variable and  with $\mathbb{E}[X] < \infty$ and let $Z_n$ be a sequence of real-valued continuous random variables. Given a Borel measurable set $ E \in  \mathcal{B}\left( \mathbb{R}\right)$ and a function $\psi: \mathcal{B}\left( \mathbb{R}\right) \rightarrow \mathbb{R}$ with $ 0 < \psi(E) < \infty$, let $\mathbb P_E$ be a probability measure with density function 
$$\displaystyle A f_{X}(x) e^{-\psi(E)x}, \quad \displaystyle \frac{1}{A} : = {\displaystyle \int_{\mathbb{R}^+} f_{X}(x) e^{-\psi(E)x}dx}.  $$  
And let $\mathbb Q^{(n)}_E$ be a probability measure with density function $f_{X|Z_n}(x; E)$.  We obtain the following lemma for the case \eqref{uniq.case2} of the canonical distribution \eqref{uniq.can.dis}:

 \begin{lemma}
\label{thm:unique1}
Assume the following conditions hold: 
\begin{enumerate}
\item (Boundedness) \label{cond:temp state 1} 
$\displaystyle  \left|  \frac{f_{X \mid Z_n} (x;E)}{f_{X}(x)} \right|$ and $\displaystyle  \left|  e^{- \xi x} \log \left( \frac{f_{X \mid Z_n} (x; E)   }{f_{X}(x) }  \right) \right|$, for any $\xi>0$, are uniformly bounded on $ \mathbb{R}^+$.
\item (Linear approximation)  \label{cond:temp state 2} 
There exist constants $b, c \in \mathbb{R}, \ 0 < c < \infty$, and a sequence of functions $q_n : \mathbb{R} \rightarrow \mathbb{R} $ with  \[\mathbb{E}\left[ q_n(X)^2 \right] = \gamma_n \rightarrow 0\] such that on an interval $I_n = [0, d_n]$ with $d_n \rightarrow \infty$,
\begin{align} 
&   \log \left(   \frac{f_{X \mid Z_n} (x; E)}{f_{X}(x)} \right) = b - cx + q_n(x). 
\label{cond:linear.approx}
\end{align}

\end{enumerate}
Then 
$$ \displaystyle  \lim_{n \rightarrow \infty}{D_{\textnormal{KL}}\left(\mathbb P_E ~ \|~ \mathbb Q^{(n)}_E \right)} = 0  \quad \text{if and only if} \quad   c = \psi(E). $$

\end{lemma}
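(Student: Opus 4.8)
The plan is to evaluate $D_{\textnormal{KL}}\left(\mathbb P_E ~\|~ \mathbb Q^{(n)}_E\right)$ directly from its integral form, split the domain into the region $I_n=[0,d_n]$ where the linear approximation \eqref{cond:linear.approx} is valid and the tail $[d_n,\infty)$ where only the boundedness hypotheses are used, pass to the limit $n\to\infty$, and recognize the limiting value as a Bregman divergence that is nonnegative and vanishes precisely when $c=\psi(E)$. It is convenient to set $\Lambda(\lambda):=\log\int_{\mathbb R^+}f_X(x)e^{-\lambda x}\,dx=\log\mathbb E[e^{-\lambda X}]$, which is finite on $[0,\infty)$ (the integrand is dominated by $f_X$) and strictly convex there because $X$, being continuous, is nonconstant. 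By definition of $A$ one has $\log A=-\Lambda(\psi(E))$, and since $\psi(E)>0$ lies in the interior of the domain of $\Lambda$, differentiating under the integral sign gives $\mathbb E_{\mathbb P_E}[X]=-\Lambda'(\psi(E))=:m$, which is finite (the exponential weight forces this regardless of the value of $\mathbb E[X]$).

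First I would write
\begin{align*}
D_{\textnormal{KL}}\left(\mathbb P_E ~\|~ \mathbb Q^{(n)}_E\right)=\int_{\mathbb R^+}Af_X(x)e^{-\psi(E)x}\left[\log A-\psi(E)x-\log\frac{f_{X\mid Z_n}(x;E)}{f_X(x)}\right]dx
\end{align*}
and split the integral at $d_n$. On $[0,d_n]$, inserting \eqref{cond:linear.approx} replaces the bracket by $(\log A-b)+(c-\psi(E))x-q_n(x)$; using $d_n\to\infty$, as $n\to\infty$ the three resulting pieces tend to $\log A-b$, $(c-\psi(E))m$, and $0$ — the last by Cauchy--Schwarz, since $\big|\int_0^{d_n}Af_X e^{-\psi(E)x}q_n\,dx\big|\le\big(\int q_n^2\,d\mathbb P_E\big)^{1/2}\le(A\gamma_n)^{1/2}\to 0$ using $e^{-\psi(E)x}\le 1$. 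For the tail, the $\log A$ and $\psi(E)x$ contributions vanish because $\mathbb P_E([d_n,\infty))\to 0$ and $\mathbb E_{\mathbb P_E}[X]<\infty$, while the $\log(f_{X\mid Z_n}/f_X)$ contribution is controlled by Condition \eqref{cond:temp state 1}: taking $\xi=\psi(E)/2$ bounds the integrand by $AM_\xi f_X(x)e^{-\psi(E)x/2}$, an $n$-independent integrable function, so its integral over $[d_n,\infty)$ tends to $0$. Hence $\lim_n D_{\textnormal{KL}}\left(\mathbb P_E ~\|~ \mathbb Q^{(n)}_E\right)=(\log A-b)+(c-\psi(E))m$.

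Next I would identify $b$ by running the same split on the normalization identity $\int_{\mathbb R^+}f_{X\mid Z_n}(x;E)\,dx=1$. The tail satisfies $\int_{d_n}^{\infty}f_{X\mid Z_n}(x;E)\,dx\le M\int_{d_n}^{\infty}f_X\to 0$ by the ratio bound in Condition \eqref{cond:temp state 1}; on $[0,d_n]$, exponentiating \eqref{cond:linear.approx} gives $f_{X\mid Z_n}(x;E)=e^{b}f_X(x)e^{-cx}e^{q_n(x)}$, and I claim $\int_0^{d_n}e^{b}f_X e^{-cx}e^{q_n}\,dx\to e^{b}\,\mathbb E[e^{-cX}]$. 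To see this one splits off $\int f_X e^{-cx}(e^{q_n}-1)\,dx$ and applies dominated convergence along a subsequence on which $q_n(X)\to 0$ almost surely (such a subsequence exists since $\mathbb E[q_n^2]\to 0$ forces $q_n\to 0$ in probability), with $n$-uniform dominating function $(Me^{-b}+1)f_X$ supplied by $e^{b}e^{-cx}e^{q_n(x)}=f_{X\mid Z_n}(x;E)/f_X(x)\le M$; since every subsequence admits such a sub-subsequence with limit $0$, the whole sequence converges. Thus $e^{b}\mathbb E[e^{-cX}]=1$, i.e. $b=-\Lambda(c)$, and combining,
\begin{align*}
\lim_{n\to\infty}D_{\textnormal{KL}}\left(\mathbb P_E ~\|~ \mathbb Q^{(n)}_E\right)=\Lambda(c)-\Lambda(\psi(E))-\Lambda'(\psi(E))\,(c-\psi(E)),
\end{align*}
which is the Bregman divergence of the strictly convex $\Lambda$ between $c$ and $\psi(E)$; it is $\ge 0$, and equals $0$ if and only if $c=\psi(E)$.

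I expect the normalization step — showing $\int_0^{d_n}e^{b}f_X e^{-cx}e^{q_n}\,dx\to e^{b}\mathbb E[e^{-cX}]$, equivalently that $e^{q_n}\to 1$ in a strong enough sense — to be the main obstacle, since unlike the tail estimate and the $L^2$ bound on the $q_n$ term it genuinely needs both parts of Condition \eqref{cond:temp state 1} (the plain ratio bound to build an $n$-uniform envelope for $e^{q_n}-1$) together with the passage to an almost surely convergent subsequence. A secondary technical point is to confirm the differentiations of $\Lambda$ are legitimate, which follows from $\psi(E)>0$ and $c>0$ lying in the interior of $\{\lambda:\mathbb E[e^{-\lambda X}]<\infty\}\supseteq[0,\infty)$.
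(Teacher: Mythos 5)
Your proof is correct, and it takes a genuinely different route from the paper's.

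The paper argues each direction separately. For the forward direction ($c=\psi(E)\Rightarrow D_{\mathrm{KL}}\to 0$), it uses the normalization identity together with a Taylor expansion of $e^{q_n}$ to show $\log A - b$ is bounded above by a vanishing sequence, then feeds that bound back into the KL integral and invokes nonnegativity of KL to conclude. For the reverse direction, it introduces a second candidate limit $\hat{\mathbb P}_E$ with density proportional to $f_X e^{-cx}$, reruns the forward argument to show $D_{\mathrm{KL}}(\hat{\mathbb P}_E\,\|\,\mathbb Q^{(n)}_E)\to 0$ as well, then uses Pinsker's inequality and the triangle inequality for total variation to force $\delta(\hat{\mathbb P}_E,\mathbb P_E)=0$, hence equality of densities and $c=\psi(E)$. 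Your proof instead computes the limit of $D_{\mathrm{KL}}$ exactly: splitting at $d_n$ and letting $n\to\infty$ gives $(\log A-b)+(c-\psi(E))m$, the normalization identity pins down $b=-\Lambda(c)$, and the limit collapses to the Bregman divergence $\Lambda(c)-\Lambda(\psi(E))-\Lambda'(\psi(E))(c-\psi(E))$, whose strict positivity off the diagonal is the desired equivalence. What yours buys is a closed-form expression for the limiting KL value and a one-pass proof of the ``if and only if''; what the paper's buys is that it never needs to identify $b$, differentiate $\Lambda$, or invoke strict convexity of the log-Laplace transform --- the comparison-via-total-variation trick in the reverse direction is more elementary even if less revealing. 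Your flagged concern about the normalization step is well-placed but handled correctly: the ratio bound of Condition (1) supplies the $n$-uniform envelope $(Me^{-b}+1)f_X$ on $[0,d_n]$, and since $\mathbb E[q_n(X)^2]\to 0$ only gives convergence in probability, the passage to an a.s.-convergent sub-subsequence followed by the ``every subsequence has a further subsequence with limit $0$'' argument is exactly what is needed. One small point worth making explicit: strict convexity of $\Lambda$ on $(0,\infty)$ follows because $\Lambda''(\lambda)$ is the variance of $X$ under the tilted measure $\propto e^{-\lambda x}f_X\,dx$, which is positive since $X$ is nonconstant and the tilted measure is mutually absolutely continuous with $\mu_X$; you gesture at this but it is the load-bearing fact for the ``only if'' direction.
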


Furthermore, assume $\mathbb{E}[X^3] < \infty$ and $X$ is not a constant random variable, let $\tilde{\mathbb{P}}^{(n)}_E $ be a probability measure with density function 
\begin{align}
\displaystyle \tilde A_n f_X(x) e^{ -  \beta_n  \psi(E)x  }, \quad \frac{1}{\tilde{A}_n} := \displaystyle \int_{\mathbb{R}^+}  f_X(x) e^{-   \beta_n \psi(E)  x  },
\end{align}
in which $\beta_n >0, \ \beta_n = o(1)$. We obtain the following lemma for the case \eqref{uniq.case1} of the canonical distribution \eqref{uniq.can.dis}:

\begin{lemma}

\label{thm:unique2}
Assume the following conditions hold : 
\begin{enumerate}
\item (Boundedness) \label{cond:cor:temp state 1} 
$\displaystyle \left| \frac{f_{X \mid Z_n} (x;E)}{f_{X}(x)} \right| $ and $\displaystyle \left| e^{-\beta_n \xi x} \log \left( \frac{f_{X \mid Z_n} (x; E)   }{f_{X}(x) }  \right) \right|$, for any $\xi>0$, are uniformly bounded on $ \mathbb{R}^+$.
\item  (Linear approximation) \label{cond:cor:temp state 2} 
There exist constants $b,c \in \mathbb{R}$, $ 0 < c < \infty$,  and a sequence of functions $q_n : \mathbb{R} \rightarrow \mathbb{R} $ with $\mathbb{E}\left[ q_n(X)^2 \right] \rightarrow 0$ such that  on $I_n = [0, d_n]$ with $d_n = O\left(\frac{1}{\beta_n}\right)$,
\begin{align} 
&  \frac{1}{\beta_n} \log \left(   \frac{f_{X \mid Z_n} (x; E)}{f_{X}(x)} \right) = b - c x + q_n(x). 
\label{cond:linear.approx2}
\end{align}

\end{enumerate}
Then 
$$ \displaystyle \lim_{n \rightarrow \infty}\frac{{D_{\textnormal{KL}}\left( \tilde{\mathbb P}^{(n)}_E ~ \|~ {\mathbb Q}^{(n)}_E \right)}}{\beta_n^2}  
 = 0  \quad \text{if and only if} \quad c=  \psi(E). $$

\end{lemma}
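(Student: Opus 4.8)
The plan is to establish the sharp asymptotic
\[
D_{\textnormal{KL}}\left( \tilde{\mathbb P}^{(n)}_E ~\|~ {\mathbb Q}^{(n)}_E \right)=\tfrac12\,(c-\psi(E))^2\,\mathrm{Var}(X)\,\beta_n^2+o(\beta_n^2),
\]
which settles the lemma at once: dividing by $\beta_n^2$ and letting $n\to\infty$ gives the limit $\tfrac12(c-\psi(E))^2\mathrm{Var}(X)$, and this is $0$ precisely when $c=\psi(E)$, since $X$ nonconstant forces $\mathrm{Var}(X)>0$. Write $g=f_X$, let $p_n=f_{X\mid Z_n}(\cdot\,;E)$ and $\tilde p_n=\tilde A_n\,g\,e^{-\beta_n\psi(E)x}$ be the densities of ${\mathbb Q}^{(n)}_E$ and $\tilde{\mathbb P}^{(n)}_E$, and put $w_n=\log(\tilde p_n/p_n)$, so that $D_{\textnormal{KL}}=\mathbb E_{\tilde{\mathbb P}^{(n)}_E}[w_n]$. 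By the boundedness hypothesis \eqref{cond:cor:temp state 1} we have $p_n\le M_0\,g$, so both densities are supported on $\{g>0\}$ and $\mathbb E_{\tilde{\mathbb P}^{(n)}_E}[e^{-w_n}]=\int p_n=1$; this identity together with $e^{-w}=1-w+\tfrac12w^2-\cdots$ is the organizing principle — morally $D_{\textnormal{KL}}=\tfrac12\mathbb E_{\tilde{\mathbb P}^{(n)}_E}[w_n^2](1+o(1))$ — and the work is to make it rigorous given that the linear-approximation error $q_n$ is only small in $L^2$.

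First I would pin down $w_n$ on the window $I_n=[0,d_n]$. By \eqref{cond:linear.approx2}, $\log(p_n/g)=\beta_n b-\beta_n cx+\beta_n q_n(x)$ there, whence $w_n(x)=(\log\tilde A_n-\beta_n b)+\beta_n(c-\psi(E))x-\beta_n q_n(x)$. Taylor expanding the normalizers and using $\mathbb E[X^3]<\infty$ gives $\log\tilde A_n=\beta_n\psi(E)\mathbb E[X]+O(\beta_n^2)$; imposing $\int p_n=1$ — using $\mathbb P(X>d_n)=O(\beta_n^3)$ (Markov, with $d_n$ comparable to $\beta_n^{-1}$) and the Cauchy--Schwarz bound $|\int_{I_n}g\,e^{-\beta_n cx}q_n|\le\gamma_n^{1/2}$, where $\gamma_n:=\mathbb E[q_n(X)^2]\to0$ — gives $\beta_n b=\beta_n c\,\mathbb E[X]+O(\beta_n^2)+O(\beta_n\gamma_n^{1/2})$. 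Hence
\[
w_n(x)=\beta_n(c-\psi(E))\bigl(x-\mathbb E[X]\bigr)-\beta_n q_n(x)+O(\beta_n^2)+O(\beta_n\gamma_n^{1/2})\qquad(x\in I_n).
\]

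The heart of the argument is a domain decomposition. Choose $T_n\to\infty$ with $\beta_n T_n\to0$ but $\mathbb P(X>T_n)=o(\beta_n^2)$ (possible since $\mathbb E[X^3]<\infty$, e.g.\ $T_n=\beta_n^{-4/5}$), and set $G_n=\{x\le T_n\}\cap\{|q_n(x)|\le\gamma_n^{1/4}/\beta_n\}$. By Markov, the $g$-measure of $\{|q_n|>\gamma_n^{1/4}/\beta_n\}$ is $\le\gamma_n^{1/2}\beta_n^2=o(\beta_n^2)$, so $\mathbb R^+\setminus G_n$ has $g$-measure $o(\beta_n^2)$, and since $p_n,\tilde p_n\le M_0\,g$ it carries ${\mathbb Q}^{(n)}_E$- and $\tilde{\mathbb P}^{(n)}_E$-mass $o(\beta_n^2)$ as well. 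On $G_n$ every error piece above is uniformly $o(1)$, so $w_n=o(1)$ there; writing $D_{\textnormal{KL}}=\int p_n\,\phi(\tilde p_n/p_n)$ with $\phi(t)=t\log t-t+1$ and $\phi(e^u)=\tfrac12u^2+O(u^3)$ for small $u$, the $G_n$-contribution equals $\tfrac12\int_{G_n}p_n\,w_n^2(1+o(1))$. Expanding $w_n^2$: the main term $\beta_n^2(c-\psi(E))^2\int_{G_n}p_n(x-\mathbb E[X])^2$ converges to $\beta_n^2(c-\psi(E))^2\mathrm{Var}(X)$ — replace $p_n$ by $g$ on $G_n$ (their ratio is $1+o(1)$) and use $\int_{\mathbb R^+\setminus G_n}g(x-\mathbb E[X])^2\to0$ by H\"older with exponent $3/2$ and $\mathbb E|X-\mathbb E X|^3<\infty$; the cross term is $O(\beta_n^2\gamma_n^{1/2})$ by Cauchy--Schwarz and the remaining terms are $o(\beta_n^2)$. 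Off $G_n$, including the genuine tail $x>d_n$ where \eqref{cond:cor:temp state 1} with $\xi=\psi(E)/2$ makes $\tilde p_n\log(\tilde p_n/p_n)$, $\tilde p_n$ and $p_n$ all dominated by a constant multiple of $g$, the integrand $p_n\,\phi(\tilde p_n/p_n)$ is integrable while the mass is $o(\beta_n^2)$, so this part is $o(\beta_n^2)$. Summing the pieces gives the displayed asymptotic.

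The step I expect to be the real obstacle is the control of $q_n$: only $\mathbb E[q_n(X)^2]\to0$ is assumed, so $w_n$ is merely bounded — not small — where $|q_n|$ is large, and the argument lives or dies on choosing the truncation levels $T_n$ and $\gamma_n^{1/4}/\beta_n$ so that \emph{simultaneously} $w_n=o(1)$ on $G_n$, $\mathbb R^+\setminus G_n$ has $g$-measure $o(\beta_n^2)$, and the third-moment tail $\{X>T_n\}$ is $o(\beta_n^2)$; keeping all the Cauchy--Schwarz and H\"older error terms at $o(\beta_n^2)$ and verifying that it is exactly $\mathrm{Var}(X)$, not a $\beta_n$-tilted variance, that survives in the limit are the delicate bookkeeping points. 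The remaining ingredients (the normalizer expansions, the estimate $\phi(e^u)=\tfrac12u^2+O(u^3)$, and the convergence $\tilde{\mathbb P}^{(n)}_E\to\mathrm{Law}(X)$ in the first three moments) are routine.
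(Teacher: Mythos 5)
Your proposal is correct and takes a genuinely different route from the paper. The paper first proves the forward implication ($c=\psi(E)\Rightarrow D_{\textnormal{KL}}/\beta_n^2\to 0$) by direct Taylor expansion of the densities together with Markov tail control, obtaining $D_{\textnormal{KL}}=O(\beta_n^2\gamma_n)+O(\beta_n^3)$; it then gets the converse by a bootstrap: apply the forward argument with $c$ in place of $\psi(E)$ to the companion measure $\hat{\mathbb P}^{(n)}_E\propto f_X e^{-\beta_n c x}$, invoke Pinsker plus the triangle inequality to conclude $\beta_n^{-1}\delta(\hat{\mathbb P}^{(n)}_E,\tilde{\mathbb P}^{(n)}_E)\to 0$, and finally expand the CDF difference to order $\beta_n$ and use that $X$ is nonconstant to force $c=\psi(E)$. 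You instead establish the sharp second-order asymptotic $D_{\textnormal{KL}}/\beta_n^2\to\tfrac12(c-\psi(E))^2\mathrm{Var}(X)$, which settles both directions at once since $\mathrm{Var}(X)>0$. The price you pay is the domain decomposition with the good set $G_n=\{x\le T_n\}\cap\{|q_n|\le\gamma_n^{1/4}/\beta_n\}$ and the attendant tuning of $T_n$ so that simultaneously $\beta_n T_n\to 0$, $T_n^{-3}=o(\beta_n^2)$, and the complement carries $o(\beta_n^2)$ mass — bookkeeping the paper never needs because its Taylor expansion in the forward direction works on all of $I_n$ and only an upper bound on $D_{\textnormal{KL}}$ is required there. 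What you buy in exchange is the identification of the limiting constant $\tfrac12\mathrm{Var}(X)$, a local Fisher-information interpretation of the KL cost of mis-specifying the tilt parameter, which the paper's proof does not yield. Your worries about the delicate points (simultaneously achieving $w_n=o(1)$ on $G_n$, small $g$-measure off $G_n$, and $o(\beta_n^2)$ third-moment tails) are real but resolvable exactly as you describe; note that you are implicitly using that $d_n$ is bounded below by a constant times $\beta_n^{-1}$ (so that $T_n\le d_n$ eventually), which is the same reading of "$d_n=O(1/\beta_n)$" that the paper's Markov-tail step also requires.
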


\begin{remark}
In particular, if we choose $Z_n = \beta_n X + \beta_n ( \tilde{Y}_n  - \mu_n),$ where $\tilde{Y}_n, \beta_n, \mu_n$ are given in the definitions in Section \ref{ch:existence}, and choose the Borel set $E$ to be a finite interval $I$, by Equation \eqref{Section:existence:three.rep}, those general results of Lemma \ref{thm:unique1} and Lemma \ref{thm:unique2} for $f_{X \mid Z_n}(x, E) $ can be applied to $f_{X \mid \tilde{Z}_n} (x, E_n )$, which is the conditional density defined in Section \ref{ch:existence}. 
\end{remark}

\subsubsection{Proof of Lemma \ref{thm:unique1} }

\begin{proof}

Note that for any uniformly bounded function $\left| b_n(x) \right|$ on $\mathbb{R}^+$:
\begin{align}
\left| \int_{\mathbb{R}^+ \backslash I_n} f_{X}(x) b_n(x) dx \right| &\leq  \| b_n(x)  \|_{\infty}\int_{\mathbb{R}^+ \backslash I_n} f_{X}(x) dx 
= \|  b_n(x) \|_{\infty} \mathbb{P}\left( X \geq d_n \right)  \notag \\ 
&\leq  \|  b_n(x) \|_{\infty}  \left( \frac{ \mathbb{E} \left[  X \right]}{d_n} \right) = O(\epsilon_n),
\label{temp:markov} 
\end{align}
for a sequence $\epsilon_n \rightarrow 0$ since $d_n \rightarrow \infty$ by Condition \eqref{cond:temp state 2} and $\mathbb{E}[X]$ is bounded by the assumption. 

We first prove $ \displaystyle   c = \psi(E) \ \Rightarrow \ {D_{\textnormal{KL}}\left(\mathbb P_E~ \|~ \mathbb Q^{(n)}_E\right)}
 \rightarrow 0$.
 
By Condition \eqref{cond:temp state 2}, 
\begin{align}
 \log \left( \frac{f_{X \mid Z_n} (x; E) }{ f_{X}(x) }  \right) =b -\psi(E)x  + q_n(x) \quad \text{on} \ I_n,
\label{temp exap 1}
\end{align}
therefore, we have 
\begin{align}
 \log \left( \frac{A f_{X}(x) e^{-\psi(E) x}}{f_{X \mid Z_n} (x; E)}  \right) = \log A  - b - q_n(x)  \quad \text{on} \ I_n.
\label{temp exap 2}
\end{align}
Since $\int_{\mathbb{R}^+} f_X(x) dx =1,$
there exists a bounded closed set $D \subset \mathbb{R}^+$ such that  $ \int_{D} f_X(x) dx > 0$. 
%otherwise, we get a contradiction: for a sequence of closed sets $D_n \uparrow \mathbb{R}^+ $, 
%\begin{align}
%0 = \lim_{n \rightarrow \infty} \int_{D_n} f_X(x) dx = 
%\end{align}
%in which the second equality is by dominated convergence.
 Hence,
\begin{align}
  \displaystyle  A =  \frac{1}{  \displaystyle  \int_{\mathbb{R}^+} f_{X}(x) e^{-\psi(E)x} dx} \leq \frac{1}{  \displaystyle  \int_{D} f_{X}(x) e^{-\psi(E)x} dx} \leq \frac{1}{  \displaystyle \inf_{x \in D } \left|  e^{-\psi(E)x} \right|  \int_{D} f_{X}(x) dx}  < \infty.
\label{temp A_n}
\end{align}

Furthermore, we can derive
\begin{align}
1 = \int_{\mathbb{R}^+} f_{X\mid Z_n} (x; E) dx &=\int_{I_n} f_{X}(x) \frac{ f_{X \mid Z_n} (x; E)}{f_{X}(x)} dx +  \int_{\mathbb{R}^+ \backslash I_n} f_{X}(x) \frac{ f_{X \mid Z_n} (x; E)}{f_{X}(x)} dx \notag \\
&= \int_{I_n} f_{X}(x) e^ {  b -\psi(E)x  + q_n(x)} dx + O(\epsilon_n),
\label{temp A_n 1}
\end{align}
in which the last equality is from Equation \eqref{temp exap 1}, and the result of \eqref{temp:markov} applied to the uniformly bounded function $\displaystyle \left| b_n(x) \right| = \left| \frac{ f_{X \mid Z_n} (x; E)}{f_{X}(x)}\right| $ on $\mathbb{R}^+$ (Condition \eqref{cond:temp state 1}). Multiplying by $e^{-b}$ on both sides in \eqref{temp A_n 1}, we have 
\begin{align}
e^{-b}=	\int_{I_n} f_{X}(x) e^ { -\psi(E)x  + q_n(x)} dx + O(\epsilon_n).
\end{align}
Then we can apply Taylor's expansion to $e^{ q_n(x)}$ to get
\begin{align}
e^{-b} 
&= \int_{I_n} f_{X}(x) e^{-\psi(E)x} dx  +  \int_{I_n} f_{X}(x) e^{-\psi(E)x} \left( q_n(x)   + \frac{ q_n(x)  ^2}{2} e^{\alpha_n \cdot q_n(x)} \right)dx  + O(\epsilon_n),
 \label{temp A_n 2}
\end{align}
for some sequence $\alpha_n \in (0,1)$. Note that we use the formula \[\displaystyle e^y = 1 + y + \frac{e^{\alpha(y) \cdot y}}{2} y^2, \quad \alpha(y) \in (0,1)\] and let $y = q_n(x), \ \alpha_n = \alpha(q_n(x))$.  Combined with Equation \eqref{temp exap 1} and Condition \eqref{cond:temp state 1}, it implies there exists a constant $M>1$   independent of $n$ such that $e^{-\psi(E)x + q_n(x)} \leq M$ for all $x \in I_n$. Since $\alpha_n \in (0,1)$ and $\psi(E)>0$ in the assumption,
\begin{align}
   e^{- \psi(E)x + \alpha_n \cdot q_n(x) } \leq M^{\alpha_n}   e^{- (1-\alpha_n)\psi(E)x}\leq M  \quad \text{for all} \ x \in I_n.  
\label{temp upper bound}
\end{align}
Hence $e^{- \psi(E)x + \alpha_n  q_n(x)   } $ is uniformly bounded on $ I_n $. Then  
%Since $e^{- \psi(E)x + \alpha_n \cdot q_n(x) }$ is uniformly bounded  on  $ \mathbb{R^+}$, 
\begin{align}
\int_{I_n} f_{X}(x) e^{-\psi(E)x} \left(  \frac{ q_n(x)    ^2}{2} e^{\alpha_n \cdot q_n(x)} \right)dx   
&\leq  \left\| \frac{e^{- \psi(E)x + \alpha_n \cdot q_n(x) }}{2}\right\| _{\infty} \int_{I_n} f_{X}(x) q_n(x)^2 dx \notag  \\
& \leq  M \mathbb{E}\left[ q_n(X)^2 \right]  = O(\gamma_n),
\label{temp A_n 3}
\end{align}
where $ O(\gamma_n) \rightarrow 0$  by Condition \eqref{cond:temp state 2}. By Equations \eqref{temp A_n 2} and \eqref{temp A_n 3}, 
\begin{align}
e^{-b} &\leq   \int_{I_n} f_{X}(x) e^{-\psi(E)x} dx +  \int_{I_n} f_{X}(x) e^{-\psi(E)x} q_n(x) dx +  O(\gamma_n)  + O(\epsilon_n) \notag \\
&= \int_{\mathbb{R}^+} f_{X}(x) e^{-\psi(E)x} dx  - \int_{\mathbb{R}^+ \backslash I_n} f_{X}(x) e^{-\psi(E)x} dx +  \int_{I_n} f_{X}(x) e^{-\psi(E)x} q_n(x) dx +  O(\gamma_n)  + O(\epsilon_n) \notag \\
&= \frac{1}{A}   +  \int_{I_n} f_{X}(x) e^{-\psi(E)x}  q_n(x)   dx + O(\gamma_n)  + O(\epsilon_n),
\label{temp A_n 3.5}
\end{align}
where the last equation is from the result of \eqref{temp:markov}. And since $A$ is bounded by the result \eqref{temp A_n}, we have
\begin{align}
A e^{-b} \leq 1 +   \int_{I_n} A f_{X}(x) e^{-\psi(E)x}  q_n(x)   dx + O(\gamma_n)  + O(\epsilon_n).
\label{temp A_n 4}
\end{align}
Using the inequality $\log(1+x ) \leq x$ for all $x > -1$, we find a bound
\begin{align}
\log A - b \leq \int_{I_n} A f_{X}(x) e^{-\psi(E)x}  q_n(x)   dx  + O(\gamma_n)  + O(\epsilon_n).
\label{temp A_n 5}
\end{align}
Furthermore, by Condition \eqref{cond:cor:temp state 1}, $  \left|  e^{-\psi(E) x} \log \left( \frac{f_{X \mid Z_n} (x; E)   }{f_{X}(x) }  \right) \right| $ is uniformly bounded on $ \mathbb{R}^+$, so we can check that 
\begin{align}
 \left| e^{-\psi(E) x} \log \left( \frac{A f_{X}(x) e^{-\psi(E) x}}{f_{X \mid Z_n} (x; E)}  \right)\right| \ \text{is uniformly bounded on} \ \mathbb{R}^+  \text{as well.}
\label{temp A_n 5.5}
\end{align}

Recall that 
$$ \log \left( \frac{A f_{X}(x) e^{-\psi(E) x}}{f_{X \mid Z_n} (x; E)}  \right) = \log A  - b - q_n(x) \ \text{on} \ I_n $$ 
by \eqref{temp exap 2}. With the result of \eqref{temp A_n 5}, we can get  
\begin{align}
& D_{\textnormal{KL}}\left(\mathbb P_E ~ \|~ \mathbb Q^{(n)}_E \right) \notag \\
 &= \int_{I_n} A f_{X}(x) e^{-\psi(E) x} \log \left( \frac{A f_{X}(x) e^{-\psi(E) x}}{f_{X \mid Z_n} (x; E)}  \right) dx + \int_{\mathbb{R}^+ \backslash  I_n} A f_{X}(x) e^{-\psi(E) x} \log \left( \frac{A f_{X}(x) e^{-\psi(E) x}}{f_{X \mid Z_n} (x; E)}  \right)dx \notag \\
 &= \int_{I_n} A f_{X}(x) e^{-\psi(E) x} \left( \log A - b\right)dx -  \int_{I_n} A f_{X}(x)e^{-\psi(E)x}    q_n(x)   dx + O(\epsilon_n)  \notag \\
  &=\left( \log A - b\right) - \int_{\mathbb{R}^+ \backslash I_n} A f_{X}(x) e^{-\psi(E) x} \left( \log A - b\right)dx -  \int_{I_n} A f_{X}(x)e^{-\psi(E)x}    q_n(x)   dx + O(\epsilon_n)  \notag \\
 &= \log A - b + O(\epsilon_n) -  \int_{I_n} A f_{X}(x)e^{-\psi(E)x}  q_n(x)  dx + O(\epsilon_n) =  O(\gamma_n)  + O(\epsilon_n),
 \label{temp KL}
\end{align}
where the $O(\epsilon_n)$ terms are from the result of \eqref{temp:markov} applied to the bounded function \eqref{temp A_n 5.5}. 
Therefore, by \eqref{temp A_n} and \eqref{temp KL}, we get\begin{align*}
D_{\textnormal{KL}}\left(\mathbb P_E ~ \|~ \mathbb Q^{(n)}_E\right) \rightarrow 0.
\end{align*}

\bigskip

%%%%%%%%%%%%%%%%%%%%%%%%%%%%%%%

Next we  prove 
\begin{align} \label{assump:KL}
    {D_{\textnormal{KL}}\left(\mathbb P_E ~ \|~ \mathbb Q^{(n)}_E \right)}
 \rightarrow 0  \displaystyle  \  \Rightarrow  \  c = \psi(E).
\end{align}
By Condition \eqref{cond:temp state 2}, there exists a constant $\hat {b}$ and a sequence of functions $\hat{q}_n(x)$ such that 
\begin{align}
\log \left( \frac{f_{X \mid Z_n} (x; E) }{ f_{X}(x) }  \right) = b - cx   +  q_n(x) \quad \text{on} \ I_n.
\label{temp c exap 1}
\end{align}
Similar to the derivation of \eqref{temp exap 2}, we have
\begin{align}
\log \left( \frac{\hat{A} f_{X}(x) e^{-c x}}{f_{X \mid Z_n} (x; E)}  \right) =  \log \hat{A}  - b - q_n(x)   \quad \text{on} \ I_n,
\label{temp c exap 2}
\end{align}
where 
\begin{align}
\hat{A} = \frac{1}{\displaystyle \int_{\mathbb{R}^+} f_{X}(x) e^{-cx} dx}  < \infty,
\label{temp c A_n}
\end{align}
which  can be proved by a similar approach as in \eqref{temp A_n}. Then following the previous proof from \eqref{temp A_n 1} to \eqref{temp KL}, we can get
\begin{align}
D_{\textnormal{KL}} \left( \hat{\mathbb{P}}_E ~ \|~ \mathbb Q^{(n)}_E\right)  \rightarrow  0,
\end{align}
where $\hat{\mathbb{P}}_E$ is a probability measure with density function $\hat{A}f_X(x) e^{-cx }$. By the assumption \eqref{assump:KL}, we also know
\begin{align}
D_{\textnormal{KL}} \left( \mathbb{P}_E ~ \|~ \mathbb Q^{(n)}_E \right)  \rightarrow  0.
\end{align}
By Pinsker's inequality \eqref{eq:totalvariation}, we have that the total variation distance denoted by $\delta(\cdot,\cdot)$ satisfies
\begin{align}
\delta(\hat{\mathbb{P}}_E, \mathbb Q^{(n)}_E ) \rightarrow 0 \quad \text{and} \quad  \delta(\mathbb{P}_E, \mathbb Q^{(n)}_E) \rightarrow 0.
\end{align} 
Then by the triangle inequality,
$
\delta(\hat{\mathbb{P}}_E, \mathbb{P}_E)=  0. $ It implies
\begin{align*}
\int_0^x \left( \hat{A} f_X(s)  e^{-cs}ds -  A   f_X(s) e^{-\psi(E)s} \right)ds = 0, \quad \text{ for all }  x\in\ \mathbb{R}^+.
\end{align*}
Hence 
\begin{align*}
\hat{A}  f_X(x)  e^{-cx}  = A   f_X(x) e^{-\psi(E)x} \quad \text{holds almost everywhere on} \ \mathbb{R}^+.
\end{align*}
Since $\hat{A}$ and $A$ are both  independent  of $x$ and there exists an interval such that $f_X(x)>0 $, we  obtain  $c = \psi(E)$. 
\end{proof}

\subsubsection{Proof of Lemma \ref{thm:unique2} }

\begin{proof}
\label{proof:lemma:unique2}
Note that  for any uniform bounded function $\left| b_n(x) \right|$ on $\mathbb{R}^+$:
\begin{align}
\left| \int_{\mathbb{R}^+ \backslash I_n} f_{X}(x) b_n(x) dx \right| &\leq  \| b_n(x)  \|_{\infty}\int_{\mathbb{R}^+ \backslash I_n} f_{X}(x) dx 
= \|  b_n(x) \|_{\infty} \mathbb{P}\left( X \geq d_n \right)  \notag \\ 
&\leq  \|  b_n(x) \|_{\infty}  \left( \frac{ \mathbb{E} \left[  X^3 \right]}{d^3_n} \right) = O(\beta_n^3),
\label{cor:temp:markov} 
\end{align}
where the  existence of $O(\beta_n^3)$ is due to $d_n = O(\frac{1}{\beta_n})$  by Condition \eqref{cond:cor:temp state 2} and $\mathbb{E}[X^3] < \infty$ by the assumption.

We first prove $  c=  \psi(E) \ \Rightarrow \ \displaystyle \frac{{D_{\textnormal{KL}}\left(\mathbb {\tilde{P}}^{(n)}_E ~ \|~ \mathbb Q^{(n)}_E \right)}}{\beta_n^2}  
 \rightarrow 0 $. By Condition \eqref{cond:cor:temp state 2},
\begin{align} 
&   \log \left(   \frac{f_{X \mid Z_n} (x; E)}{f_{X}(x)} \right) = \beta_n \left( b - \psi(E)  x + q_n(x) \right) \quad \text{ on} \ I_n,
\label{temp.n exap 1}
\end{align}
Therefore we have 
\begin{align}
 \log \left( \frac{A_n f_{X}(x) e^{- \beta_n \psi(E) x}}{f_{X \mid Z_n} (x; E)}  \right) = \log A_n  - \beta_n b -  \beta_n q_n(x)  \quad \text{on} \ I_n.
\label{temp.n exap 2}
\end{align}
Following the proof in \eqref{temp A_n}, for each $n$, we have 
\begin{align}
\displaystyle  A_n =  \frac{1}{  \displaystyle  \int_{\mathbb{R}^+} f_{X}(x) e^{-  \beta_n  \psi(E) x} dx} < \infty,
\end{align}
and we can check that $\displaystyle  \lim\limits_{n \rightarrow \infty}  \int_{\mathbb{R}^+}  f_X(x) e^{ - \beta_n\psi(E) x  } dx \rightarrow 1 $, hence, $A_n$ is uniformly bounded.

We can apply a similar proof as for Lemma \ref{thm:unique1} to Equation \eqref{temp.n exap 2}. Substituting $b$ by $\beta_n b$, $\psi(E)x$ by  $\beta_n \psi(E)x$, $q_n(x)$ by $\beta_n q_n(x)$ and $A$ by $A_n$, then every step from Equation \eqref{temp A_n 1} to Equation \eqref{temp KL} follows. Therefore, we can get 
\[{D_{\textnormal{KL}}\left(\mathbb {\tilde{P}}^{(n)}_E ~ \|~ \mathbb Q^{(n)}_E \right)} = O(\beta_n^2 \gamma_n)+ O(\beta_n^3 ),\]
where the $O(\beta_n^2 \gamma_n)$ term follows from the derivation of the $O( \gamma_n)$ term in Lemma \ref{thm:unique1}, the $O(\beta_n^3)$ term follows from Equation \eqref{cor:temp:markov} and the derivation of the $O( \epsilon_n)$ term in Lemma \ref{thm:unique1}. It implies 
\begin{align*}
 \frac{ {D_{\textnormal{KL}}\left(\mathbb {\tilde{P}}^{(n)}_E ~ \|~ \mathbb Q^{(n)}_E \right)} }{\beta_n^2} = O(\gamma_n)+O(\beta_n ) \rightarrow 0.
\end{align*}
 
 \bigskip
 
Next we prove $ \displaystyle \frac{{D_{\textnormal{KL}}\left(\mathbb {\tilde{P}}^{(n)}_E ~ \|~ \mathbb Q^{(n)}_E \right)}}{\beta_n^2}  
 \rightarrow 0  \ \Rightarrow \  c=  \psi(E)$. Similar to the proof for Lemma \ref{thm:unique1}, we can show
\begin{align}
\frac{ D_{\textnormal{KL}} \left( \hat{\mathbb{P}}^{(n)}_E ~ \|~ \mathbb Q^{(n)}_E \right) }{\beta_n^2} \rightarrow  0,
\end{align}
where $\hat{\mathbb{P}}^{(n)}_E$ is a probability measure with density function $\hat{A}_{n} f_X e^{-\beta_n  cx }$. By the assumption, we also know
\begin{align}
\frac{ D_{\textnormal{KL}} \left( \mathbb{{\tilde{P}}}^{(n)}_E ~ \|~ \mathbb Q^{(n)}_E \right) }{\beta_n^2} \rightarrow  0.
\end{align}
Therefore, by Pinsker's inequality, we have that the  total variation distance denoted by $\delta(\cdot,\cdot)$ satisfies
\begin{align}
\frac{1}{\beta_n}\delta(\hat{\mathbb{P}}_E, \mathbb Q^{(n)}_E ) \rightarrow 0 \quad \text{and} \quad  \frac{1}{\beta_n} \delta(\mathbb{{\tilde{P}}}_E, \mathbb Q^{(n)}_E ) \rightarrow 0.
\end{align} 
Then by the triangle inequality,
$ \frac{1}{\beta_n} \delta(\hat{\mathbb{P}}_E, \mathbb{{\tilde{P}}}_E) \rightarrow  0. $ It implies
\begin{align}
\lim\limits_{n \rightarrow \infty} \frac{1}{\beta_n}\left( \int_0^x \hat{A}_n f_X(s) e^{-\beta_n cs}ds - \int_{0}^x  A_n f_X(s) e^{-\beta_n\psi(E)s}ds \right) = 0, \quad \text{ for all }  x\in\ \mathbb{R}^+.
\label{cor:temp:limintegral}
\end{align}
We can apply Taylor's expansion to $e^{-\beta_n cs}$ and $e^{-\beta_n\psi(E)s}$ to get
\begin{align}
e^{-\beta_n cs} = 1 -\beta_n cs + O(\beta_n^2 s^2) \quad \text{and} \quad e^{-\beta_n \psi(E)s} = 1 -\beta_n \psi(E)s + O(\beta_n^2 s^2).
\label{cor:temp:taylor}
\end{align}
By the results of  \eqref{cor:temp:taylor}, Equation \eqref{cor:temp:limintegral} can be written as
\begin{align}
\lim\limits_{n \rightarrow \infty}  \int_0^x  \frac{1}{\beta_n}\left( \tilde{A}_n - A_n - \left(  \tilde{A}_n c   -  A_n   \psi(E) \right)\beta_n s + O(\beta_n^2 s^2) \right) f_X(s) ds = 0, \quad \text{ for all }  x \in  \mathbb{R}^+.
\label{cor:temp:limintegral 2}
\end{align}
Since $\mathbb{E}[X^2] < \infty$ from the fact $\mathbb{E}[X^3] < \infty$, we know $\displaystyle \int_0^x s^2 f_X(s)ds  < \infty $ on $ \mathbb{R}^+$. 
Therefore, the $ O(\beta_n^2 s^2)$ in Equation \eqref{cor:temp:limintegral 2} can be dropped and we obtain
\begin{align}
\lim\limits_{n \rightarrow \infty}  \int_0^x  \frac{1}{\beta_n}\left( \tilde{A}_n - A_n - \left(  \tilde{A}_n c   -  A_n   \psi(E) \right)\beta_n s  \right) f_X(s) ds = 0, \quad \text{ for all }  x\in\ \mathbb{R}^+.
\label{cor:temp:limintegral4}
\end{align}
By the Dominated Convergence Theorem, 
\begin{align}
 \hat{A}_n = \frac{1}{    \int_{\mathbb{R}^+} f_{X}(x) e^{-  \beta_n c x} dx}  \rightarrow 1 \quad \text{and} \quad    A_n = \frac{1}{    \int_{\mathbb{R}^+} f_{X}(x) e^{- \beta_n \psi_n(E)x} dx} \rightarrow 1.
 \label{cor:temp:A}
\end{align} 
Also we have
\begin{align}
&\lim\limits_{n \rightarrow \infty} \frac{1}{\beta_n}\left(   \int_{\mathbb{R}^+} f_{X}(x) e^{- \beta_n \psi_n(E)x} dx - \int_{\mathbb{R}^+} f_{X}(x) e^{-  \beta_n c x} dx \right) \notag \\
= & \lim\limits_{n \rightarrow \infty} \frac{1}{\beta_n} \left(   \int_{\mathbb{R}^+}  \left( \left( c - \psi(E)     \right) \beta_n x  + O(\beta_n^2 x^2) \right) f_X(x) dx \right) \notag\\
=&\left( c-  \psi(E)\right)\mathbb{E}[X] +  \lim\limits_{n \rightarrow \infty} O(\beta_n \mathbb{E}[X^2]  ) = \left( c - \psi(E)\right)\mathbb{E}[X],
 \label{cor:temp:A2}
\end{align}
where in the first equality we apply  Taylor's expansion \eqref{cor:temp:taylor} again. By   \eqref{cor:temp:A} and \eqref{cor:temp:A2}, we have
\begin{align}
\lim\limits_{n \rightarrow \infty} \frac{\tilde{A}_n - A_n }{ \beta_n } = \lim\limits_{n \rightarrow \infty} \frac{1}{\beta_n} \left( \frac{ \displaystyle \int_{\mathbb{R}^+} f_{X}(x) e^{- \beta_n \psi_n(E)x} dx - \int_{\mathbb{R}^+} f_{X}(x) e^{-  \beta_n c x} dx }{ \displaystyle \int_{\mathbb{R}^+} f_{X}(x) e^{-  \beta_n c x} dx  \int_{\mathbb{R}^+} f_{X}(x) e^{- \beta_n \psi_n(E)x} dx } \right) =  \left(  c - \psi(E)\right)\mathbb{E}[X].
\label{cor:temp:LHS}
\end{align}
Therefore from \eqref{cor:temp:limintegral4} and \eqref{cor:temp:LHS},
\begin{align}
\lim\limits_{n \rightarrow \infty}  \int_0^x \left(  \tilde{A}_n c   -  A_n   \psi(E) \right)  s  f_X(s) ds =  \left( c - \psi(E)\right) \int_0^x s  f_X(s) ds, \quad \text{ for all }  x \in \mathbb{R}^+.
\label{cor:temp:RHS}
\end{align}
% and the right side of Equation \eqref{cor:temp:limintegral 5} becomes
% which implies \todo{we need to justify the limits on two sides exist}\todo{will continue from here}
% \begin{align}
% \lim\limits_{n \rightarrow \infty} \frac{\tilde{A}_n - A_n }{ \beta_n } \int_0^x   f_X(s) ds  = \lim\limits_{n \rightarrow \infty}  \int_0^x \left(  \tilde{A}_n c   -  A_n   \psi(E) \right)  s  f_X(s) ds, \quad \text{ for all }  x \in\ \mathbb{R}^+,
% \label{cor:temp:limintegral 5}
% \end{align}
% in which the limits on two sides exist will be justified later in the proof. 

Therefore, we can apply the results of \eqref{cor:temp:LHS} and \eqref{cor:temp:RHS} to Equation \eqref{cor:temp:limintegral4} to get
\begin{align}
 \left(  c - \psi(E)\right) \int_0^x  \mathbb{E}[X]  f_X(s) ds =   \left( c - \psi(E)\right) \int_0^x s  f_X(s) ds, \quad \text{ for all }  x \in \mathbb{R}^+.
 \label{cor:temp:end}
\end{align}
Since $X$ is is not a constant random variable by our assumption, \eqref{cor:temp:end} is only true when $c= \psi(E)$. 
\end{proof}

\subsubsection{Proof of Theorem \ref{thm:limit.smooth} } 
\label{proof:limit.smooth}
\begin{proof}
The proof follows from Lemma \ref{thm:unique2}. By the condition \eqref{cond:limit.smooth2} in Theorem \ref{thm:limit.smooth}, we have 
\begin{align} \label{prof:lemma:unique2}
\log \left( \frac{P\left( Y_n \in I - \beta_n x  \mid X_n = \beta_n x  \right)}{P \left( Z_n \in I \right)} \right) &= \log \left(    \frac{ P\left( Y \in I - \beta_n x  \right)}{P\left( Y\in I\right)} \right) +  g_n(x) \notag \\
& = - \frac{\partial \log P(Y \in [y, y+\delta])}{\partial y}\biggr\rvert_h \beta_n x + O(\beta_n^2 x^2 ) +  \beta_n g_n(x).
\end{align}

We now check whether all conditions in Lemma \ref{thm:unique2} are satisfied:
\begin{enumerate}
\item (Boundedness): 
$
\left|   \frac{f_{X \mid Z_n} (x; I)   }{f_{X}(x) }   \right| = \left| \frac{f_{X \mid \tilde{Z}_n}(x ; E_n) }{ f_X(x)} \right|,
$
which is uniformly bounded on $\mathbb{R}^+$ by the condition \eqref{cond:limit.smooth2} in Theorem \ref{thm:limit.smooth}. And  from \eqref{prof:lemma:unique2}, for any $\xi > 0$,
\begin{align*}
& \left| e^{-\beta_n \xi x} \log \left( \frac{f_{X \mid Z_n} (x; I)   }{f_{X}(x) }  \right) \right| = \left| e^{-\beta_n \xi x} \log \left( \frac{P\left( Y_n \in I - \beta_n x  \mid X_n = \beta_n x  \right)}{P \left( Z_n \in I \right)} \right) \right| \notag \\
& \leq   \left|  e^{-\beta_n \xi x} O(\beta_n x +  \beta_n^2 x^2 ) \right|   +\left|  e^{-\beta_n \xi x}  g_n(x)  \right|,
\end{align*}
where the first term is uniformly bouneded on $\mathbb{R}^+$, and the second term is uniformly bouneded on $\mathbb{R}^+$ by the condition \eqref{cond:limit.smooth3} in Theorem \ref{thm:limit.smooth}.
\item (Linear approximation): Following  \eqref{prof:lemma:unique2}, we have
\begin{align*} 
 \frac{1}{\beta_n} \log \left(   \frac{f_{X \mid Z_n} (x; E)}{f_{X}(x)} \right) &=  \frac{1}{\beta_n} \log \left( \frac{P\left( Y_n \in I - \beta_n x  \mid X_n = \beta_n x  \right)}{P \left( Z_n \in I \right)} \right) \notag \\
 &= - \frac{\partial \log P(Y \in [y, y+\delta])}{\partial y}\biggr\rvert_h  x + O(\beta_n x^2 ) + g_n(x) 
\end{align*}
on $I_n = [0, d_n]$ with $ d_n = O\left(\frac{1}{\beta_n}\right)$. Therefore, we obtain 
$$ c =  \frac{\partial \log P(Y \in [y, y+\delta])}{\partial y}\biggr\rvert_h \quad \text{and} \quad  q_n(x) =  O(\beta_n x^2 ) + g_n(x)$$ 
and we can check that $\mathbb{E}[q_n(X)^2] \rightarrow 0$ since $\mathbb{E}[g_n(X)^2] \rightarrow 0$ by the condition \eqref{cond:limit.smooth3}.
\end{enumerate}
Therefore, applying Lemma \ref{thm:unique2}, we have $$ \displaystyle \lim_{n \rightarrow \infty}\frac{{D_{\textnormal{KL}}\left( \tilde{\mathbb P}^{(n)}_I ~ \|~ {\mathbb Q}^{(n)}_I \right)}}{\beta_n^2}  
 = 0  \quad \text{if and only if} \quad \psi(I) =  \frac{\partial \log P(Y \in [y, y+\delta])}{\partial y}\biggr\rvert_h. $$
Furthermore, since $ 0 < \frac{\partial \log P(Y \in [y, y+\delta])}{\partial y}\biggr\rvert_h < C$ for a constant $C>0$, $ \tilde{\mathbb P}^{(n)}_I $ satisfies the definition of the canonical probability distributions in \eqref{def:cont.can.dis}.

\end{proof}

\subsubsection{Proof of Theorem \ref{thm:limit.ldp} }
\label{proof:limit.ldp}
\begin{proof}
The proof follows from Lemma \ref{thm:unique1}. By the condition \eqref{cond:limit.ldp2} in Theorem \ref{thm:limit.ldp}, we have 
\begin{align} \label{prof:lemma:unique1}
\log \left( \frac{P\left( Y_n \in I - \beta_n x  \mid X_n = \beta_n x  \right)}{P \left( Z_n \in I \right)} \right) &= \log \left(    \frac{  \exp{ \left[ - \frac{1}{\beta_n}\phi \left(y^* - \beta_n x \right) \right] } }{  \exp{ \left[-  \frac{1}{\beta_n}\phi\left(y^* \right)\right] } } \right) + r_n(x) \notag \\
&= \phi'(y^*) x + O(\beta_n x^2) + r_n(x).  
\end{align}
To check that all conditions are satisfied:
\begin{enumerate}
\item (Boundedness): 
$
\left|   \frac{f_{X \mid Z_n} (x; I)   }{f_{X}(x) }   \right| = \left| \frac{f_{X \mid \tilde{Z}_n}(x ; E_n) }{ f_X(x)} \right|,
$
which is uniformly bounded on $\mathbb{R}^+$ by the condition \eqref{cond:limit.ldp1} in Theorem \ref{thm:limit.ldp}. And by \eqref{prof:lemma:unique1}, for any $\xi > 0$,
\begin{align*}
& \left| e^{- \xi x} \log \left( \frac{f_{X \mid Z_n} (x; I)   }{f_{X}(x) }  \right) \right| = \left| e^{- \xi x} \log \left( \frac{P\left( Y_n \in I - \beta_n x  \mid X_n = \beta_n x  \right)}{P \left( Z_n \in I \right)} \right) \right| \notag \\
& \leq   \left|  e^{- \xi x} O(x +  \beta_n x^2 ) \right|   +\left|  e^{- \xi x}  r_n(x)  \right|,
\end{align*}
where the first term is uniformly bounded on $\mathbb{R}^+$, and the second term is uniformly bounded on $\mathbb{R}^+$ by the condition \eqref{cond:limit.ldp3} in Theorem \ref{thm:limit.ldp}.
\item (Linear approximation): As follows from \eqref{prof:lemma:unique1}, we have
\begin{align*} 
  \log \left(   \frac{f_{X \mid Z_n} (x; E)}{f_{X}(x)} \right) &=   \log \left( \frac{P\left( Y_n \in I - \beta_n x  \mid X_n = \beta_n x  \right)}{P \left( Z_n \in I \right)} \right) \notag \\
 &= \phi'(y^*) x + O(\beta_n x^2 ) + r_n(x)
\end{align*}
on $I_n = [0, d_n]$ with $ d_n = O\left(\frac{1}{\beta_n}\right)$. Hence we obtain 
$$ c =  -\phi'(y^*) \quad \text{and} \quad  q_n(x) =  O(\beta_n x^2 ) + r_n(x), $$ 
and we can check that $\mathbb{E}[q_n(X)^2] \rightarrow 0$ since $\mathbb{E}[r_n(X)^2] \rightarrow 0$ by the condition \eqref{cond:limit.ldp3}.
\end{enumerate}
Therefore, by applying Lemma \ref{thm:unique1}, we have 
$$ \displaystyle \lim_{n \rightarrow \infty}{D_{\textnormal{KL}}\left( {\mathbb P}_I ~ \|~ {\mathbb Q}^{(n)}_I \right)}  
 = 0  \quad \text{if and only if} \quad \varphi(I) = -\phi'(y^*).  $$
Since $ 0 < -\phi'(y^*) < \infty$, ${\mathbb P}_I $ satisfies the definition of the canonical probability distributions in \eqref{def:cont.can.dis}.
\end{proof}
%%%%%%%%%%%%%%%%%%%%%%%%%%%%%

%Applications

%%%%%%%%%%%%%%%%%%%%%%%%%%%%%

\section{Applications}\label{ch:application}

\subsection{Gibbs measure on the phase space} 
\label{sec:gibbs}
\begin{definition}
\label{def:phase.space.map}
Consider a  probability space $(\Omega, \mathcal{F}, \mathbb{P})$, let  $\mathbf{V} = (V_1, V_2, ..., V_n): \Omega \rightarrow \mathbb{R}^n$ be a measurable function   and let $\pi_1, \pi_2$, and $\pi$ be three projection maps defined on $\mathbb R^n $ such that
\begin{align}
\pi_1(\mathbf{V}) &= \mathbf{U} = (V_1, V_2, ..., V_k),\quad 
\pi_2(\mathbf{V}) = \mathbf{W} = (V_{k+1}, V_{k+2} ..., V_n),\quad 
\pi(\mathbf{V}) = \mathbf{V}.
\end{align}
Assume there exist measurable functions $e_1: \mathbb{R}^{k} \rightarrow \mathbb{R}^+$,  $e_2: \mathbb{R}^{n-k} \rightarrow \mathbb{R}^+$, and $e: \mathbb{R}^{n} \rightarrow \mathbb{R}^+$ such that
\begin{align*}
(e_1 \circ \pi_1) (\mathbf{V}) = e_1(\mathbf{U}), \quad 
(e_2 \circ \pi_2) (\mathbf{V}) = e_2(\mathbf{W}), \quad 
(e \circ \pi) (\mathbf{V}) &= e(\mathbf{V}). 
\end{align*}
Therefore, random variables and induced measures can be defined through the following maps:
\begin{align*}
\begin{split}
(\Omega, \mathcal{F}, \mathbb{P}) &\xrightarrow{\mathbf{V}} (\mathbb{R}^n, \mathcal{B}(\mathbb{R}^n), \mu)\xrightarrow{\pi_1} (\mathbb{R}^k, \mathcal{B}(\mathbb{R}^k), \nu_1) \xrightarrow{e_1}  (\mathbb{R}^+,  \mathcal{B}(\mathbb{R}^+), \lambda_1) \\
(\Omega, \mathcal{F}, \mathbb{P}) &\xrightarrow{\mathbf{V}} (\mathbb{R}^n, \mathcal{B}(\mathbb{R}^n), \mu)\xrightarrow{\pi_2} (\mathbb{R}^{n-k}, \mathcal{B}(\mathbb{R}^{n-k}), \nu_2) \xrightarrow{e_2}  (\mathbb{R}^+, \mathcal{B}(\mathbb{R}^+), \lambda_2).
\end{split} 
\end{align*}
\end{definition}

\begin{definition}
\label{def:additive.function}
Let $e_1 \circ \pi_1, e_2 \circ \pi_2 ,$ and $ e  \circ \pi$ be the functions given in Definition \ref{def:phase.space.map}. Define $e_1 \circ \pi_1$ and $e_2 \circ \pi_2$ to be {\em additive} on $\mathbf{V}$ if 
\begin{align}
e_1 \circ \pi_1(\mathbf{V} )+ \  e_2  \circ \pi_2 (\mathbf{V})= e  \circ \pi(\mathbf{V}).
\end{align}
\end{definition}

\begin{theorem}
\label{thm:gibbs}
Suppose $e_1 \circ \pi_1$ and $e_2 \circ \pi_2$ are additive on $\mathbf{V}$ and suppose   $e_1(\mathbf{U}), e_2(\mathbf{W})$ are continuous nonnegative independent random variables.
Denote $ X := e_1(\mathbf{U}), \ Y  := e_2(\mathbf{W}), \ Z :=e(\mathbf{V})$ and let $I = [h, h+\delta]$ be a finite interval. Assume the following conditions hold:
\begin{enumerate}
\item  $\mathbb{E} [X^2] = \epsilon_n^2,$ where $\epsilon_n\to 0 $. 
\item For all $ y\in \mathbb{R}^+,$ there exists a nonnegative integrable function $\Gamma  \in C^2(\mathbb{R}^+)$ such that 
\begin{align}
P \left(Y \in [y, y+ \delta] \right) = \frac{\displaystyle \int_y^{y+\delta} \Gamma(s) ds}{\displaystyle \int_{\mathbb{R}^+} \Gamma(s) ds   } \quad \text{and} \quad  \left|\frac{\partial^2 \log P\big( Y  \in [y, y + \delta ] \big)}{\partial y^2}\right| < \infty.
\label{app:condition}
\end{align}
\item 
$ I \subset supp(\Gamma) $ and $\Gamma'(y) \geq 0$, for $y \in I$. 
\end{enumerate}
Then we have
\begin{align}
\sup_{S \in \mathcal{B}(\mathbb{R}^+)} \left| \mathbb{P} \left(  e_1(\mathbf{U})  \in S \mid Z \in I \right) -  \int_{e_1(\mathbf{u}) \in S  }  A e^{-\psi (I) e_1(\mathbf{u})} \nu_1(d\mathbf{u}) \right| = O(\epsilon_n),  
\label{ex:boltzman.law}
\end{align}
where 
$\psi (I) = \displaystyle \frac{  \displaystyle \partial \log \int_y^{y+\delta} \Gamma(s) ds }{\partial y} \Biggr\rvert_{y=h}.$

\end{theorem}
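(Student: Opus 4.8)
The plan is to specialize Theorem~\ref{thm:theorem 1} to the present setting. By the additivity hypothesis (Definition~\ref{def:additive.function}) we have $Z=X+Y$, and by assumption $X,Y$ are independent nonnegative absolutely continuous random variables, $X$ with density $f_X$ and $Y$ with density $\Gamma(y)/\int_{\mathbb{R}^+}\Gamma(s)\,ds$ on $\mathrm{supp}(\Gamma)$. In the notation of Theorem~\ref{thm:theorem 1} I would use the unscaled identifications $X_n:=X$, $Y_n:=Y$, $Z_n:=Z$, $E_n:=I$, so that $a_n:=\mathbb{E}[X_n^2]=\epsilon_n^2=o(1)$; the scaling $X_n=\beta_n X$ with $\beta_n=o(1)$ enters the proof of Theorem~\ref{thm:theorem 1} only through its consequences $\mathbb{E}[X_n^2]\to0$ and $X_n\to0$ in probability, both of which hold here, so that proof applies unchanged. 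The decisive simplification is that independence forces
\[
P\big(Y\in[y,y+\delta]\mid X=x\big)=P\big(Y\in[y,y+\delta]\big)=\frac{\int_y^{y+\delta}\Gamma(s)\,ds}{\int_{\mathbb{R}^+}\Gamma(s)\,ds},
\]
with no dependence on $x$; hence the asymptotic-independence condition \eqref{eq:fyx2} holds trivially (take e.g.\ $b_n=n^{-1}$), and the quantity $\psi_n(I;\beta_n x)$ of Theorem~\ref{thm:theorem 1} collapses to the $x$- and $n$-independent constant $\psi(I)=\partial_y\log\int_y^{y+\delta}\Gamma(s)\,ds\big|_{y=h}$, the normalizing $\int\Gamma$ dropping under the logarithm; moreover $0\le\psi(I)<\infty$ because $I\subset\mathrm{supp}(\Gamma)$ and $\Gamma'\ge0$ on $I$.

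Next I would check the remaining hypotheses \eqref{eq:logsquare}--\eqref{eq:condforz} of Theorem~\ref{thm:theorem 1}. Since the conditional law of $Y$ equals the law of $Y$, the bounds in \eqref{eq:logsquare} and the pointwise part of \eqref{eq:fyx} become regularity/positivity statements about $y\mapsto P(Y\in[y,y+\delta])$: the second-derivative bounds follow from $\Gamma\in C^2$ and \eqref{app:condition}, while $P(Y\in[y,y+\delta])\ge\delta_1$ and $0\le\partial_y\log P(Y\in[y,y+\delta])\le C_3$ hold on a sufficiently small open interval $D\supset I$, since by continuity we may shrink $D$ so that $\Gamma\ge c_D>0$ and $\Gamma'\ge0$ on $\overline D$. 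For \eqref{eq:condforz} one uses $\epsilon_n\to0$: for $[z,z+\delta]\subset D$ and $0\le x\le\delta/2$ the shifted interval $[z-x,z+\delta-x]$ contains $[z,z+\delta-x]\subset D$, of length $\ge\delta/2$, on which $\Gamma\ge c_D$, so $P(Y\in[z-x,z+\delta-x])\ge c_D\delta/\big(2\int_{\mathbb{R}^+}\Gamma(s)\,ds\big)>0$ uniformly; since $X\to0$ in probability, $P(X\le\delta/2)\to1$, whence $P(Z\in[z,z+\delta])\ge P(X\le\delta/2)\cdot c_D\delta/\big(2\int_{\mathbb{R}^+}\Gamma(s)\,ds\big)\ge\delta_2>0$ for all large $n$. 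The finitely many small-$n$ exceptions are immaterial to an $O(\epsilon_n)$ statement.

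Applying Theorem~\ref{thm:theorem 1} now yields $D_{\textnormal{KL}}\big(\mathbb{P}^{(n)}_I\,\|\,\mathbb{Q}^{(n)}_I\big)=O(a_n+b_n)=O(\epsilon_n^2)$, where $\mathbb{Q}^{(n)}_I$ has density $f_{X\mid Z}(x;I)$ and, since $\psi_n$ has collapsed to $\psi(I)$, $\mathbb{P}^{(n)}_I$ has density $A f_X(x)e^{-\psi(I)x}$ with $A^{-1}=\int_{\mathbb{R}^+}f_X(x)e^{-\psi(I)x}\,dx$. Pinsker's inequality \eqref{eq:totalvariation} then gives
\[
\sup_{S\in\mathcal{B}(\mathbb{R}^+)}\big|\mathbb{Q}^{(n)}_I(S)-\mathbb{P}^{(n)}_I(S)\big|=\delta\big(\mathbb{P}^{(n)}_I,\mathbb{Q}^{(n)}_I\big)\le\sqrt{\tfrac12\,D_{\textnormal{KL}}\big(\mathbb{P}^{(n)}_I\,\|\,\mathbb{Q}^{(n)}_I\big)}=O(\epsilon_n).
\]
Finally I would identify the two measures in \eqref{ex:boltzman.law}: writing $X=e_1(\mathbf{U})$ and noting that the law $\lambda_1$ of $X$ is the pushforward $(e_1)_\ast\nu_1$ with density $f_X$, we have $\mathbb{P}\big(e_1(\mathbf{U})\in S\mid Z\in I\big)=\mathbb{Q}^{(n)}_I(S)$ and, by the change of variables $e_1$, $\int_{\{e_1(\mathbf{u})\in S\}}A e^{-\psi(I)e_1(\mathbf{u})}\,\nu_1(d\mathbf{u})=\int_{S}A e^{-\psi(I)x}f_X(x)\,dx=\mathbb{P}^{(n)}_I(S)$. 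Hence the left-hand side of \eqref{ex:boltzman.law} equals $\delta(\mathbb{P}^{(n)}_I,\mathbb{Q}^{(n)}_I)=O(\epsilon_n)$, which is the assertion.

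I expect the verification of \eqref{eq:condforz} to be the only step requiring genuine work, since it is the sole place where the smallness of the subsystem ($\epsilon_n\to0$) must be combined with the positivity of the heat-bath density near $I$; once independence is used to strip the $x$-dependence off $\psi_n$, all other hypotheses reduce to elementary consequences of the stated regularity of $\Gamma$, and the conclusion follows immediately from Theorem~\ref{thm:theorem 1} and Pinsker's inequality.
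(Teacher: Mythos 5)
Your proposal takes essentially the same route as the paper's own proof: both specialize Theorem~\ref{thm:theorem 1} with the direct identification $X_n=X$, $Y_n=Y$, $Z_n=Z$, use independence to strip the $x$-dependence from $\psi_n$ so it collapses to the stated constant $\psi(I)$, verify the remaining hypotheses from the regularity of $\Gamma$, pass through Pinsker's inequality, and finish with the change of variables $\lambda_1=(e_1)_*\nu_1$. The one place you diverge is the choice $b_n=n^{-1}$: since the theorem only assumes $\epsilon_n\to 0$ with no quantitative rate, $O(\sqrt{a_n+b_n})=O(\sqrt{\epsilon_n^2+n^{-1}})$ need not be $O(\epsilon_n)$ if $\epsilon_n$ decays faster than $n^{-1/2}$. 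Because independence makes the left side of~\eqref{eq:fyx2} identically zero, you should take $b_n=0$ (as the paper does) or, if you insist on positivity as literally demanded by the hypothesis, $b_n=\epsilon_n^2$; either choice makes your displayed bound $O(a_n+b_n)=O(\epsilon_n^2)$ correct as written. Your verification of~\eqref{eq:condforz} via $X\to 0$ in probability is actually more explicit than the paper's one-line remark, which is a welcome tightening; otherwise the two arguments are the same.
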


\begin{proof}
Since the functions $e_1 \circ \pi_1, e_2 \circ \pi_2$ are {\em additive} on $\mathbf{V}$, we have
\begin{align*}
 X + Y &= e_1(\mathbf{U}) + e_2(\mathbf{W})= (e_1 \circ \pi_1) (\mathbf{V}) + (e_2 \circ \pi_2) (\mathbf{V}) = (e \circ \pi) (\mathbf{V}) = e(\mathbf{V})= Z.
\end{align*}
Since $X+Y = Z$ and they are corresponding to $X_n, Y_n, Z_n$ in Theorem \ref{thm:theorem 1}, respectively, it suffices to show that all the conditions in Theorem \ref{thm:theorem 1} are satisfied for $X, Y$, and $Z$.
\begin{enumerate}
\item For all $y \in \mathbb{R}^+$, since $\Gamma(y) \in C^2(\mathbb{R}^+),$ $ \displaystyle \left| \frac{\partial^2  P\left( Y \in [ y, y+\delta]  \right)}{\partial^2 y } \right| $ exists and is bounded on $\mathbb{R}^+$. \\ And $\left| \displaystyle \frac{\partial^2 \log P\left( Y \in [ y, y+\delta]  \right)}{\partial^2 y } \right|$  is bounded on $\mathbb{R}^+$ by \eqref{app:condition}.  Therefore, \eqref{eq:logsquare} holds. 
\item Since $I \subset supp(\Gamma)$, there exists $\delta_1>0$ such that $P\left( Y\in I \right) \geq \delta_1$. And we can derive
\begin{align}
\frac{\partial \log P(Y \in [y, y+\delta])}{\partial y} \biggr\rvert_{y=h} = \frac{\Gamma(h+\delta) - \Gamma(h)}{ \displaystyle \int_h^{h+\delta} \Gamma(s) ds}.
\end{align}
Again, since $I \subset supp(\Gamma)$, and the nonnegative function $\Gamma(y) \in C^2(\mathbb{R}+)$, $\Gamma'(y) \geq 0$, for $y \in I$, we can check that there exists a positive constant $C$ such that
 \begin{align}
 0 \leq \frac{\partial \log P\big(Y \in [y,  y+ \delta] \big)}{ \partial y} \leq C \quad \text{for} \ [y, y+\delta] \subset I,
 \end{align}
hence \eqref{eq:fyx} holds for $D=I$. Furthermore, since $X$ and $Y$ are independent, $b_n = 0$. Therefore, \eqref{eq:fyx2} holds.
\item Since $X$ and $Y$ are supported on $\mathbb{R}^+,$ there exists $\delta_2 > 0$ such that 
$$ P\left( Z \in [z, z+\delta]\right) \geq \delta_2 \quad \text{for} \ [y, y+\delta] \subset \mathbb{R^+}, $$ hence \eqref{eq:condforz} holds.  

\end{enumerate}

Therefore, all of the conditions hold for $D=I$ in Theorem \ref{thm:theorem 1}, we can apply it with $a_n = \epsilon_n^2, \  b_n=0$, and Pinsker's inequality \eqref{eq:totalvariation} to get
\begin{align}
\sup_{S \in \mathcal{B}(\mathbb{R}^+)} \left| \mathbb{P} \left( X  \in S \mid Z \in I \right) - \int_{x \in S }  A e^{-\psi (I) x} f_X(x) dx \right| = O(\epsilon_n),  
\label{ex:boltzmann.law2}
\end{align}
where 
$\psi (I) = \displaystyle \frac{  \displaystyle \partial \log \int_y^{y+\delta} \Gamma(s) ds }{\partial y} \Biggr\rvert_{y=h}.$ Then applying a change of variables
\begin{align}
 \int_{x \in S } A e^{-\psi (I) x}f_{X}(x)dx &=  \int_{x \in S } A e^{-\psi (I) x} \lambda_1( dx)  = \int_{ e_1(\mathbf{u})\in S} A e^{-\psi (I) e_1(\mathbf{u})} \nu_1(d\mathbf{u})
\label{change_of_variable}
\end{align}
to \eqref{ex:boltzmann.law2}, we obtain Equation \eqref{ex:boltzman.law}. It completes the proof.
\end{proof}

In statistical mechanics, the induced measure $\nu_1(d\mathbf{u})$ in phase space is often considered as the Lebesgue measure $d\mathbf{u}$ normalized by the total volume of the phase space $\Lambda$ (Here we assume it is finite). Therefore, for the random vector $\mathbf U$, we have its density 
\begin{align}
 \hat{A} e^{-\psi (I) e_1(\mathbf{u})} \quad \text{with respect to} \ d\mathbf{u},
\label{Boltzmann.law}
\end{align}
where $\hat{A} = A / \Lambda$ is the corresponding normalization factor. 

The assumption $\nu_1(d\mathbf{u}) =  d\mathbf{u} / \Lambda$ for the phase space has already implied that all microstates are {\em equally probable} when the system is unconstrained. It is a reasonable {\em prior} probability for $\mathbf U$ by a symmetry of a physical system when we do not have any previous information about it. For the random variable $X$ (e.g. energy), its density $f_X(x)$ is referred to {\em prior} probability for $X$ when it is unconstrained. Based on the principle of equal a priori probabilities of microstates in the phase space, we can show that $f_X(x) = \gamma(x) / \Lambda$, where $\gamma(x)$ is the Lebseque measure of the surface area of microstates when the energy is fixed on $x$ (i.e. $e_1(\mathbf U) =x$). This can be verified by
\begin{align} 
\int_{ x \in S} f_X(x) dx = \int_{ e_1(\mathbf{u})\in S} \nu_1(d\mathbf{u})= \frac{1}{\Lambda}\int_{ e_1(\mathbf{u})\in S} d\mathbf{u} = \frac{1}{\Lambda} \int_{ x \in S} \gamma(x) dx \quad \text{for all} \ S \in \mathcal{B}(\mathbb{R}^+).
\end{align}
Note that $\gamma(x)$ is also known as the {\em structure function of $X$}. In Theorem \ref{thm:gibbs}, we also make the same assumption for $Y$: $f_Y(y) = \Gamma(y) / \Lambda$, where $\Gamma(y)$ is the {\em structure function of $Y$}.

Therefore, the density of $X$ can be written as
\begin{align}
 \hat{A} e^{-\psi (I) x} \gamma(x) \quad \text{with respect to} \ dx,
 \label{ex:boltzmann.law3}
\end{align}
which can be interpreted as a uniform prior biased by an exponential weight $e^{-\psi (I) x}$ when the system is conditioned on some extra information.  Note that Equation \eqref{Boltzmann.law} is known as the density of  Gibbs measure on the phase space and Equation \eqref{ex:boltzmann.law3} is known as the density of Gibbs measure on the energy of the system \cite{georgii2011gibbs}.

In the work of A. Ya. Khinchin \cite{aleksandr1949mathematical}, he assumed 
{\em conjugate distribution laws} for all systems. It is said that 
\begin{align}
f_X(x) = \frac{e^{- \alpha x} \gamma(x)  }{\int e^{- \alpha s} \gamma(s)  ds } \quad \text{and} \quad f_Y(y) = \frac{e^{- \alpha y} \Gamma(y)  }{\int e^{- \alpha s} \Gamma(s)  ds }
\label{ex:khinchin}
\end{align}
for some constant $\alpha$. Those priors are more general than the uniform prior and they have some nice properties, e.g., for a proper $\alpha$, it may guarantee integrability of $ e^{- \alpha s} \gamma(s)$ when $ \gamma(s)$ itself is not integrable. However, we can show that the choice of $e^{- \alpha x}$ term does not have influence on our results. Here is the proof sketch: Suppose $\delta = o(1)$,
\begin{align}
\hat{\psi}(I) : = \displaystyle \frac{  \displaystyle \partial \log P(Y  \in [y, y+\delta]) }{\partial y} \Biggr\rvert_{y=h}  &= \displaystyle \frac{  \displaystyle \partial \log \int_y^{y+\delta} \Gamma(s) e^{-\alpha s} ds }{\partial y} \Biggr\rvert_{y=h} \notag \\
&\approx  \displaystyle \frac{  \displaystyle \partial \log \int_y^{y+\delta} \Gamma(s) ds }{\partial y} \Biggr\rvert_{y=h} 
-  \alpha  = \psi(I) - \alpha.
\label{ex:khinchin2}
\end{align}
By \eqref{ex:khinchin} and \eqref{ex:khinchin2}, we have 
\begin{align}
A f_X(x) e^ {-\hat{\psi}(I) x} = \hat{A} \gamma(x)  e^ {-
\alpha  x} e^ {-(\psi(I) - \alpha)x} =  \hat{A} \gamma(x)   e^ {-{\psi(I)}x}.
\end{align}

Therefore, to choose priors as the structure functions multiplied by the exponential functions $e^{- \alpha x}$ for integrability is irrelevant to Gibbs measure. Indeed, it is the extra information (condition) giving rise to the exponential weight in Gibbs measure and the parameter of the exponential function is determined by
$$\psi (I) = \displaystyle \frac{  \displaystyle \partial \log \int_y^{y+\delta} \Gamma(s) ds }{\partial y} \Biggr\rvert_{y=h},$$
in which $\int_y^{y+\delta} \Gamma(s) ds $ represents the volume of microstates in the shell between $y$ and $y +\delta$. The logarithm of it is known as the entropy of $Y$, denoted by $S_Y(y)$, hence we have
\begin{align}
\psi (I)  = \frac{ \partial S_Y(y)}{ \partial y}\Biggr\rvert_{y=h}.
\label{ex:gibbs.temperature}
\end{align}
By Equation \eqref{ex:gibbs.temperature}, we can identify $  \frac{1}{\psi(I)}$ as the temperature defined in statistical mechanics \cite{huang1975statistical,landau1958statistical}.

\begin{remark}
We can extend Theorem \ref{thm:gibbs} to the model that the subsystem and its heat bath have strong interaction defined by  non-additivity of energy functions in statistical mechanics. 
Assume there exists a measurable function $e_3: \mathbb R^n \rightarrow \mathbb R^+$ such that
\begin{align*}
    (e_3 \circ \pi)(\mathbf{V}) = e_3(\mathbf{V}),
\end{align*}
which means that this energy function $e_3$ could depend on the whole vector $\mathbf{V} = (V_1, V_2, ..., V_n)$ in the phase space.
And suppose 
$$e_1 \circ \pi_1(\mathbf{V}) + e_2 \circ \pi_2(\mathbf{V}) + e_3 \circ \pi(\mathbf{V}) = e \circ \pi(\mathbf{V}),$$ 
in which the existence of the extra term $e_3 \circ \pi(\mathbf{V})$ means that $e_1 \circ \pi_1$ and $e_2 \circ \pi_2$ are not {\em additive} on $\mathbf{V}$ by Definition \ref{def:phase.space.map}. 
Denote that $ R := e_3(\mathbf{V})$. Recall that $\mathbf{V} = (\mathbf{U}, \mathbf{W})$ and
$ X = e_1(\mathbf{U}) = (e_1 \circ \pi_1)(\mathbf{V}), \ Y  = e_2(\mathbf{W}) = (e_1 \circ \pi_2)(\mathbf{V}), \ Z =e(\mathbf{V}) =(e \circ \pi)(\mathbf{V})$. Then we have 
\begin{align}
X + Y + R = Z.
\end{align}
In statistical mechanics, $R$ is known as the {\em interaction energy} caused by interaction between the subsystem and its heat bath. Based on this setup, we can define a new random variable $\hat{Y} := Y + R$, but $X,\hat{Y} $ are no longer independent since the random variable $R$ may depend on both $\mathbf{U}, \mathbf{W}$ in the phase space. If we modify the condition \eqref{app:condition} in Theorem \ref{thm:gibbs} to guarantee the existence and boundedness of
	\begin{align} 
	\left| \partial^{(k)}  P\big( \hat{Y}  \in [y, y + \delta ] \mid X = x \big) \right| \ \text{and} \  \left|\partial^{(k)} \log P\big( \hat{Y}  \in [y, y + \delta ] \mid X = x \big)\right|, \ \text{for} \ k=0,1,2,
	\end{align} 
in which the partial derivatives are with respect to both $x$ and $y$,
then we are able to apply Corollary \ref{cor:corollary 1} to this model. As the result \eqref{cor:parm.exp2} in Corollary \ref{cor:corollary 1}, this model with strong interaction will give rise to a new parameter $\phi(I)$ of the exponential weight which involves two terms: one is from  fluctuations of the energy of the ``new" heat bath $\hat{Y}$ (it combines  the energy of the heat bath $Y$ without interaction and the interaction energy  $R$);
and the other one is from the correlation of $X$ and $\hat{Y}$.
\end{remark}

\subsection{Integer-valued random variables and conditional Poisson distributions}
\label{sec:poisson}
In the following Theorem \ref{thm:d+d:iid}, we will show a limiting behavior of a sequence of conditional probabilities for a nonnegative integer-valued random variables $K$, which is conditioned on $K + \tilde{L}_n$, $\tilde{L}_n$ is a sequence of sums of i.i.d random variables $\xi_i$. This sequence of conditional probabilities has the same limiting behavior as its unconditional probability $P(K=k)$ weighted by an exponential factor. The most important result of this theorem is that the parameter of this exponential factor determined by a normal distribution rather than the distribution of $\xi_i$. By this result, we provide a very simple formula with an approximation error to approximate an intractable problem in calculating the conditional probability of an integer-valued random variable. And we give an example \ref{ex:poisson} to show an approximation formula for calculating the conditional probability of a Poisson random variable conditioned on the sum of that Poisson random variable with another Poisson random variable.

\begin{theorem}
\label{thm:d+d:iid}
Let $K$ be a nonnegative integer-valued random variable with $\mathbb{E} [K] < \infty $. Let $\tilde L_n = \sum_{i=1}^n \xi_i,$ where $\{\xi_i\}_{i=1}^n$ are nonnegative  i.i.d. random variables. $K$ and $\tilde L_n$ are independent and denote $\tilde H_n := K+ \tilde L_n $. 
Let $\mu = \mathbb{E} [\xi_i]$,  $\sigma^2=\textnormal{Var}(\xi_i)$ and assume $\mathbb{E}[(\xi_i - \mu)^3] < \infty$. And let 
$$ B_n = \sum_{k=0}^{\infty} \frac{1}{P(K= k) \exp{ \left( \frac{ -\psi(I) k }{\sqrt{n}} \right) }}, \quad \psi(I) = \frac{\partial \log P \big( Y \in [y, y + \delta] \big)}{\partial y} \biggr\rvert_{y=-h}, \quad \text{and} \  Y \sim N(0, \sigma^2).$$
For every fixed finite interval $I = [-h, -h + \delta], \ h, \delta \in \mathbb{R}^+$, $-h + \delta \leq 0$, and $2 \delta / \sigma^2 < \psi(I)$,
\begin{align} \label{integer.rate.of.conv}
\sup_k \bigg\lvert  P(K = k \mid  \tilde H_n \in n\mu + \sqrt{n} I  ) - B_n P(K = k) \exp{ \left( \frac{ -\psi(I) k }{\sqrt{n}} \right) }  \bigg\rvert = O(\frac{1}{\sqrt{n}}).
\end{align}

\end{theorem}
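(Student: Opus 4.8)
The plan is to compute the conditional law exactly by Bayes' rule, use independence of $K$ and $\tilde L_n$ to reduce everything to window probabilities of the i.i.d.\ sum $\tilde L_n$, replace those by Gaussian window probabilities via the Berry--Esseen theorem (Theorem~\ref{thm:berry-esseen}), and read off the exponential factor from a second-order Taylor expansion of $y\mapsto\log P(Y\in[y,y+\delta])$ at $y=-h$. This is the mechanism of Theorem~\ref{thm:d+d}, specialized to $\beta_n=1/\sqrt n$, $\mu_n=n\mu$, heat bath $Y_n=(\tilde L_n-n\mu)/\sqrt n$, and $Z_n=K_n+Y_n=(\tilde H_n-n\mu)/\sqrt n$ so that \eqref{eq:domain} holds with $c_n=0$, with $b_n=0$ by independence; the point of the $1/\sqrt n$ scaling is that $\mathbb E[K]<\infty$ then suffices in place of $\mathbb E[K^2]<\infty$.

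First I would write, with $K_n:=K/\sqrt n$ and using independence of $K$ and $\tilde L_n$,
\[
P\big(K=k\mid \tilde H_n\in n\mu+\sqrt n I\big)=\frac{P(K=k)\,\pi_{n,k}}{\sum_{j\ge 0}P(K=j)\,\pi_{n,j}},\qquad
\pi_{n,k}:=P\!\left(\frac{\tilde L_n-n\mu}{\sqrt n}\in I-\frac{k}{\sqrt n}\right).
\]
Applying Theorem~\ref{thm:berry-esseen} to $(\tilde L_n-n\mu)/(\sigma\sqrt n)$, legitimate since $\mathbb E|\xi_1-\mu|^3<\infty$, gives $\pi_{n,k}=P(Y\in I-k/\sqrt n)+O(1/\sqrt n)$ with the $O(1/\sqrt n)$ \emph{uniform in $k$}, where $Y\sim N(0,\sigma^2)$. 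Set $g(y):=\log P(Y\in[y,y+\delta])$; since $y\mapsto P(Y\in[y,y+\delta])$ is the convolution of the log-concave functions $\mathbf{1}_{[0,\delta]}$ and the $N(0,\sigma^2)$ density, $g$ is concave and $C^\infty$ with $g''$ bounded on $(-\infty,0]$, and the hypotheses $-h+\delta\le0$ and $2\delta/\sigma^2<\psi(I)$ put $-h$ in the interior of an open interval $D$ on which $g$ is increasing, matching the conditions of Theorem~\ref{thm:theorem 1} and giving $0<\psi(I)=g'(-h)<\infty$. Since $I-k/\sqrt n=[-h-k/\sqrt n,\,-h-k/\sqrt n+\delta]$ and $P(Y\in[y,y+\delta])=e^{g(y)}$, a Taylor expansion of $g$ at $-h$ gives
\[
P(Y\in I-k/\sqrt n)=e^{g(-h-k/\sqrt n)}=P(Y\in I)\,e^{-\psi(I)k/\sqrt n}\,e^{\frac12 g''(\zeta_{n,k})k^2/n},\qquad\zeta_{n,k}\in[-h-k/\sqrt n,\,-h],
\]
so by $g''\le 0$ one has $0\le P(Y\in I-k/\sqrt n)\le P(Y\in I)e^{-\psi(I)k/\sqrt n}$ with discrepancy at most $\tfrac12\|g''\|_\infty\,k^2n^{-1}e^{-\psi(I)k/\sqrt n}$; this is the desired exponential weight, and $\psi(I)=g'(-h)=\partial_y\log P(Y\in[y,y+\delta])|_{y=-h}$ is its parameter.

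It then remains to compare normalizations. Let $N_n:=\sum_jP(K=j)\pi_{n,j}$ and $\widetilde N_n:=P(Y\in I)\sum_jP(K=j)e^{-\psi(I)j/\sqrt n}$; since $B_n$ is the normalizing constant, the target is $P(K=k)P(Y\in I)e^{-\psi(I)k/\sqrt n}/\widetilde N_n$. Both $N_n$ and $\widetilde N_n$ are $\le 1$ and, for all large $n$, bounded below by a positive constant $c$ (use the leading term of each series and $\pi_{n,k}\to P(Y\in I)>0$). The elementary bounds $te^{-\psi(I)t}\le(\psi(I)e)^{-1}$ and $t^2e^{-\psi(I)t}\le\sqrt n\,t(\psi(I)e)^{-1}$ at $t=j/\sqrt n$, together with $\sum_jjP(K=j)=\mathbb E[K]<\infty$ and $M_K:=\sup_jjP(K=j)<\infty$, give $|N_n-\widetilde N_n|=O(1/\sqrt n)$: the Berry--Esseen errors sum to $O(1/\sqrt n)$ because $\sum_jP(K=j)=1$, and the Taylor remainders contribute at most $\tfrac12\|g''\|_\infty n^{-1}\sum_jP(K=j)j^2e^{-\psi(I)j/\sqrt n}\le\|g''\|_\infty\mathbb E[K](2\psi(I)e\sqrt n)^{-1}$. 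Clearing denominators, for each $k$,
\[
\left|\frac{P(K=k)\pi_{n,k}}{N_n}-\frac{P(K=k)P(Y\in I)e^{-\psi(I)k/\sqrt n}}{\widetilde N_n}\right|
\le c^{-2}\,P(K=k)\Big(\big|\pi_{n,k}-P(Y\in I)e^{-\psi(I)k/\sqrt n}\big|+e^{-\psi(I)k/\sqrt n}\,|N_n-\widetilde N_n|\Big),
\]
and the right side is $O(1/\sqrt n)$ uniformly in $k$: in the first term $P(K=k)\big|\pi_{n,k}-P(Y\in I)e^{-\psi(I)k/\sqrt n}\big|\le P(K=k)\,O(1/\sqrt n)+\tfrac12\|g''\|_\infty\,kP(K=k)(\psi(I)e\sqrt n)^{-1}\le(1+M_K)\,O(1/\sqrt n)$, and the second term is $O(1/\sqrt n)$ since $|N_n-\widetilde N_n|=O(1/\sqrt n)$. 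Taking $\sup_k$ yields \eqref{integer.rate.of.conv}.

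The hard part is this last uniform-in-$k$ control: a crude per-$k$ comparison of $P(Y\in I-k/\sqrt n)$ with $P(Y\in I)e^{-\psi(I)k/\sqrt n}$ has error only $O(1)$, so one must play the exponential decay $e^{-\psi(I)k/\sqrt n}$ against the polynomial factors $k$ and $k^2$ and exploit $\mathbb E[K]<\infty$ — equivalently $kP(K=k)\to0$ and $\sum_kkP(K=k)<\infty$ — to reduce everything to $O(1/\sqrt n)$. The condition $2\delta/\sigma^2<\psi(I)$ is the smallness condition on the window that, with $-h+\delta\le0$, keeps $-h$ strictly inside the region where $g$ is increasing, hence keeps $\psi(I)$ positive and the remainder estimates uniform. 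A secondary technical step is verifying the log-concavity and $C^2$-boundedness of $y\mapsto P(Y\in[y,y+\delta])$, which makes the Taylor expansion legitimate and lets Theorem~\ref{thm:theorem 1} apply in the first place.
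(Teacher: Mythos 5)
Your proof is correct, and while it rests on the same basic mechanism as the paper's (Bayes, Berry--Esseen, Taylor expansion of $y\mapsto\log P(Y\in[y,y+\delta])$ at the left endpoint of $I$), the route is genuinely different in a way that matters. The paper first introduces an auxiliary Gaussian heat bath $\tilde Y_n\sim N(n\mu,n\sigma^2)$ so that $Z_n:=K_n+Y_n$ has exact normal noise, compares $H_n$ to $Z_n$ via Corollary~\ref{cor:berry-esseen-slusky} to get the $O(1/\sqrt n)$ correction, and then invokes the abstract discrete approximation Theorem~\ref{thm:d+d}, which requires checking all the conditions of Theorem~\ref{thm:theorem 1} and in particular the modified boundedness of the weighted second derivative worked out in Appendix~\ref{appendix:prof.normal}. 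You skip the machinery and carry out the comparison $\pi_{n,k}\to P(Y\in I)e^{-\psi(I)k/\sqrt n}$ directly, using log-concavity of $P(Y\in[\cdot,\cdot+\delta])$ to sign the Taylor remainder and the elementary inequality $te^{-\psi(I)t}\le(\psi(I)e)^{-1}$ to trade the factor $k^2/n$ for $k/\sqrt n$, so that $\mathbb E[K]<\infty$ is all that is needed. This buys two things. First, the paper's route through Theorem~\ref{thm:d+d} tacitly uses $\mathbb E[K^2]<\infty$ (the proof computes $a_n=\mathbb E[K_n^2]=\mathbb E[K^2]/n$), which is not part of the stated hypothesis $\mathbb E[K]<\infty$; your direct argument closes that gap. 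Second, in your approach the condition $2\delta/\sigma^2<\psi(I)$ plays no essential role: $g''$ is bounded on $(-\infty,0]$ intrinsically (concavity plus $g''\to-1/\sigma^2$ at infinity), whereas the paper needs $2\delta/\sigma^2<\psi(I)$ to tame the individually unbounded terms in $g''$ against the exponential weight in the Appendix. One small inaccuracy worth noting: you attribute the interiority of $-h$ in the increasing region of $g$ to both hypotheses, but in fact $-h+\delta\le 0$ alone gives $-h<-\delta/2$, which is already interior to the region $\{g'>0\}=(-\infty,-\delta/2)$; the $2\delta/\sigma^2<\psi(I)$ condition is doing work only in the paper's version of the remainder estimate, not in yours. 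The constant in your denominator-clearing step should be $c^{-1}$ on the first term rather than $c^{-2}$ on both, but this is inessential.
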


\begin{proof}
Let  $K_n := \frac{K}{\sqrt{n}}$, $L_n :=\frac{\tilde L_n - n\mu}{  \sqrt{n}}$ and $ H_n: =\frac{\tilde H_n - n\mu}{\sqrt{n}}$. We have $
K_n + L_n = H_n.
$
By the Central Limit Theorem,  $L_n$ converges in distribution to $Y$. Furthermore, since $(\xi_i-\mu)$ has finite second and third moments, by Berry-Esseen Theorem \ref{thm:berry-esseen}, 
\begin{align}\label{eq:BEbound}
    \sup_{k}\left|P_{L_n}\left( I - \frac{k}{\sqrt{n}} \right)-P_{Y}\left( I - \frac{k}{\sqrt{n}}\right)\right|=O\left( \frac{1}{\sqrt n}\right).
\end{align}
Since $\mathbb E \left[K_n \right]\to 0$, we have $K_n$ converges to $0$ in probability.   By Slutsky's Theorem \ref{thm:slutsky}, $H_n$ converges to $Y$ in distribution. By Corollary \ref{cor:berry-esseen-slusky}, we can also get

\begin{align} \label{eq:Slutsky}
   P_{H_n}\left( I \right)=  P_{Y}\left(I \right) + O\left(\frac{1}{\sqrt{n}}\right).
\end{align}
By  \eqref{eq:BEbound} and \eqref{eq:Slutsky},
\begin{align}
P_{K \mid \tilde H_n} \left(k ; n\mu + \sqrt{n}I \right)  &= P_{K \mid H_n } \left( k ; I \right) 
= P_K(k) \frac{P_{L_n}\left( I - \frac{k}{\sqrt{n}}\right) }{    P_{H_n}\left( I \right)  } = P_K(k) \frac{P_{Y}\left( I - \frac{k}{\sqrt{n}}\right)+ O\left(\frac{1}{\sqrt{n}}\right)    }{   P_{Y}\left( I\right)  + O\left(\frac{1}{\sqrt{n}}\right)  } \notag\\
&= P_K(k) \frac{P_{Y}\left( I - \frac{k}{\sqrt{n}}\right) }{  P_{Y}\left(I \right)  } + O\left(\frac{1}{\sqrt{n}}\right),
\label{dis.ex.approx.1}
\end{align}
in which we use the fact $Y \sim N(0, \sigma^2)$ and $ P(-  h  \leq  Y \leq  -h + \delta   )$ is bounded from below.  Moreover, since $P_K(k)\leq 1$, the  term $O\left(\frac{1}{\sqrt{n}}\right)$ in \eqref{dis.ex.approx.1} is independent of $k$. Let $\tilde{Y}_n \sim N( n \mu , n \sigma^2)$ and $ \tilde Z_n := K + \tilde Y_n$. Then we have  
\begin{align}\label{eq:YW}
K_n + Y_n  = Z_n, \quad\text{where} \quad Y_n:=\frac{\tilde{Y}_n - n\mu} { \sqrt{n}} \  \text{and} \  Z_n:= \frac{\tilde{Z}_n - n\mu}{\sqrt{n}}.
\end{align}
Note that ${Y}_n = Y \sim N(0, \sigma^2)$ and $Z_n$ converges in distribution to $Y$. Similar to \eqref{dis.ex.approx.1},
\begin{align}
& P_{K \mid \tilde{Z}_n} \big( k ; n\mu + \sqrt{n}I \big) 
=  P_K(k) \frac{P_{Y}\left( I - \frac{k}{\sqrt{n}}\right) }{  P_{Y}\left( I\right)  } + O\left(\frac{1}{\sqrt{n}}\right).
\label{dis.ex.approx.2}
\end{align}
Applying the triangle inequality to \eqref{dis.ex.approx.1} and \eqref{dis.ex.approx.2}, we finally obtain
\begin{align} \label{integer.tria.1}
\sup_{k}\biggr\lvert  P_{K \mid  \tilde H_n} \left( k ;  n\mu + \sqrt{n}I \right) - P_{K \mid  \tilde{Z}_n} \left(k ; n\mu + \sqrt{n} I \right) \biggr\rvert = O(\frac{1}{\sqrt{n}}).
\end{align}
Now, it remains to show that the convergence rate of
\begin{align}\label{eq:547}
 \sup_k \biggr\lvert P_{K \mid  \tilde{Z}_n} \left(k ; n\mu + \sqrt{n} I \right) -  B_n P_K(k) \exp{ \left( \frac{ -\psi(I) k }{\sqrt{n}} \right) }  \biggr\rvert.
\end{align} 
Then it suffices to show that all the conditions in Theorem \ref{thm:theorem 1}   are satisfied for $K_n, Y_n, Z_n$, then we can apply Theorem \ref{thm:d+d}.

First, we can check that $\mathbb{E}[K^2_n] = a_n, \ a_n = o(1)$:
\begin{align}
	 \mathbb{E}[K_n^2]& = \frac{1}{n} \mathbb{E}[K^2] = O\left(\frac{1}{n} \right).
\end{align}	 

Second, by change of variables, 
\begin{align}
 P_{K \mid  \tilde{H}_n} \left(k ; n\mu + \sqrt{n}I \right)  = P_{K_n \mid H_n} \left(  \frac{k}{\sqrt n} ; I\right).
\end{align}
And we can define the set $S$ in terms of the value for $K$ as below:
\begin{align*}
    S =\left\{k : k \in\mathbb N, \ \mathbb P(K=k)> 0 \right\}
\end{align*}
such that for all $k \in S$, $P(K_n = \frac{k}{\sqrt n}  ) >0$.
% Note: $X_n : \Omega \rightarrow S$, where $S = \mathbb{N}/\sqrt{n}$.
% Choose a fixed constant  $d>0$ and let $c_n = \max[- d , 0] $ and $d_n =  d$. 
Choose $d> 0 $ such that  $I =  [-h, -h+\delta] \subseteq D = (-d,0) $. Below we follow every steps in Theorem \ref{thm:theorem 1} with slight modifications:
% We have to show $X_n, \tilde{Y}_n, \tilde{W}_n$ satisfied with condtions on theorem \ref{theorem 3}.
\begin{enumerate}
\item   For all $y\in  \mathbb{R} $,
	 ${Y}_n = Y \sim N(0, \sigma^2)$, by the formula of the density of normal distribution, we have
	\begin{align} \label{normal.dis.bound1}
	    \frac{\partial^{2} P\big({{Y}} \in [y, y+\delta]\big)}{ \partial y^2} = f'_Y(y+\delta) - f'_Y(y)  
	\end{align} 
	and 
	\begin{align} \label{normal.dis.bound2}
	     \frac{\partial^{2} \log   P\big( {{Y}} \in [y, y+\delta]\big) }{ \partial y^2} = \frac{f'_Y(y+\delta) - f'_Y(y) }{P\big( {{Y}} \in [y, y+\delta]\big)} - \left( \frac{f_Y(y+\delta) - f_Y(y) }{P\big( {{Y}} \in [y, y+\delta]\big)}\right)^2,
	\end{align}
	so we can check \eqref{normal.dis.bound1} exist and are uniformly bounded. For \eqref{normal.dis.bound2}, we modify the boundedness slightly and the details of proof are provided in Appendix \ref{appendix:prof.normal}.
	Therefore, \eqref{eq:logsquare} with a slight modification holds. 
	
	\item  Since  ${Y}_n = Y \sim N(0, \sigma^2)$, there exist positive constants $\delta_1$ and $C$ depending on $y$ such that $P\big( {{Y}} \in [y, y+\delta]\big)\geq \delta_1$ and  $0 \leq \displaystyle\frac{\partial \log P\big( {{Y}} \in [y, y+\delta]\big)}{ \partial y} \leq C $ for every $[y, y+\delta] \subset D$. Therefore \eqref{eq:fyx} holds. Since $K_n $ and ${Y}_n $ are independent, we have $b_n=0$. Therefore \eqref{eq:fyx2} holds.
    
    	\item  Since $Z_n \rightarrow Y$ in distribution where $Y \sim N(0, \sigma^2)$, there exists $\epsilon_n(z) \rightarrow 0 $ such that \[P \big( {Z_n} \in [z, z+\delta]\big) = P \big( Y \in [z, z+\delta]\big) + \epsilon_n(z).\] 
    	Since $P \big( Y \in  [z, z+\delta]\big)$ is bounded from below for $[z, z+\delta] \subset D$, there exists a positive constant $\delta_2(z)$ such that  $P\big(Z_n \in [z, z+\delta]\big) \geq \delta_2>0$ for all $[z, z+\delta] \subset D$. Then the second inequality in \eqref{eq:condforz} holds. 

 \label{condition5}
\end{enumerate}
%  Since all of the conditions are hold for arbitrary $d > 0$, Theorem \ref{thm:d+d} can be applied to any finite interval $I= [-h, -h +\delta], \ h, \delta \in \mathbb{R}^+$ and $-h + \delta \leq 0$.
% Therefore, for this $I$ \todo{I don't understand this line, your $I$ is fixed at the beginning. This line is not needed},  
To apply Theorem \ref{thm:d+d}, we then obtain
\begin{align}
\sup_{k \in S} \biggr\lvert P_{K_n \mid {Z_n} } \left( \frac{k}{\sqrt n} ; I \right) - B_n P_{K_n}\left(\frac{k}{\sqrt n} \right) \exp{\left(-\psi(I)\frac{k}{\sqrt n}\right)}    \biggr \rvert = O(\frac{1}{n}),
\end{align}
where 
\begin{align}
\psi(I) = \frac{\partial \log P_Y \big( [y, y+\delta]\big)}{\partial y} \biggr\lvert_{y=-h} \quad \text{and} \quad Y \sim N(0, \sigma^2).
\end{align}
By change of variable, we then obtain
\begin{align} \label{integer.tria.2}
\sup_{k} \biggr\lvert P_{K \mid \tilde{Z}_n }\big(k ; n\mu + \sqrt{n}I \big) - B_nP_{K}( k ) \exp{\left( \frac{-\psi(I) k}{\sqrt{n}} \right)}    \biggr \rvert = O(\frac{1}{n}),
\end{align}
where \[B_n = \frac{1}{\sum_{k\in S} P_{K_n}( k/\sqrt{n} ) \exp{(-\psi(I)k/\sqrt{n})}} =\frac{1}{ \sum_k P_K( k ) \exp{(-\psi(I)k/\sqrt{n})}}.\]
By applying triangle inequality to \eqref{integer.tria.1} and \eqref{integer.tria.2}, we can obtain \eqref{integer.rate.of.conv} in the theorem.

\end{proof}

Finally we apply Theorem \ref{thm:d+d:iid} to a concrete example.

\begin{example}
\label{ex:poisson}
Let $\lambda,\mu>0$ be two constants. Consider two independent random variables $K \sim \text{Pois} (\lambda)$ and $\tilde L_n\sim \text{Pois} (n\mu)$.   Let $ \tilde H_n := K + \tilde L_n$. For every fixed finite interval $I$ which follows from Theorem \ref{thm:d+d:iid}, we can show that
 \begin{align*}
\sup_k \bigg\lvert  P \left( K = k \mid  \tilde H_n \in n\mu + \sqrt{n} I \right) - B_n P(K= k) \exp{ \left( \frac{ -\psi(I) k }{\sqrt{n}} \right) }  \bigg\rvert = O(\frac{1}{\sqrt{n}}),
\end{align*}
where  $\displaystyle
B_n = \sum_{k=0}^{\infty} \frac{1}{P(K= k) \exp{ \left( \frac{ -\psi(I) k }{\sqrt{n}} \right) }}
$\quad and \quad
$\displaystyle
\psi(I) = \frac{\partial \log P \big( Y \in [y, y+\delta] \big)}{\partial y} \biggr\rvert_{y=-h}   , \quad  Y \sim N(0, \mu).
$
\end{example}

\begin{proof}
By the property of Poisson random variables, we can decompose $\tilde L_n$ as  $\tilde L_n=\sum_{i=1}^n \xi_i$, where $\{\xi_i,1\leq n\}$ are independent Poisson random variables with mean $\mu$ and variance $\mu$.  We can check that all conditions are satisfies in Theorem \ref{thm:d+d:iid}. Hence Theorem \ref{thm:d+d:iid} can be applied.
\end{proof}

\subsection{Emergence of temperature (conditioned on the scale of large deviations)}
\label{sec:em.temp.ldp}

In this section, we define the parameter $ \frac{1}{\varphi(I)} $ in the exponential function $e^{- \varphi(I) x }$ as the {\em temperature} of the canonical distribution. Consider a sequence of conditional probabilities for a function of a subsystem represented by $X$ in contact with its heat bath represented by $\tilde{Y_n} = \sum_{i=2}^{n} X_i$, where $X_i$ are i.i.d. and $X_i$ has the same distribution as $X$, and $X_i$, $X$ are independent. Suppose that the total energy  $\tilde Z_n = X + \tilde{Y_n}$ is conditioned on the scale of large deviations from its mean, we will show that the {\em temperature}  $ \frac{1}{\varphi(I)} $ is an emergent parameter uniquely determined by the rate function of $\frac{\tilde{Y}_n}{n} $.

\begin{definition}
\label{def:sumiid.ldp}
Let $X$ be a nonnegative and nonconstant continuous random variable with $\mathbb{E}[X^4] < \infty,$ and let $\tilde Y_n := \sum_{i=2}^{n} X_i,$  where all random variables in $\{X_i\}_{i=2}^{n}\cup \{X\}$ are i.i.d.. Denote $\tilde Z_n := X + \tilde Y_n$.  Consider an interval $ I = [d, d+\delta], \ d \in \mathbb{R}, \ \delta>0$ with $\mathbb E [X] \notin I$, and a function $\varphi : I \rightarrow \mathbb{R}$ such that $0<\varphi(I)< \infty$.  Let $\mathbb P_{I}$ be a probability measure with density function $\displaystyle A f_{X}(x) e^{-\varphi(I)x} $, where \[\displaystyle \frac{1}{A}=  \int_{\mathbb{R}^+} f_{X}(x) e^{-\varphi(I)x}dx. \] 
Let $\mathbb Q^{(n)}_I$ be a sequence of probability measures with density functions $f_{X| \tilde Z_n}(x; nI)$. 
\end{definition}

\begin{theorem}
\label{thm:iidsum}
Denote ${Y}_n := \frac{\tilde Y_n}{n},  {X}_n := \frac{X}{n}, $ and ${Z}_n := {X}_n + {Y}_n$, and let $I - \frac{x}{n} = \{ y -  \frac{x}{n} , \  y \in I \}.$  Assume the following conditions hold: 
\begin{enumerate}
\item \label{cond:sumiid1} $\displaystyle \left| \frac{f_{X \mid Z_n}\left(x;  I  \right)  }{ f_{X}(x)} \right| $ is uniformly bounded on $ \mathbb{R}^+$.
\item \label{cond:sumiid2} $\left| \log P_{{Y}_n}(I  )   - \log P_{{Z}_n }(I ) \right|  \ \text{converges to a finite constant as } n\to\infty.$
\item \label{cond:sumiid3} There exists a function $\phi(y) \in C^2(D),$ where $D$ is an open interval containing $I$, with $ -\infty < \phi'(y) < 0, \ \text{for} \ y \in I$, such that 
\begin{align}
\log P_{{Y}_n}\left(I- \frac{x}{n} \right) = - n \phi\left(y^* -  \frac{x}{n}\right) + s_n\left(I-  \frac{x}{n}\right), \quad  \text{for} \ I-\frac{x}{n} \subset D,
\end{align}
where $\displaystyle y^* =  \{ y: \inf_{y \in I } \phi(y) \},$ $\displaystyle \left| \frac{s_{n}( I - \frac{x}{n}) -  s_n(I)}{ s_n(I)}  \right|  = O \left( \frac{x}{n} \right),$ and $ \left| s_n(I')\right|  = o(n)$ for all $I' \subset D$.
\end{enumerate}
Then 
\begin{align}
D_{\textnormal{KL}}\left(\mathbb P_I~ \|~ \mathbb Q^{(n)}_I \right)
  \rightarrow 0 \quad \quad \text{if and only if} \quad \quad  \varphi(I) = -  \phi'(y^*),  \quad  y^* =  \{ y: \inf_{y \in I } \phi(y) \}.
 \label{ifonlyif:lpd}
\end{align}

\end{theorem}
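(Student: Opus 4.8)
The plan is to deduce Theorem~\ref{thm:iidsum} from Lemma~\ref{thm:unique1}, applied with $\beta_n=1/n$ and $\mu_n=0$, so that $X_n=X/n$, $Y_n=\tilde Y_n/n$, $Z_n=\tilde Z_n/n$, with the Borel set $E=I$ and $\psi(E)=\varphi(I)$. Since $\{\tilde Z_n\in nI\}=\{Z_n\in I\}$ we have $f_{X\mid\tilde Z_n}(x;nI)=f_{X\mid Z_n}(x;I)$, so $\mathbb Q^{(n)}_I$ and $\mathbb P_I$ are exactly the measures $\mathbb Q^{(n)}_E$, $\mathbb P_E$ of that lemma. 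Under this dictionary Lemma~\ref{thm:unique1} says $D_{\textnormal{KL}}(\mathbb P_I\,\|\,\mathbb Q^{(n)}_I)\to 0$ iff $\varphi(I)$ equals the slope $c$ in the linear approximation \eqref{cond:linear.approx}; so the entire task is to produce that approximation with $c=-\phi'(y^*)$ and to check the boundedness hypothesis \eqref{cond:temp state 1} of the lemma (which asks for two uniform bounds).

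First I would use independence of $X$ and $\tilde Y_n$: Bayes' theorem gives $f_{X\mid\tilde Z_n}(x;nI)=f_X(x)\,P_{Y_n}(I-x/n)/P_{Z_n}(I)$, hence
\begin{align*}
\log\!\left(\frac{f_{X\mid\tilde Z_n}(x;nI)}{f_X(x)}\right)=\log P_{Y_n}\!\left(I-\tfrac{x}{n}\right)-\log P_{Z_n}(I).
\end{align*}
On a window $I_n=[0,d_n]$ with $d_n$ of order $n$, chosen small enough that $I-\tfrac{x}{n}\subset D$ for all $x\in I_n$ (so in particular $d_n\to\infty$), Condition~\eqref{cond:sumiid3} rewrites $\log P_{Y_n}(I-x/n)=-n\phi(y^*-x/n)+s_n(I-x/n)$, and a second-order Taylor expansion of $\phi\in C^2(D)$ at $y^*$ gives $-n\phi(y^*-x/n)=-n\phi(y^*)+\phi'(y^*)x-\tfrac12\phi''(\xi_n(x))\tfrac{x^2}{n}$ for some $\xi_n(x)$ between $y^*-x/n$ and $y^*$. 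Because $\log P_{Z_n}(I)=\log P_{Y_n}(I)+O(1)$ by Condition~\eqref{cond:sumiid2} and $\log P_{Y_n}(I)=-n\phi(y^*)+s_n(I)$ by Condition~\eqref{cond:sumiid3} at $x=0$, all $x$-free terms collapse into one bounded constant $b_n$, and we obtain, on $I_n$,
\begin{align*}
\log\!\left(\frac{f_{X\mid\tilde Z_n}(x;nI)}{f_X(x)}\right)=b_n-c\,x+q_n(x),
\end{align*}
with slope $c=-\phi'(y^*)$ and remainder $q_n(x)=-\tfrac12\phi''(\xi_n(x))\,\tfrac{x^2}{n}+\big(s_n(I-\tfrac{x}{n})-s_n(I)\big)$. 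Here $c=-\phi'(y^*)\in(0,\infty)$ since $\phi'<0$ on $I$ (Condition~\eqref{cond:sumiid3}), consistently with $0<\varphi(I)<\infty$; if a genuinely $n$-free constant is wanted in \eqref{cond:linear.approx}, replace $b_n$ by its limit and push the vanishing difference into $q_n$.

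It remains to verify the hypotheses of Lemma~\ref{thm:unique1}. The condition $\mathbb E[q_n(X)^2]\to 0$ is the crux: on $I_n$ the term $\tfrac12\phi''(\xi_n(x))\tfrac{x^2}{n}$ is $O(x^2/n)$ because $\phi''$ is bounded on the compact subset of $D$ swept by $\xi_n(x)$, so its $L^2(f_X)$-norm is $O\big(n^{-1}\mathbb E[X^4]^{1/2}\big)\to 0$ — this is precisely where $\mathbb E[X^4]<\infty$ is used, and it is also why $d_n$ cannot exceed order $n$ (a larger window lets $\xi_n(x)$ escape $D$); the term $s_n(I-x/n)-s_n(I)$ is bounded by $C\tfrac{x}{n}|s_n(I)|=o(1)\cdot x$ by the last clause of Condition~\eqref{cond:sumiid3} and $|s_n(I)|=o(n)$, so its $L^2(f_X)$-norm is $o(1)\cdot\mathbb E[X^2]^{1/2}\to 0$. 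For the boundedness hypothesis \eqref{cond:temp state 1}: uniform boundedness of $f_{X\mid\tilde Z_n}(x;nI)/f_X(x)$ is Condition~\eqref{cond:sumiid1}, while uniform boundedness of $e^{-\xi x}\log(f_{X\mid\tilde Z_n}(x;nI)/f_X(x))$ follows on $I_n$ from the displayed decomposition (exponential decay absorbs the polynomial terms and $b_n$ is bounded) and off $I_n$ from Condition~\eqref{cond:sumiid1} together with the structural fact that $\tilde Y_n\ge 0$ confines the conditional law of $X$ to $[0,n(d+\delta)]$. Granting this, Lemma~\ref{thm:unique1} yields $D_{\textnormal{KL}}(\mathbb P_I\,\|\,\mathbb Q^{(n)}_I)\to 0\iff\varphi(I)=c=-\phi'(y^*)$, and since $0<-\phi'(y^*)<\infty$ the density of $\mathbb P_I$ lies in the canonical family \eqref{def:cont.can.dis}. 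I expect the main obstacle to be this last point — the off-window control of $e^{-\xi x}\log(\cdot)$ and the attendant tail estimates, where the crude bound~\eqref{cond:sumiid1} must be interlocked with the support constraint $\tilde Z_n\ge X$; everything else is bookkeeping around the Taylor remainder.
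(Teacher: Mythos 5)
Your proposal is correct and, modulo bookkeeping, reproduces the paper's own argument. The paper routes through Theorem~\ref{thm:limit.ldp} (which is itself a direct application of Lemma~\ref{thm:unique1}), whereas you jump straight to Lemma~\ref{thm:unique1}, but the substance is identical: by independence, $\log\bigl(f_{X\mid\tilde Z_n}(x;nI)/f_X(x)\bigr)=\log P_{Y_n}(I-x/n)-\log P_{Z_n}(I)$; Condition~\eqref{cond:sumiid3} plus a second-order Taylor expansion of $\phi$ around $y^*$ gives a linear-in-$x$ term with slope $c=-\phi'(y^*)$, a Taylor remainder $O(x^2/n)$ killed in $L^2(f_X)$ by $\mathbb E[X^4]<\infty$, a term $s_n(I-x/n)-s_n(I)=o(1)\cdot x$ by the last clause of Condition~\eqref{cond:sumiid3}, and an $x$-free constant absorbed via Condition~\eqref{cond:sumiid2} — exactly the chain the paper runs at Equations~\eqref{ldr:1}--\eqref{ldr:5} before invoking Theorem~\ref{thm:limit.ldp}. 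One honest remark: the off-window control of $e^{-\xi x}\log(\cdot)$ that you flag as the likely sticking point is also handled lightly in the paper (the linear decomposition is asserted only on $I_n$, and the boundedness hypothesis of Lemma~\ref{thm:unique1} is checked using that decomposition), so you are not facing an obstacle the original proof actually cleared more carefully; you would either fold the off-window behavior into the definition of $r_n$ on all of $\mathbb R$, as the hypothesis of Theorem~\ref{thm:limit.ldp} implicitly does, or note that Condition~\eqref{cond:sumiid1} together with the support constraint makes the tail contribution to the KL integral negligible.
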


\begin{remark} \label{rmk:int.app.ldp}
 The conditions \eqref{cond:sumiid1} - \eqref{cond:sumiid3} formulated in Theorem \ref{thm:iidsum} are technical, so we would like to characterise and verbally describe the underlying meaning and interpretation of them: The condition \eqref{cond:sumiid1} can be written as 
 $$\left| \frac{f_{X \mid Z_n}\left(x;  I  \right)  }{ f_{X}(x)} \right| = \left| \frac{f_{X_n , Y_n}\left(\frac{x}{n}, I - \frac{x}{n}   \right)  }{ f_{X_n}(\frac{x}{n}) f_{Y_n}(I-\frac{x}{n})} \right| \quad \text{is uniformly bounded on} \ \mathbb R^+, $$
 in which the right hand side is related to the correlation of $X_n$ and $Y_n$, therefore, this condition means that the interaction between $X_n$ and $Y_n$ is regulated; The condition \eqref{cond:sumiid2} is corresponding to the setup that $X_n$ is small relative to $Z_n$ (hence the distributions of $Y_n$ and $Z_n$ have the same asymptotic behavior), specifically, that finite constant can be chosen to be zero (we provide a more general condition in this theorem); The condition \eqref{cond:sumiid3} means that $Y_n$ converges to a constant satisfying the large deviation principle with the rate function $\phi$ and the remainder term $s_n$. 
\end{remark}

\begin{proof}
The proof of Theorem \ref{thm:iidsum} is just the application of Theorem \ref{thm:limit.ldp}, so we will show that all conditions in Theorem \ref{thm:limit.ldp} are satisfied. 
First, Condition \eqref{cond:limit.ldp1} in Theorem \ref{thm:limit.ldp} follows from Condition \eqref{cond:sumiid1}, and $\mathbb{E}[X^4] < \infty$ is assumed in this theorem. Second, Condition \eqref{cond:limit.ldp2} in Theorem \ref{thm:limit.ldp} follows from (i) 
$ Y_n \rightarrow \mathbb E[X]  \ \text{in probability}$ by the law of large numbers, (ii) $\mathbb E [X] \notin I$ by Definition \ref{def:sumiid.ldp}, and (iii) the
Condition \eqref{cond:sumiid3} in this theorem.

Third, since $I$ is closed and contained in an open interval $D$, there exists a constant $d \in \mathbb{R}^+$ such that $\displaystyle  I -\frac{x}{n} \subset D$ for $x \in [0, nd]$. Therefore,   by Condition \eqref{cond:sumiid3},
\begin{align}
\log P_{{Y}_n}\left(I - \frac{x}{n}\right) = - n  \phi\left(y^* - \frac{x}{n}\right) + s_n\left(I - \frac{x}{n} \right), \quad  y^* =  \left\{ y: \inf_{y \in I} \phi(y) \right\}.
\label{ldr:1}
\end{align}
Since $  \left[ y^*,  y^*- \frac{x}{n} \right]  \subseteq  D $ and $\phi \in C^2(D)$, by Taylor's expansion,
\begin{align}
\phi\left( y^*-\frac{x}{n} \right) = \phi(y^*) - \phi'(y^*) \frac{x}{n} + O\left(\frac{x^2}{n^2}\right) \quad \text{for all} \ x \in [0, nd].
\label{ldr:2}
\end{align}
By Condition \eqref{cond:sumiid2} and \eqref{cond:sumiid3}, there exists a sequence $\epsilon_n \rightarrow 0$ and a constant $k$ such that
\begin{align}
\log P_{{Z}_{n}}(I ) = \log P_{{Y}_{n}}(I ) + k + \epsilon_n = - n\phi(y^*) + s_n(I) + k + \epsilon_n.
\label{ldr:3}
\end{align}
By Condition \eqref{cond:sumiid3}, we have 
\begin{align}
\left|    s_n\left(I - \frac{x}{n}\right) -  s_{n}(I)  \right| = \left| s_{n}(I) \right|  O\left(\frac{x}{n}\right) = O(\delta_n x),
\label{ldr:3.5}
\end{align}
in which $\delta_n \rightarrow 0$.
By the results of \eqref{ldr:1}, \eqref{ldr:2}, \eqref{ldr:3}, and \eqref{ldr:3.5}, we obtain
\begin{align}
\log \left(\frac{P_{{Y}_n} \left(  I - \frac{x}{n} \right)  }{P_{{Z}_n}\left( I \right) }  \right)
&= \log \left( \frac{\exp{ \left[- n  \phi\left(y^* - \frac{x}{n}\right) \right]  }}{\exp{ \left[ - n\phi(y^*) \right]  }}   \right) + O\left(\frac{x^2}{n}\right) + O(\delta_n x) + \epsilon_n \quad \text{on} \quad I_n=[0, nd].
\label{ldr:5}
\end{align}
Let  $r_n(x) :=  O\left(\frac{x^2}{n}\right) + O(\delta_n x) + \epsilon_n$, we can check that (i)  $\left| r_n(x)e^{- \xi x} \right|$  uniformly bounded on  $\mathbb{R}^+$   for any   $\xi>0$, and (ii)  $\mathbb{E} \left[ r_n(X)^2 \right] \rightarrow 0$ since $\mathbb{E}\left[ X^4 \right] < \infty$ by Definition \ref{def:sumiid.ldp}. Hence, $ r_n(x), d_n , \phi$ satisfy Condition \eqref{cond:limit.ldp3} in Theorem \ref{thm:limit.ldp}. Therefore, we have checked that all of the conditions in Theorem \ref{thm:limit.ldp} hold, then we can apply it to get
\begin{align}
D_{\textnormal{KL}}\left(\mathbb P_I ~ \|~ \mathbb Q^{(n)}_I \right)
  \rightarrow 0  \quad \text{if and only if} \quad  \varphi(I) =  - \phi'(y^*).
\end{align}
\end{proof}

By Cram\'er's Theorem \ref{thm:cramer}, the existence of the function $\phi(y)$ in Condition \eqref{cond:sumiid3} is from the existence of the rate function of ${Y}_n = \sum_{i=1}^{n-1} {X}_i / n$. Let set $ D_\phi := \{ y \in \mathbb{R} : \phi(y) < \infty    \}$ and we can choose  $D = \text{int} \left( D_\phi \right) $. By the properties of rate functions in Appendix \ref{prop.of.ratefunc}, we have
\begin{align}
\phi(y) \in C^2(D) \quad , \quad \phi(y) \ \text{is convex on} \ D,
\end{align}
and $-\infty < \phi'(y) < 0$ for $y \in I \subset D$ if the interval $I$ is chosen on the left side of the mean of $Y_n$. By Cram\'er's Theorem, the rate function satisfies
\begin{align}
\log P_{{Y}_n}\left(I- \frac{x}{n} \right) = - n \phi\left(y^* -  \frac{x}{n}\right) + o(n), \quad  \text{for} \ I-\frac{x}{n} \subset D.
\label{cond:ratefunction}
\end{align}
Comparing \eqref{cond:ratefunction} with Condition \eqref{cond:sumiid3}, Theorem \ref{thm:iidsum}  requires an explicit form of the remainder:
\begin{align}
\log P_{{Y}_n}\left(I- \frac{x}{n} \right) = - n \phi \left(y^* -  \frac{x}{n}\right) + s_n \left(I-  \frac{x}{n}\right), \quad  \text{for} \ I-\frac{x}{n} \subset D,
\end{align}
where $\displaystyle \left| \frac{s_{n}( I - \frac{x}{n}) -  s_n(I)}{ s_n(I)}  \right|  = O \left( \frac{x}{n} \right),$ and $ \left| s_n(I')\right|  = o(n)$ for all $I' \subset D$. This stronger condition guarantees the ``if and only" if statement \eqref{ifonlyif:lpd}. 

The following is our discussion on the connection between Theorem \ref{thm:iidsum} and Van Campenhout and Cover's Theorem \ref{thm:von.1981}. In Theorem \ref{thm:iidsum}, if the condition is on the scale of large deviations, then the conditional density $$f_{X \mid \tilde{Z}_n}(x; nI), \ n\mu \not\in nI $$  can be approximated by the (normalized) product of its unconditional density $f_X(x)$ and an exponential function $e^{-\lambda x}$. This parameter $\lambda = \phi'(y^*)$ is unique and determined by the first derivative of the rate function evaluated at $y^*= \inf_{y \in I}\phi(y) $. It implies that we are able to find $\lambda$ directly from the rate function without using the maximum entropy principle. Furthermore, by the pair of reciprocal equations \eqref{pre:legendre}:
\begin{align}
 \phi'(y^*) = \lambda \quad \text{if and only if} \quad A'(\lambda) = y^*,
 \label{legendre2}
\end{align}
which means the parameter $\lambda$ we find by the derivative of the rate function (left side of \eqref{legendre2}) is also the solution of the derivative of the free energy function $A$ under the constraint $= y^*$ (right side of \eqref{legendre2}). 

Therefore, using the maximum entropy principle under the first moment constraint to find good approximations of conditional density (Van Campenhout and Cover's approach) is a natural consequence of the emergent behavior of
\begin{align}
\log\left( \frac{f_{X \mid \tilde{Z}_n}(x ; nI) }{ f_X(x)} \right).
\end{align} 
And this emergent behavior gives rise to a large deviation function that uniquely determines the parameter of the exponential weight. As we discussed in the Section \ref{ch:preliminary}, we apply the large deviation principle directly to the distribution of a the heat bath
$$ Y_n = \frac{\tilde{Y}_n}{n} =  \frac{1}{n} \sum_{i=2}^{n} X_i. $$ 
On the other hand, the Gibbs conditioning principle uses the large deviation principle for emprical measures 
$$ L_n = \frac{1}{n} \sum_{i=1}^n \delta_{X_i}.$$
Denote that $X_1 :=X$. Then the limit problem of the sequence of probability measures $\mathbb{Q}^{(n)}_I$ with density functions 
$$     f_{X | Z_n}(x; I),  \quad \text{where} \  Z_n = X + Y_n =  \frac{1}{n} \sum_{i=1}^n X_i,$$
and the limit problem of the sequence of emprical measures 
$$ \mathbb{E} \left[  L_n \mid L_n \in \Gamma \right], \quad \text{where} \  L_n = \frac{1}{n} \sum_{i=1}^n \delta_{X_i} \ \text{and} \  \Gamma = \left\{ \gamma : \int x \gamma(dx) \in I  \right\}$$
are just two sides of the same coin. Eventually, they both give arise to a limit as a canonical distirbution with the density
$$ f_X(x) e^{-\lambda x}.$$
In conclusion, our approach generates $\lambda$ by the large deviation rate function of the heat bath $Y_n$  and the Gibbs conditioning principle solves $\lambda $ by minimizing the relative entropy which is the large deviation rate function of sampling. These two approaches are connected by the reciprocal equations \eqref{legendre2} through the Legendre transform.

\subsection{Emergence of temperature (conditioned on the scale of Gaussian fluctuations)}
\label{sec:em.temp.clt}
Similar to Section \ref{sec:em.temp.ldp}, in this section, we define the parameter $\frac{1}{\beta_n \psi(I)} $ in the exponential function $e^{- \beta_n \psi(I) x }$ as the {\em temperature} of the canonical distribution and consider a sequence of conditional probabilities for a function of a subsystem represented by $X$ in contact with its heat bath represented by $\tilde{Y_n} = \sum_{i=2}^{n} X_i,$  $X_i$ are i.i.d. and $X_i$ has a same distribution as $X$, and $X$, $X_i$ are independent. In comparison with Section \ref{sec:em.temp.ldp}, here we suppose that the total energy $\tilde Z_n := X + \tilde{Y_n}$ is conditioned on the scale of Gaussian fluctuations. We will show that the {\em temperature} $ \frac{1}{\beta_n \psi(I)} $  is an emergent parameter uniquely determined by a normal distribution $N(0, \sigma^2)$, where $ \sigma^2$ is the variance of $X$. 

\begin{definition}
\label{def:seqofevents}
Let $X$ be a nonnegative and nonconstant continuous random variable with $\mathbb{E}[X^4] < \infty,$ and let $\mu = \mathbb{E} \left[X \right], \ $ $\sigma^2$ be the variance of $X$.   Let $\tilde{Y}_n = \sum_{i=2}^{n} X_i,$  where all random variables in $\{X_i\}_{i=2}^{n}\cup \{X\}$ are i.i.d.. Denote $ \tilde Z_n := X + \tilde Y_n$. For an interval $ I = [d, d+\delta], \ d \in \mathbb{R}, \delta>0$ and a function $\psi: I \rightarrow \mathbb{R}$ such that $0 < \psi(I) < \infty$. Let $\mathbb  P^{(n)}_{I}$ be a sequence of probability measures with density functions $ A_n f_{X}(x) e^{- \frac{\psi(I) }{\sqrt{n}} x} $, where \[ \frac{1}{A_n}=  \int_{\mathbb{R}^+} f_{X}(x) e^{- \frac{\psi(I) }{\sqrt{n}} x}dx \] and let $\mathbb Q^{(n)}_{ I}$ be a sequence of probability measures with density functions $f_{X \mid \tilde Z_n}\left(x;  n\mu  + \sqrt{n} I  \right)$. 
\end{definition}

\begin{theorem}
\label{cor1:canonical temperature}
Denote $ {Y}_n = \frac{\tilde Y_n - (n-1) \mu }{\sqrt{n}},  \ {X}_n = \frac{X}{\sqrt{n}}, \ {Z}_n = {X}_n + {Y}_n,$ and  let $I - \frac{x}{\sqrt{n} } = \left\{ y -  \frac{x}{\sqrt{n}} , \  y \in I \right\}.$  Assume the following conditions hold: 
\begin{enumerate}
\item  \label{cond:cor1:temp state 1} 
$\displaystyle \left| \frac{f_{X \mid  Z_n}\left(x;    I \right)  }{ f_{X}(x)} \right| $ is uniformly bounded on $ \mathbb{R}^+$.

% and $\displaystyle \left| e^{-\psi(I) x} \log \left( \frac{f_{X \mid  Z_n}\left(x;   I \right) }{ f_{X}(x) }  \right) \right|$ are uniformly bounded on $ \mathbb{R}^+$.
\item  \label{cond:cor1:temp state 2} 
$Y_n \rightarrow Y$ in distribution and  $\displaystyle \frac{\partial \log P(Y \in \left[ y, y+\delta \right] )}{\partial y} \biggr\rvert_{y=d} > 0, \ Y \sim N(0, \sigma^2).$ 
\item  \label{cond:cor1:temp state 3} 
There exists a sequence of functions $g_n : \mathbb{R} \rightarrow \mathbb{R}$ with 
$$ \left| g_n(x)e^{-\frac{\xi}{\sqrt{n}} x} \right|  \ \text{uniformly bounded on} \  \mathbb{R}^+,  \ \text{for any} \  \xi>0,  \quad \text{and} \quad  \mathbb{E}\left[ g_n(X)^2 \right]  \rightarrow 0$$
 such that
\begin{align}
\log \left( \frac{P\left( {Y}_n \in I  -  \frac{x}{\sqrt{n}} \right)  }{P \left( {Z}_n \in  I  \right)} \right)=  \log \left( \frac{P\left(   Y \in I -  \frac{x}{\sqrt{n}}  \right)  }{P \left(   Y \in I  \right)} \right) + \frac{g_n(x)}{\sqrt{n}}  \ \text{on} \ I_n,
\label{cond:CLT:small}
\end{align} 
in which $I_n = [0, d_n]$ with $\displaystyle d_n = O(\sqrt{n}).$
\end{enumerate}
Then 
\begin{align} \label{ifonlyif:gauss}
  n{{D_{\textnormal{KL}}\left(\mathbb P^{(n)}_{I} ~ \|~ \mathbb Q^{(n)}_{I} \right)}} 
 \rightarrow 0  \quad \text{if and only if} \quad  \psi(I) =  \frac{\partial \log P(Y \in \left[ y, y+\delta \right] )}{\partial y} \biggr\rvert_{y=d}.  
\end{align}
\end{theorem}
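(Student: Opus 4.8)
The plan is to obtain Theorem~\ref{cor1:canonical temperature} as a direct application of Lemma~\ref{thm:unique2} with the choice $\beta_n = 1/\sqrt{n}$, in complete parallel with the proof of Theorem~\ref{thm:limit.smooth}. First I would record the change of variables already fixed in Definition~\ref{def:seqofevents}: with $X_n = X/\sqrt{n}$, $Y_n = (\tilde Y_n - (n-1)\mu)/\sqrt{n}$ and $Z_n = X_n + Y_n = (\tilde Z_n - (n-1)\mu)/\sqrt{n}$, conditioning $\tilde Z_n$ on the event $\tilde Z_n \in n\mu + \sqrt{n}I$ is the same as conditioning $Z_n$ on the interval $I$ up to an $O(1/\sqrt{n})$ translation of $I$ that will be absorbed into the error terms. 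Hence $\mathbb Q^{(n)}_I$ has density $f_{X\mid Z_n}(x;I)$ in the sense of the remark following Lemma~\ref{thm:unique2}, and the measure $\tilde{\mathbb P}^{(n)}_E$ of that lemma coincides with $\mathbb P^{(n)}_I$ of Definition~\ref{def:seqofevents} once we set $\psi(E) = \psi(I)$. The convergence $Y_n \Rightarrow Y \sim N(0,\sigma^2)$ appearing in Condition~\eqref{cond:cor1:temp state 2} holds automatically by the central limit theorem applied to the i.i.d.\ sum $\tilde Y_n$.

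Next I would verify the two hypotheses of Lemma~\ref{thm:unique2}. For the \emph{linear approximation} hypothesis, I start from Condition~\eqref{cond:CLT:small}: by Bayes' theorem as in \eqref{condprob} and the independence of $X$ from the $X_i$, one has $\log\big(f_{X\mid Z_n}(x;I)/f_X(x)\big) = \log\big(P(Y_n \in I - x/\sqrt{n})/P(Z_n\in I)\big)$, which by \eqref{cond:CLT:small} equals $\log\big(P(Y\in I - x/\sqrt{n})/P(Y\in I)\big) + g_n(x)/\sqrt{n}$ on $I_n = [0,d_n]$ with $d_n = O(\sqrt{n})$. A Taylor expansion of the Gaussian ratio gives
\begin{align*}
\log\!\left(\frac{P(Y\in I - \beta_n x)}{P(Y\in I)}\right)
= -\,\frac{\partial \log P(Y\in[y,y+\delta])}{\partial y}\biggr\rvert_{y=d}\,\beta_n x + O(\beta_n^2 x^2),
\end{align*}
so that $\tfrac1{\beta_n}\log\big(f_{X\mid Z_n}(x;I)/f_X(x)\big) = b - c\,x + q_n(x)$ on $I_n$, with $c = \partial_y \log P(Y\in[y,y+\delta])|_{y=d}$ and $q_n(x) = O(\beta_n x^2) + g_n(x)$. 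Here $\mathbb E[q_n(X)^2]\to 0$: the $O(\beta_n X^2)$ piece contributes $O(\beta_n^2\,\mathbb E[X^4]) \to 0$ because $\mathbb E[X^4]<\infty$, and $\mathbb E[g_n(X)^2]\to 0$ by Condition~\eqref{cond:cor1:temp state 3}. For the \emph{boundedness} hypothesis, $|f_{X\mid Z_n}(x;I)/f_X(x)|$ is uniformly bounded by Condition~\eqref{cond:cor1:temp state 1}, while for any $\xi>0$ the factor $e^{-\beta_n\xi x}$ times the displayed expansion is uniformly bounded on $\mathbb R^+$ (the $-c\beta_n x$ and $O(\beta_n^2 x^2)$ terms are killed by the exponential, and $e^{-\beta_n\xi x}g_n(x)$ is uniformly bounded by Condition~\eqref{cond:cor1:temp state 3}). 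The remaining standing assumptions of Lemma~\ref{thm:unique2} — $X$ nonconstant, $\mathbb E[X^3]<\infty$, and $0 < c < \infty$ — hold by Definition~\ref{def:seqofevents} and the positivity part of Condition~\eqref{cond:cor1:temp state 2}.

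Finally I would invoke Lemma~\ref{thm:unique2}, which yields $D_{\textnormal{KL}}(\tilde{\mathbb P}^{(n)}_E ~\|~ \mathbb Q^{(n)}_E)/\beta_n^2 \to 0$ if and only if $c = \psi(E)$; substituting $\beta_n^2 = 1/n$ and $c = \partial_y \log P(Y\in[y,y+\delta])|_{y=d}$ gives exactly \eqref{ifonlyif:gauss}, and $\mathbb P^{(n)}_I$ satisfies the definition of a canonical distribution in \eqref{def:cont.can.dis} because $0 < \psi(I) < \infty$. The main obstacle is not any single deep step but the careful bookkeeping required to verify the hypotheses of Lemma~\ref{thm:unique2}: tracking the $O(1/\sqrt{n})$ shift of the conditioning interval, controlling the Taylor remainder uniformly on $I_n = [0, O(\sqrt{n})]$, and using precisely the fourth-moment bound $\mathbb E[X^4]<\infty$ to force $\mathbb E[q_n(X)^2]\to 0$ at the rate needed for the $\beta_n^{-2}$-scaled KL divergence to vanish.
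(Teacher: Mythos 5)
Your proposal is correct and follows essentially the same route as the paper: the paper dispatches Theorem~\ref{cor1:canonical temperature} by saying it ``is just the application of Theorem~\ref{thm:limit.smooth}'' (and notes the conditions can be checked), and Theorem~\ref{thm:limit.smooth} is itself a thin wrapper around Lemma~\ref{thm:unique2} with $\beta_n = 1/\sqrt{n}$, which is exactly what you unpack. Your verification of the linear-approximation and boundedness hypotheses (identifying $c = \partial_y \log P(Y\in[y,y+\delta])|_{y=d}$, $q_n(x) = O(\beta_n x^2)+g_n(x)$, and using $\mathbb{E}[X^4]<\infty$ to make $\mathbb{E}[q_n(X)^2]\to0$) matches the computation in Section~\ref{proof:limit.smooth}, and your observation about the $O(1/\sqrt{n})$ shift between $\tilde Z_n\in n\mu+\sqrt{n}I$ and $Z_n\in I$ (the discrepancy $\mu/\sqrt{n}$ from using $n\mu$ rather than $(n-1)\mu$ as the centering) is a minor bookkeeping point the paper glosses over but which you handle correctly.
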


\begin{remark} As Remark \ref{rmk:int.app.ldp}, 
 the conditions \eqref{cond:cor1:temp state 1} - \eqref{cond:cor1:temp state 3} formulated in  Theorem \ref{cor1:canonical temperature} are technical, so we would like to characterise and verbally describe the underlying meaning and interpretation of them: As Theorem \ref{thm:iidsum}, the condition \eqref{cond:cor1:temp state 1} means that the interaction between $X_n$ and $Y_n$ is regulated; The condition \eqref{cond:cor1:temp state 2} follows from the central limit theorem and we need to choose the  interval $I=[d, d+\delta] \subset \mathbb{R}^-$ such that the partial derivative term is positive; The condition \eqref{cond:cor1:temp state 3} combines the setup $X_n \rightarrow 0$ in probability and $Y_n \rightarrow Y$ in distribution, furthermore, the remainder term has a special form $\frac{g_n(x)}{\sqrt{n}}$.  
\end{remark}

The proof of Theorem \ref{cor1:canonical temperature} is just the application of Theorem \ref{thm:limit.smooth}. We can check that all of the conditions in Theorem \ref{thm:limit.smooth} are satisfied. Here we want to further discuss the equation \eqref{cond:CLT:small} in Condition \eqref{cond:cor1:temp state 3}:

As the proof for Theorem \ref{thm:d+d:iid}, by Corollary \ref{cor:berry-esseen-slusky} of Berry-Esseen theorem and Slusky's theorem, we have
 \begin{align}
\log \left( \frac{P\left( {Y}_n \in I  -  \frac{x}{\sqrt{n}} \right)  }{P \left( {Z}_n \in  I  \right)} \right)=  \log \left( \frac{P\left(   Y \in I -  \frac{x}{\sqrt{n}}  \right)  }{P \left(   Y \in I  \right)} \right) + O\left(\frac{1}{\sqrt{n}} \right) \quad \text{on} \ I_n.
\label{cond:CLT:small2}
\end{align} 
However, it only guarantees the convergence of $\mathbb{P}_I^{(n)}$ and $\mathbb{Q}_I^{(n)}$ in $\| \cdot \|_{\infty}$ by Theorem \ref{thm:d+d:iid}.  Compare Equation \eqref{cond:CLT:small2} with Condition \eqref{cond:cor1:temp state 2},
Theorem \ref{cor1:canonical temperature} requires an explicit form of the remainder:
\begin{align}
\log \left( \frac{P\left( {Y}_n \in I  -  \frac{x}{\sqrt{n}} \right)  }{P \left( {Z}_n \in  I  \right)} \right)=  \log \left( \frac{P\left(   Y \in I -  \frac{x}{\sqrt{n}}  \right)  }{P \left(   Y \in I  \right)} \right) + \frac{ g_n(x) }{\sqrt{n}}\quad \text{on} \ I_n,
\end{align} 
and $\mathbb{E}[g_n(X)^2] \rightarrow 0$. This explicit form of remainder guarantees the ``if and only" if statement \eqref{ifonlyif:gauss}.

We now discuss the connection between Theorem \ref{cor1:canonical temperature} and Zabell's Theorem \ref{thm:zabell.1980}. If the condition is on the scale of Gaussian fluctuations, Theorem \ref{thm:zabell.1980} only tells us that the sequence of conditional distributions $F_{X \mid \tilde Z_n } (x; n\mu + \sqrt{n} I)$ should converge to its unconditional distribution $F_X(x)$. By our theorem \ref{cor1:canonical temperature}, we have an explicit formula for the canonical distribution to approximate the conditional distribution well:   $$F_{X \mid \tilde Z_n } (x; n\mu + \sqrt{n} I) \approx \int_{-\infty}^x A_n f_X(s)e^{-\frac{\psi(I)}{\sqrt{n}} s} dx,$$  for a sufficiently large $n$, and it converges to $F_X(x)$ as $n \rightarrow \infty$ which is consistent with Zabell's Theorem \ref{thm:zabell.1980}. In addition, the parameter $ \frac{\psi(I)}{\sqrt{n}} $ of the canonical distribution is uniquely determined if we require that the approximation is ``good" enough, i.e. the KL-divergence of the conditional distribution from the canonical distribution converges to zero in the rate $ o\left(\frac{1}{n}\right)$.

\subsection{Mathematical definitions of the heat bath} \label{sec:heatbath}
In Section \ref{ch:main}, we provided two limit theorems of a sequence of conditional probabilities to derive a unique canonical distribution as an emergent phenomenon. In Theorem \ref{thm:limit.smooth}, the emergent parameter in the exponential weight is uniquely determined by the limiting distribution of the heat bath $Y_n \rightarrow Y$ (note that in Theorem \ref{thm:limit.smooth}, $Y_n$ follows from the appropriate shifting and scaling of the original heat bath $\tilde{Y}_n$) evaluated on the interval $I = [h, h+\delta]$ such as
$$ \psi(I) =  \frac{\partial \log P(Y \in [y, y+\delta])}{\partial y}\biggr\rvert_{h}. $$
Similarly, in Theorem \ref{thm:limit.ldp}, the emergent parameter in the exponential weight is uniquely determined by the large-deviation rate function of the heat bath $Y_n \rightarrow \mu$ (note that in Theorem \ref{thm:limit.ldp}, $Y_n$ follows from the appropriate shifting and scaling of the original heat bath $\tilde{Y}_n$) evaluated on the interval $I = [h, h+\delta]$ such as
$$ \varphi(I) =  - \phi(y^*), $$
where $\phi$ is the rate function of $Y_n$ and $y^* = \{y: \inf_{y\in I} \phi(y) \}.$ 

If we choose an interval $I' \subset I$, the parameter in the exponential weight may depend on $I'$ in both of the limit theorems. However, since $I'$ is just a subinterval of $I$, we expect that a well-defined {\em heat bath} should give rise to an {\em invariant temperature} of the canonical distribution by giving a constant parameter in the exponential weight no matter what subinterval $I'$ we choose for it. In this section, we discuss two cases that follow from Theorem \ref{thm:limit.smooth} and Theorem \ref{thm:limit.ldp}, respectively. Given a finite interval $I$, we first define the {\em subinterval invariant property} of a sequence of conditional distributions, then we provide three equivalent properties: (1) the subinterval invariant property of a sequence of conditional distributions (2) the invariant temperature property of the canonical distribution (3) the heat-bath property.  Based on the equivalence of these three properties, we truly define the concept of ``heat bath" in the language of mathematics.

Recall that $X,$ $\tilde{Y}_n,$ and $\tilde{Z_n} := X +  \tilde{Y}_n, $ are random variables from the definitions in Section \ref{ch:main}. By proper shifting and scaling, let  $X_n := \beta_n X$, $Y_n := \beta_n \left( \tilde{Y}_n - \mu_n \right)$, and $Z_n := X_n + Y_n$, where $\mu_n$, $\beta_n$ are positive sequences and $\beta_n=o(1)$.

For a finite interval $ I = [h, h + \delta], \ h \in \mathbb{R}$ and $\delta>0$, let ${\mathbb Q}^{(n)}_I$ be a sequence of probability measures with density functions $f_{X \mid {Z}_n } \big(x  ;  I \big).$  The sequence of conditional probability measures ${\mathbb Q}^{(n)}_I$ represents our setup for the canonical ensemble, which should have a ``nice" property such that the limiting behaviors of ${\mathbb Q}^{(n)}_{I'}$ and ${\mathbb Q}^{(n)}_I$ are the same for all subintervals  $I' \subset I$.
Hence we define this ``nice" property as follows:
\begin{definition}
Note that $\delta\left(\cdot, \cdot\right)$ represents the total variation distance of two probability measures. For any given interval $I' \subset I$,
 \begin{align} \label{heatbath:condition2}
\frac{ \delta \left(  \mathbb Q^{(n)}_{I'}, \mathbb Q^{(n)}_I \right) }{\alpha_n} \rightarrow  0, 
\end{align}
in which we take $\alpha_n = \beta_n $ for  Theorem \ref{thm:limit.smooth}, and $\alpha_n = 1$ for  Theorem \ref{thm:limit.ldp}. Then we say that the sequence of conditional probability measures $\mathbb Q^{(n)}_{I}$ has the {\em subinterval invariant property} on the interval $I$. 
\end{definition}

We start with our first theorem which follows Theorem \ref{thm:limit.smooth}.  Recall that in Theorem \ref{thm:limit.smooth},  $Y$ is a random variable such that $Y_n \rightarrow Y$ in distribution.
\begin{theorem}\label{thm:heatbath}
For a given interval $ I' = [h', h' + \delta'], \ h'  \in \mathbb{R}$, $\delta'>0$, and $I' \subset I$, and a function $\psi: I' \rightarrow \mathbb{R}$, let $\tilde{\mathbb P}^{(n)}_{I'}$ be a sequence of probability measures with density functions
\begin{align}
\frac{f_X(x) e^{-\beta_n\psi(I')x}}{\displaystyle \int_{\mathbb{R}^+}  f_X(x) e^{-\beta_n \psi(I')x } dx },
\end{align}
where 
\begin{align} \label{heatbath:condition}
\psi(I') =  \frac{\partial \log P(Y \in [y, y+\delta'])}{\partial y}\biggr\rvert_{h'}.
 \end{align}
Assume all of the conditions in Theorem \ref{thm:limit.smooth} hold, then the following three statement are equivalent: 
\begin{enumerate}
    \item \label{state:heatbath1} $\mathbb Q^{(n)}_{I}$ has the subinterval invariant property  on the interval $I$.
    \item \label{state:heatbath2} $\tilde{\mathbb P}^{(n)}_{I'}$ has a unique parameter (the invariant temperature property) such as
$$\psi(I') = \psi(I) \quad \text{for all} \quad  I' \subset I.$$
\item  \label{state:heatbath3} $ Y_n \rightarrow Y$ in distribution and $Y$ is a random variable with a distribution function 
\begin{align}
\label{heatbath.form.gauss}
P(Y \in [h', h' + \delta']) = \alpha(\delta') e^{\psi(I) h'}     \quad \text{for all} \ \left[h',  h' +\delta' \right] \subset I,
\end{align}
where  $\alpha : \mathbb{R}^+ \rightarrow  \mathbb{R}$ is a function. 
\end{enumerate}
\end{theorem}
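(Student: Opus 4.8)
The plan is to prove the three equivalences cyclically, using Theorem \ref{thm:limit.smooth} as the bridge between the conditional measures $\mathbb{Q}^{(n)}_\bullet$ and the exponential tilts $\tilde{\mathbb{P}}^{(n)}_\bullet$, and a short one-dimensional ODE argument to pass between the parameter $\psi$ and the distributional form of $Y$. For $\eqref{state:heatbath1}\Leftrightarrow\eqref{state:heatbath2}$ I would first observe that, for every subinterval $I'=[h',h'+\delta']\subset I$, the choice \eqref{heatbath:condition} is precisely the value of $\psi(I')$ for which Theorem \ref{thm:limit.smooth}, applied with $I'$ in place of $I$ (its hypotheses \eqref{cond:limit.smooth1}--\eqref{cond:limit.smooth2} are inherited since $I\subset D$), gives $D_{\textnormal{KL}}(\tilde{\mathbb{P}}^{(n)}_{I'}\,\|\,\mathbb{Q}^{(n)}_{I'})/\beta_n^2\to0$, hence $\delta(\tilde{\mathbb{P}}^{(n)}_{I'},\mathbb{Q}^{(n)}_{I'})/\beta_n\to0$ by Pinsker's inequality \eqref{eq:totalvariation}; the same holds for $I'=I$. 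A triangle inequality then forces
\[
\frac{\delta(\mathbb{Q}^{(n)}_{I'},\mathbb{Q}^{(n)}_{I})}{\beta_n}\quad\text{and}\quad\frac{\delta(\tilde{\mathbb{P}}^{(n)}_{I'},\tilde{\mathbb{P}}^{(n)}_{I})}{\beta_n}\quad\text{to have the same limit,}
\]
so \eqref{state:heatbath1} is equivalent to $\delta(\tilde{\mathbb{P}}^{(n)}_{I'},\tilde{\mathbb{P}}^{(n)}_{I})/\beta_n\to0$ for all $I'\subset I$.

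Since $\tilde{\mathbb{P}}^{(n)}_{I'}$ and $\tilde{\mathbb{P}}^{(n)}_{I}$ are the normalized densities $f_X(x)e^{-\beta_n\psi(I')x}$ and $f_X(x)e^{-\beta_n\psi(I)x}$, the implication $\psi(I')=\psi(I)\Rightarrow$ coincidence of the two measures (hence distance $0$) is trivial, giving $\eqref{state:heatbath2}\Rightarrow\eqref{state:heatbath1}$. For the converse I would reuse the Taylor expansion together with the dominated-convergence estimate from the ``only if'' half of Lemma \ref{thm:unique2} (which uses $\mathbb{E}[X^3]<\infty$ from Theorem \ref{thm:limit.smooth}): it yields $\delta(\tilde{\mathbb{P}}^{(n)}_{I'},\tilde{\mathbb{P}}^{(n)}_{I})/\beta_n\to\tfrac12|\psi(I')-\psi(I)|\,\mathbb{E}|X-\mathbb{E}X|$, which is strictly positive because $X$ is nonconstant. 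Thus $\delta(\mathbb{Q}^{(n)}_{I'},\mathbb{Q}^{(n)}_{I})/\beta_n\to0$ forces $\psi(I')=\psi(I)$ for every $I'\subset I$, closing $\eqref{state:heatbath1}\Leftrightarrow\eqref{state:heatbath2}$.

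For $\eqref{state:heatbath2}\Leftrightarrow\eqref{state:heatbath3}$, which is purely a statement about the law of $Y$, I would substitute \eqref{heatbath:condition} into \eqref{state:heatbath2} and read it as $\partial_{h'}\log P(Y\in[h',h'+\delta'])=\psi(I)$ for every $[h',h'+\delta']\subset I$. For each fixed $\delta'$ this is a linear first-order ODE in $h'$ whose solution on the admissible range is $\log P(Y\in[h',h'+\delta'])=\psi(I)\,h'+\log\alpha(\delta')$ for some constant $\alpha(\delta')$, i.e.\ exactly \eqref{heatbath.form.gauss}; conversely, differentiating \eqref{heatbath.form.gauss} in $h'$ recovers $\psi(I')=\psi(I)$. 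I may add a remark that $\sigma$-additivity of the law of $Y$ further constrains $\alpha$ to the form $A(e^{\psi(I)\delta'}-1)$, identifying the restriction of the law of $Y$ to $I$ as a truncated exponential, but this refinement is not needed for the equivalence.

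I expect the only genuinely delicate step to be the converse half of $\eqref{state:heatbath1}\Rightarrow\eqref{state:heatbath2}$: the subinterval invariant property only controls $\delta(\mathbb{Q}^{(n)}_{I'},\mathbb{Q}^{(n)}_{I})$ at scale $\beta_n$, so one must verify that $\beta_n$ is exactly the scale at which two distinct exponential tilts separate. It is --- their total variation distance is $\Theta(\beta_n)$ with constant $\tfrac12|\psi(I')-\psi(I)|\,\mathbb{E}|X-\mathbb{E}X|$, and nonconstancy of $X$ makes this nonzero --- which also explains why the definition of the subinterval invariant property divides by $\alpha_n=\beta_n$ rather than by $1$. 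The remaining work (checking that the regularity hypotheses of Theorem \ref{thm:limit.smooth} pass to subintervals of $I$, and carrying out the Taylor/dominated-convergence estimate) is routine.
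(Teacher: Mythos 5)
Your proof is correct and takes essentially the same route as the paper: Theorem~\ref{thm:limit.smooth} plus Pinsker's inequality give $\delta(\tilde{\mathbb P}^{(n)}_{I'},\mathbb Q^{(n)}_{I'})/\beta_n\to 0$ for every $I'\subset I$, a triangle inequality then transfers the subinterval invariant property between the conditional and tilted measures, the uniqueness machinery of Lemma~\ref{thm:unique2} forces $\psi(I')=\psi(I)$, and the resulting ODE $\partial_{h'}\log P(Y\in[h',h'+\delta'])=\psi(I)$ yields the exponential form of $Y$'s law. Your explicit asymptotics $\delta(\tilde{\mathbb P}^{(n)}_{I'},\tilde{\mathbb P}^{(n)}_I)/\beta_n\to\tfrac12\,|\psi(I')-\psi(I)|\,\mathbb{E}|X-\mathbb{E}X|$ is a clean and correct sharpening of the step the paper dispatches by invoking Equations~\eqref{cor:temp:limintegral}--\eqref{cor:temp:end}, and it makes transparent why $\alpha_n=\beta_n$ is the right normalizing scale, but it is the same underlying argument rather than a genuinely different one.
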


\begin{proof} \label{proof:heatbath}
Since all of the conditions in Theorem \ref{thm:limit.smooth} hold for all intervals $I'\subset I$ with \eqref{heatbath:condition}, we can obtain that
\begin{align}
\label{heat.bath.kl}
 \lim_{n \rightarrow \infty}\frac{{D_{\textnormal{KL}}\left( \tilde{\mathbb P}^{(n)}_{I'} ~ \|~ {\mathbb Q}^{(n)}_{I'} \right)}}{\beta_n^2}  
 = 0,  \quad \text{for all} \ I' \subset I.
\end{align}

To prove $\left( \eqref{state:heatbath1} \Rightarrow \eqref{state:heatbath2} \Rightarrow \eqref{state:heatbath3}\right)$: Assume the invariant temperature property holds, by applying the triangle inequality and Pinsker's inequality to Equation \eqref{heat.bath.kl} and the assumption of the subinterval invariant property \eqref{heatbath:condition2} with $\alpha_n = \beta_n$, we have that 
\begin{align}
 \frac{ \delta \left(  \tilde{\mathbb P}^{(n)}_{I'} , \tilde{\mathbb P}^{(n)}_I \right) }{\beta_n} \rightarrow  0, \quad \text{for all} \  I' \subset I.
\end{align}
Following every step from \eqref{cor:temp:limintegral} to \eqref{cor:temp:end} in the proof \ref{proof:lemma:unique2} for Lemma \ref{thm:unique2}, we can get 
\begin{align}
\psi(I') = \psi(I), \quad \text{for all} \  I' \subset I.
\label{heatbath:temp.invariant}
\end{align}
By \eqref{heatbath:condition} and \eqref{heatbath:temp.invariant}, we have
\begin{align}
\frac{\partial \log P(Y \in [y, y+\delta'])}{\partial y}\biggr\rvert_{h'} \equiv  \psi(I),  \quad \text{for all} \  [h', h'+\delta'] \subset I, 
\end{align} 
which implies $Y$ has a distribution 
$$  P(Y \in [h', h' + \delta']) = \alpha(\delta') e^{\psi(I) h'},     \quad \text{for all} \ \left[h',  h' +\delta' \right] \subset I, $$
with some function  $\alpha : \mathbb{R}^+ \rightarrow  \mathbb{R}$. \\
To prove $\left( \eqref{state:heatbath3} \Rightarrow \eqref{state:heatbath2} \Rightarrow \eqref{state:heatbath1}\right)$: By the assumption \eqref{state:heatbath3} that
$$  P(Y \in [h', h' + \delta']) = \alpha(\delta') e^{\psi(I) h'},     \quad \text{for all} \ \left[h',  h' +\delta' \right] \subset I, $$
with some function  $\alpha : \mathbb{R}^+ \rightarrow  \mathbb{R}$, and the equation \eqref{heatbath:condition},
we can obtain that
\begin{align}
\psi(I') = \psi(I), \quad \text{for all} \quad  I' \subset I,
\end{align}
therefore, it implies
\begin{align}
\frac{ \delta \left(  \tilde{\mathbb P}^{(n)}_{I'} , \tilde{\mathbb P}^{(n)}_I \right) }{\beta_n} =  0, \quad \text{for all} \  I' \subset I. \label{heatbath:temp.invariant1.5}
\end{align}
By applying the triangle inequality and Pinsker's inequality to \eqref{heatbath:temp.invariant1.5} and \eqref{heat.bath.kl}, we have
 \begin{align}
\frac{ \delta \left(  \mathbb Q^{(n)}_{I'} , \mathbb Q^{(n)}_I \right) }{\beta_n} \rightarrow  0, \quad \text{for all} \  I' \subset I.
\end{align}
\end{proof}

Next, we continue our analysis based on Theorem \ref{thm:limit.ldp}. Recall that in Theorem \ref{thm:limit.ldp}, $Y_n \rightarrow \mu$, for some constant $\mu$, in probability and the sequence of laws of $Y_n$ satisfies a large deviation principle with speed $1/\beta_n$ and rate function $\phi$. The rate function $\phi \in C^2(D),$ where $D$ is an open interval containing $I$, and
\begin{align}
\label{heat.bath.ldp.cond}
    -\infty < \phi'(y) < 0, \ \text{for all} \ y \in I.
\end{align}

\begin{theorem}\label{thm:heatbath2}
For a given interval $ I' = [h', h' + \delta'], \ h',\delta' \in \mathbb{R}$, $\delta'>0$, and $I' \subset I$, and a function $\varphi: I' \rightarrow \mathbb{R}$,  let $\mathbb{P}_{I'}$ be a probability measure with density function
\begin{align}
\frac{f_X(x) e^{-\varphi(I')x}}{\displaystyle \int_{\mathbb{R}^+}  f_X(x) e^{-\varphi(I')x } dx },
\end{align}
where
\begin{align} \label{heatbath:condition3}
\varphi(I') =  -\phi'(\hat{y}^*), \quad \hat{y}^*  = \{y: \inf_{y\in I'} \phi(y) \}.
 \end{align}
 Assume all of the conditions in Theorem \ref{thm:limit.ldp} hold, then the following three statements are equivalent: 
\begin{enumerate}
    \item \label{state:heatbathldp1} $\mathbb Q^{(n)}_{I}$ has subinterval invariant property   on the interval $I$.
    \item \label{state:heatbathldp2} $\mathbb{P}_{I'}$ has a unique parameter (invariant temperature property) such as
$$\varphi(I') = \varphi(I) \quad \text{for all} \quad  I' \subset I.$$
\item  \label{state:heatbathldp3} Let $\phi$ be the large deviation rate function of $Y_n $. $\phi$ is a linear function such as
\begin{align} \label{heatbath:temp.invariant3}
\phi(y) = \phi'(y^*) y + c, \quad \text{for all} \ y \in I,
\end{align}
where ${y}^*  = \{y: \inf_{y\in I} \phi(y) \}$ and $c$ is some constant. 
\end{enumerate}

\end{theorem}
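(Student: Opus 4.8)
The plan is to mirror the proof of Theorem~\ref{thm:heatbath}, noting that the large-deviation setting is in fact cleaner since here the limiting measures $\mathbb{P}_{I'}$ carry no $n$-dependence. First I would record that, under the standing hypotheses, all conditions of Theorem~\ref{thm:limit.ldp} hold for every subinterval $I'\subset I$ (with $\varphi(I')$ given by \eqref{heatbath:condition3}), so that Theorem~\ref{thm:limit.ldp} yields
$$D_{\textnormal{KL}}\!\left(\mathbb{P}_{I'}\,\|\,\mathbb{Q}^{(n)}_{I'}\right)\longrightarrow 0\qquad\text{for all }I'\subset I.$$
A second preliminary observation: since $-\infty<\phi'(y)<0$ on $I$ by \eqref{heat.bath.ldp.cond}, the rate function $\phi$ is strictly decreasing on $I$, so for any $I'=[h',h'+\delta']\subset I$ the infimum $\inf_{y\in I'}\phi(y)$ is attained at the right endpoint; thus $\hat{y}^*=h'+\delta'$ and $\varphi(I')=-\phi'(h'+\delta')$, and likewise $y^*$ is the right endpoint of $I$.

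For $\eqref{state:heatbathldp1}\Rightarrow\eqref{state:heatbathldp2}\Rightarrow\eqref{state:heatbathldp3}$: assuming the subinterval invariant property (here $\alpha_n=1$), Pinsker's inequality \eqref{eq:totalvariation} applied to the two displayed KL-limits gives $\delta(\mathbb{P}_{I'},\mathbb{Q}^{(n)}_{I'})\to 0$ and $\delta(\mathbb{P}_I,\mathbb{Q}^{(n)}_I)\to 0$; combining these with \eqref{heatbath:condition2} via the triangle inequality gives $\delta(\mathbb{P}_{I'},\mathbb{P}_I)\to 0$. But the left-hand side is independent of $n$, hence $\delta(\mathbb{P}_{I'},\mathbb{P}_I)=0$, and arguing as at the end of the proof of Lemma~\ref{thm:unique1} (equal cumulative integrals force equal densities a.e., and $f_X$ is positive on an interval with $X$ nonconstant) we conclude $\varphi(I')=\varphi(I)$. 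This is \eqref{state:heatbathldp2}. For \eqref{state:heatbathldp2}$\Rightarrow$\eqref{state:heatbathldp3}, use the preliminary observation: $\varphi(I')=-\phi'(h'+\delta')$, and as $I'$ ranges over nondegenerate subintervals of $I$ its right endpoint sweeps out $I$ (every non-left-endpoint $p\in I$ is the right endpoint of some $I'=[a',p]\subset I$), so $\phi'(p)=-\varphi(I)$ for all such $p$, hence for all $p\in I$ by continuity of $\phi'$ (since $\phi\in C^2(D)$); integrating gives $\phi(y)=-\varphi(I)\,y+c=\phi'(y^*)\,y+c$ on $I$, which is \eqref{heatbath:temp.invariant3}.

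For the reverse chain $\eqref{state:heatbathldp3}\Rightarrow\eqref{state:heatbathldp2}\Rightarrow\eqref{state:heatbathldp1}$: if $\phi$ is affine on $I$ then $\phi'$ is constant there, so $\varphi(I')=-\phi'(\hat{y}^*)=-\phi'(y^*)=\varphi(I)$ for every $I'\subset I$, which is \eqref{state:heatbathldp2}; and if \eqref{state:heatbathldp2} holds then $\mathbb{P}_{I'}$ and $\mathbb{P}_I$ have identical densities, so $\delta(\mathbb{P}_{I'},\mathbb{P}_I)=0$, and the triangle inequality together with $\delta(\mathbb{P}_{I'},\mathbb{Q}^{(n)}_{I'})\to 0$ and $\delta(\mathbb{P}_I,\mathbb{Q}^{(n)}_I)\to 0$ (Pinsker plus the KL-limits) gives $\delta(\mathbb{Q}^{(n)}_{I'},\mathbb{Q}^{(n)}_I)\to 0$, which is \eqref{state:heatbathldp1} since $\alpha_n=1$. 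The main obstacle I anticipate is the bookkeeping in the step \eqref{state:heatbathldp2}$\Rightarrow$\eqref{state:heatbathldp3}: one must identify correctly which point of $I$ serves as the relevant minimizer and verify that the family of subintervals genuinely recovers $\phi'$ on all of $I$; the strict monotonicity of $\phi$ on $I$ and the continuity of $\phi'$ are precisely what make this go through, while everything else is a routine repackaging of Pinsker's inequality, the triangle inequality, and the elementary fact that densities $c_1 f_X e^{-a_1 x}$ and $c_2 f_X e^{-a_2 x}$ agreeing a.e.\ force $a_1=a_2$ when $f_X$ is nondegenerate.
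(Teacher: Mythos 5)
Your proof is correct and follows essentially the same route as the paper: apply Theorem \ref{thm:limit.ldp} to each $I'\subset I$ to get the KL convergence, run Pinsker plus the triangle inequality to transfer the subinterval invariance from $\mathbb{Q}^{(n)}$ to the limit measures $\mathbb{P}_{I'}$, deduce $\varphi(I')\equiv\varphi(I)$ from equality of exponential densities, and read off linearity of $\phi$ on $I$. Your explicit preliminary observation (that $\phi$ is strictly decreasing on $I$, hence $\hat{y}^*$ is the right endpoint of $I'$ and these right endpoints sweep $I$, with the left endpoint handled by continuity of $\phi'$) makes precise a step that the paper's proof leaves implicit when it invokes ``the properties of the rate function in Appendix \ref{prop.of.ratefunc},'' and you also correctly note that the $\alpha_n=1$ case makes the $\delta(\mathbb{P}_{I'},\mathbb{P}_I)=0$ step cleaner than in Theorem \ref{thm:heatbath} (it matches the tail of Lemma \ref{thm:unique1} rather than the $\beta_n$-scaled argument of Lemma \ref{thm:unique2}).
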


\begin{proof}
Since all of the conditions in Theorem \ref{thm:limit.ldp} hold for all intervals $I'\subset I$ with \eqref{heatbath:condition3},  we can obtain that 
\begin{align}
    \label{heat.bath.kl.2}
 \lim_{n \rightarrow \infty}{D_{\textnormal{KL}}\left( {\mathbb P}_{I'} ~ \|~ {\mathbb Q}^{(n)}_{I'} \right)} 
 = 0, \quad \text{for all} \ I' \subset I. 
\end{align}

We first show $\left( \eqref{state:heatbathldp1} \Rightarrow \eqref{state:heatbathldp2} \Rightarrow \eqref{state:heatbathldp3} \right)$.  The proof  of $\left( \eqref{state:heatbathldp1} \Rightarrow \eqref{state:heatbathldp2}  \right)$ follows from the proof of $\left( \eqref{state:heatbath1} \Rightarrow \eqref{state:heatbath2}  \right)$ in Theorem  \ref{thm:heatbath}, then
we can get 
\begin{align}
\varphi(I') = \varphi(I), \quad \text{for all} \quad  I' \subset I.
\label{heatbath:temp.invariant2}
\end{align}
By \eqref{heatbath:condition3} and \eqref{heatbath:temp.invariant2}, 
$$ \phi'(\hat{y}^*) =  \phi'(y^*), \quad  \text{for all} \ \hat{y}^*  = \{y: \inf_{y\in I'} \phi(y) \} \ \text{with} \ I' \subset I. $$
With the assumption \eqref{heat.bath.ldp.cond}: $ -\infty < \phi'(y) < 0, \ \text{for all} \ y \in I$, and the properties of the rate function $\phi$ in Appendix \ref{prop.of.ratefunc}, we have that 
$$ \phi'(y) \equiv  \phi'(y^*), \quad \text{for all} \ y \in I, $$
which implies 
\begin{align} \label{heatbath:temp.invariant2.5}
\phi(y) = \phi'(y^*) y + c, \quad \text{for all} \ y \in I,
\end{align}
where $c$ is some constant.

Next we prove $\left( \eqref{state:heatbathldp3} \Rightarrow \eqref{state:heatbathldp2} \Rightarrow \eqref{state:heatbathldp1} \right)$.
Equation \eqref{heatbath:temp.invariant3} implies 
$$\phi'(y) \equiv  \phi'(y^*), \quad \text{for all} \ y \in I, $$
then we can obtain 
$$\phi'(\hat{y}^*)  = \phi'(y^*), \quad \text{for all}  \ \hat{y}^*  = \{y: \inf_{y\in I'} \phi(y) \} \ \text{with} \ I'\subset I.  $$
With \eqref{heatbath:condition3}, it implies 
$$  \varphi(I') = \varphi(I) \quad \text{for all} \ I' \subset I. $$ Then the proof of  $\left( \eqref{state:heatbathldp2} \Rightarrow \eqref{state:heatbathldp1}  \right)$ follows from the proof of $\left( \eqref{state:heatbath2} \Rightarrow \eqref{state:heatbath1}  \right)$ in Theorem  \ref{thm:heatbath}.
\end{proof}
\begin{remark}
The formula \eqref{heatbath.form.gauss} for the third property (it is called the heat-bath property) in Theorem \ref{thm:heatbath} provides the precise formulation of what a heat bath is in probabilistic terms when the heat bath $Y_n$ converges to $Y$ on the scale corresponding to Theorem \ref{thm:limit.smooth}; Similarly, the formula \eqref{heatbath:temp.invariant3} for the third property in Theorem \ref{thm:heatbath2} provides the precise formulation of what a heat bath is in probabilistic terms when the heat bath $Y_n$ converges to a constant $\mu$ on the scale corresponding to Theorem \ref{thm:limit.ldp}. Through these formulations and the equivalence of the three properties: (1) the subinterval invariant property (2) the invariant temperature property (3) the heat-bath property, we really define an invariant temperature bath mathematically.  
\end{remark}

\section{Appendix}

\subsection{Properties of the large deviation rate function }    \label{prop.of.ratefunc}
We include the following properties from \cite{dembo1998large}.
Let $\mathcal{L}$ be the law of $X_1$, let $\mu := \mathbb{E}[X_1] $ and $\sigma^2 := \text{Var}(X_1)$ and assume that $\sigma > 0$. Let \[y_- := \inf (\textnormal{supp}(\mathcal{L})), \quad  y_+ := \sup (\textnormal{supp}(\mathcal{L}))\] and $\phi$ be the function defined in Theorem \ref{thm:cramer}. Define
\begin{align}
\mathcal{D}_\phi := \{ y\in \mathbb{R}: \phi(y) < \infty   \} \quad \text{and} \quad \mathcal{U}_\phi := int(\mathcal{D}_\phi).
\end{align} 
Then the following holds:
\begin{enumerate}
\item $\phi(y)$ is convex and lower semi-continious.
\item $0 \leq \phi(y) \leq \infty$ for all $y \in \mathbb{R}$.
\item $\phi(y) = 0$ if and only if $y = \mu$.
\item $\mathcal{U}_\phi = (y_-, y_+)$ and $\phi(y)$ is infinitely differentiable on $\mathcal{U}_\phi$. 
\item  $\phi''(y) > 0$ on $\mathcal{U}_\phi$ and $\phi''(\mu) = 1/\sigma^2$.
\end{enumerate}

\subsection{Proof of Corollary \ref{cor:berry-esseen-slusky}}
\label{appendix:prof.cor.berry}
\begin{proof}
 \eqref{pre:cor:slusky}  follows from Theorem \ref{thm:slutsky} since $Z_n \rightarrow G$ in distribution and $W_n \rightarrow 0$ in probability. \eqref{pre:cor:berry}  basically follows from the proof for Berry-Esseen Theorem (see for example Theorem 2.2.8. in \cite{tao2012topics}).  We include a sketch of the  proof here. 
 
 Let $\phi_Y$ be the charateristic function of a random variable $Y$ and $\epsilon = \mathbb{E} | X |^3/   \sqrt{n} $. To prove \eqref{pre:cor:berry}, following every step in the proof given in \cite{tao2012topics}, it sufficies to show that  
\begin{align}
\int_{ |t| < c/\epsilon} \frac{|\phi_{ \tilde{Z_n}  } (t) - \phi_G(t)    |   }{  1 + |t| } dt = O(\epsilon),
\label{eq:cor:berry.esseen}
\end{align}
for some small constant $c$. We can show that 
\begin{align}
\left|\phi_{ \tilde{Z_n}  } (t) - \phi_G(t)   \right| &= \left| \exp{ \bigg[ \frac{- t^2}{2}\left(  \frac{n+k}{n} \right) + O\left(  \epsilon |t|^3 \left( \frac{n+k}{n} \right) \right) \bigg] }  - \exp( {-t^2/2}) \right|   \notag \\
& = O \left(  \frac{t^2}{n} \exp   ( -t^2/4)       \right) + O\left( \epsilon |t|^3 \exp   ( -t^2/4)    \right).
\label{eq:twoterms.berry.esseen}
\end{align}
Inserting this to \eqref{eq:cor:berry.esseen}, after integration, the first term  in \eqref{eq:twoterms.berry.esseen} has order $O(\frac{1}{n})$ and the second term has order $O( \epsilon )$. It completes the proof. 
\end{proof}

\subsection{Proof of the boundedness of Equation \eqref{normal.dis.bound2}}
\label{appendix:prof.normal}
Denote that
\begin{align} \label{mean.value}
  A(y) := \frac{\partial^{2} \log   P\big( {{Y}} \in [y, y+\delta]\big) }{ \partial y^2} = \frac{f'_Y(y+\delta) - f'_Y(y) }{P\big( {{Y}} \in [y, y+\delta]\big)} - \left( \frac{f_Y(y+\delta) - f_Y(y) }{P\big( {{Y}} \in [y, y+\delta]\big)}\right)^2.  
\end{align}
We can recognize that
$$   A(h - \hat{\alpha}_n x ) =  2 q_n(x),  $$ 
in which  the function $q_n(x)$ is defined in Equation \eqref{taylor.log} for the proof of Theorem \ref{thm:theorem 1}.

In the entire proof of Theorem \ref{thm:theorem 1}, the only place that we use the condition \eqref{eq:logsquare} regarding uniformly bounded $A(y)$  when $y \in \mathbb{R}$ is just for the proof of Equation \eqref{taylor-qn2} to show that  $\exp(-\psi(I)x ) \cdot q_n(x)$ is uniformly bounded on $x\in \mathbb{R}^+$. Therefore, instead of proving uniformly bounded $A(y)$ in the condition \eqref{eq:logsquare}, it suffices to show the uniform boundedness of  $\exp(-\psi(I)) \cdot q_n(x)$: there exists a constant $C$ such that 
\begin{align} \label{mean.value3}
     \left|\exp(-\psi(I)x) \cdot A(h - \hat{\alpha}_n x ) \right| \leq C,  \quad \hat{\alpha}_n \in (0,1), \quad \text{for  all} \ x \in \mathbb{R^+}.
\end{align}
By the mean value theorem and the formula of the density of normal distribution, we can show that there exists $\hat{y}$, $\hat{y} \in (y, y+\delta) $ such that the first term on the right side of \eqref{mean.value} can be written as  
\begin{align} \label{mean.value1}
    \frac{f'_Y(y+\delta) - f'_Y(y) }{P\big( {{Y}} \in [y, y+\delta]\big)} &= (y + \delta) \exp \left[ \frac{ - y^2 + \hat{y}^2}{2\sigma^2}  \right] \left( \exp \left[ \frac{- 2y\delta - \delta^2}{2\sigma^2} \right] - 1 \right) + \delta \exp\left[ \frac{-y^2 + \hat{y}^2}{2\sigma^2} \right] \notag \\
    % &=(y+\delta)\exp\left[ \frac{\hat{y}^2-(y+\delta)^2}{2\sigma^2}\right]-y\exp\left[ \frac{-y^2 + \hat{y}^2}{2\sigma^2} \right] \notag\\
    &=(y+\delta)\exp\left[ \frac{(\hat{y}-y-\delta)(\hat{y}+y+\delta)}{2\sigma^2}\right]-y\exp\left[ \frac{(\hat{y}-y)(\hat{y}+y)}{2\sigma^2} \right].
\end{align}
Recall $y <\hat{y}<y+\delta$. When $\hat{y}+y+\delta \in [0,2h+\delta]$, \eqref{mean.value1} is uniformly bounded. 
When $\hat{y}+y+\delta<0$, we can further have 
\begin{align*}
\left|\frac{f'_Y(y+\delta) - f'_Y(y) }{P\big( {{Y}} \in [y, y+\delta]\big)} \right|\leq  (h+\delta)  \exp\left[ \frac{-\delta (2y+\delta)}{2\sigma^2}\right]+h\exp\left[ \frac{-\delta y}{\sigma^2} \right]. 
\end{align*}
Therefore
\begin{align}\label{eq:682}
 \exp(-\psi(I)x)   \left|\frac{f'_Y(y+\delta) - f'_Y(y) }{P\big( {{Y}} \in [y, y+\delta]\big)} \right|\leq  [(h+\delta)  \exp(-\delta^2/2\sigma^2)+h]\cdot \exp\left[ \frac{-\delta y}{\sigma^2}-\psi(I)x\right]. 
\end{align}
By plugging in $y = h - \hat{\alpha}_n x$, $\hat{\alpha}_n \in (0,1)$ in \eqref{eq:682},  since we have $2\delta / \sigma^2 < \psi(I)$ from the assumptions in Theorem \ref{thm:d+d:iid} , we can check the terms on the right hand side in \eqref{eq:682} is uniformly bounded when $x\in \mathbb R^+$.

The second term on the right side of \eqref{mean.value} can be written as  
\begin{align} \label{mean.value2}
\left( \frac{f_Y(y+\delta) - f_Y(y) }{P\big( {{Y}} \in [y, y+\delta]\big)}\right)^2 &= \exp \left[ \frac{-y^2 + \hat{y}^2}{\sigma^2}   \right] \left(  \exp \left[ \frac{- 2y\delta - \delta^2}{2\sigma^2} \right] - 1\right)^2.
\end{align}
When $y+\hat{y}\in [0,2h+\delta ],$ the right hand side above is uniformly bounded.  When $y+\hat{y}<0$, from \eqref{mean.value2} we have  
\begin{align} 
\exp(-\psi(I)x)  \left( \frac{f_Y(y+\delta) - f_Y(y) }{P\big( {{Y}} \in [y, y+\delta]\big)}\right)^2 &\leq \exp \left[ \frac{(\hat{y}-y)(\hat{y}+y)}{\sigma^2} -\psi(I)x  \right] \left(  \exp \left[ \frac{- 2y\delta - \delta^2}{2\sigma^2} \right] - 1\right)^2 \notag\\
 &\leq \exp(-\psi(I)x)\left(  \exp \left[ \frac{- 2y\delta - \delta^2}{2\sigma^2} \right] - 1\right)^2 \notag\\
 &=\exp(-\psi(I)x)\left( \exp \left[ \frac{- 2y\delta - \delta^2}{\sigma^2} \right]-2\exp \left[ \frac{- 2y\delta - \delta^2}{2\sigma^2} \right]+1\right).\label{eq:69}
\end{align}
By plugging in $y = h - \hat{\alpha}_n x$, $\hat{\alpha}_n \in (0,1)$ in \eqref{eq:69},  since we have $2\delta / \sigma^2 < \psi(I)$,  we can check the terms on the right hand  is uniformly bounded when $x\in \mathbb R^+$. Therefore, combining the estimates in two parts, \eqref{mean.value3} is uniformly bounded for all $x \in \mathbb{R}^+$.
% Note that the terms
% $$(y + \delta) \exp \left[ \frac{ - y^2 + \hat{y}^2}{2\sigma^2}  \right] \quad \text{and} \quad  \exp \left[ \frac{ - y^2 + \hat{y}^2}{2\sigma^2}  \right] $$
% go to zero as $y \rightarrow -\infty$ since $\hat{y} \in (y, y+\delta)$. The only term that we need to concern is
% $$  \exp \left[ \frac{- 2y\delta - \delta^2}{2\sigma^2} \right], $$ 

\subsection{Proof of Corollary \ref{cor:corollary 1}, Corollary \ref{cor:limit.smooth}, and Corollary \ref{cor:limit.ldp}} 
\label{proof:cor:corollary 1}
\subsubsection{Proof of Corollary \ref{cor:corollary 1}}
This proof basically follows the proof in Section \ref{proof:theorem 1} for Theorem \ref{thm:theorem 1}, so we only provide the details of the difference here.  For the derivation of Equation \ref{eq:taylorexpansion}, we do Taylor's expansion with respect to $x$ and $y$  for this corollary, so we will get Equations \eqref{ansequence} - \eqref{ansequence3} as following:
\begin{align} \label{cor:ansequence}
	\phi_n(I) &=  \frac{\partial \log  P \big( Y_n \in [y, y+\delta] \mid  X_n = 0 \big)}{\partial y}\biggr\rvert_{y=h} - \frac{\partial \log  P \big( Y_n \in [y, y+\delta] \mid  X_n = 0 \big)}{\partial x}\biggr\rvert_{y=h}, \notag  \\
r_n(x) =& \frac{1}{2}  \frac{\partial^2  P \big( Y_n \in [y, y+\delta] \mid X_n= 
\xi \big)}{\partial y^2}\biggr\rvert_{y = h-\alpha_n x, \ \xi = \alpha_n x}  \notag\\
&+  \frac{1}{2} \frac{\partial^2  P \big( Y_n \in [y, y+\delta] \mid X_n= \xi \big)}{\partial \xi^2}\biggr\rvert_{y = h-\alpha_n x, \ \xi = \alpha_n x} \notag\\
&- \frac{\partial^2  P \big( Y_n \in [y, y+\delta] \mid X_n= \xi \big)}{\partial \xi \partial y}  \biggr\rvert_{y = h-\alpha_n x, \ \xi = \alpha_n x}. 
\end{align}
With the remaining term
\begin{align}
 k_n(x) &=  \frac{r_n(x)}{ P \big( Y_n \in [h, h+\delta] \mid X_n =  0 \big)}   -  \frac{\phi_n(I)^2e^{- \gamma_n \cdot \phi_n(I) x }}{2} \notag,  
\end{align}
for some $\alpha_n, \gamma_n \in (0,1)$. Then we obtain  \begin{align} \label{condprob2}
f_{ X_n | Z_n}(x; I)  = \frac{  f_{X_n} (x)  P(Y_n \in I  \mid X_n = 0) ( e^{-\phi(I)x} + k_n(x)x^2 ) }{  P(Z_n \in I) } , \quad \text{for} \ x \in \mathbb{R}^+.
\end{align} 

Based on these new expressions of Equations \eqref{ansequence} - \eqref{ansequence3}, Equation \eqref{small2} becomes
\begin{align} \label{condprob4}
 & \left| \int_{\mathbb{R}^+}  A_n f_{X_n}(x)e^{-\phi_n(I)}  \log \left( \frac{f_{X_n}(x) P_{Y_n | X_n} \big( I - x; x \big)  }{P_{Z_n}\big(I \big) }\cdot \frac{1}{ A_n f_{X_n}(x) e^{-\phi_n(I) x}}\right)dx \right| \notag\\
 = & \left| \int_{\mathbb{R}^+}  A_n f_{X_n}(x)e^{-\phi_n(I)}  \log \left( \frac{ P_{Y_n | X_n} \big( I - x; x \big)  }{P_{Y_n | X_n} \big( I ; 0 \big) e^{-\phi_n(I) x} }\cdot \frac{P_{Y_n | X_n} \big( I ; 0 \big)}{ P_{Z_n}\big(I \big) A_n }\right)dx \right| \notag\\
\leq &  \left|  \log \left( \frac{ B_n }{ A_n } \right)  \right| +\left| \int_{\mathbb{R}^+}A_n f_{X_n}(x)e^{-\phi_n(I)} \log \left( \frac{ P_{Y_n | X_n} \big( I -x ; x \big)}{  P_{Y_n | X_n} \big( I ; 0 \big) e^{-\phi_n(I) x}  }\right) dx \right|,
\end{align}
where 
\begin{align}
    A_n := \frac{1}{\int_{\mathbb{R}^+} f_{X_n}(x)e^{-\phi(I)x}dx} \quad \text{and} \quad  B_n := \frac{P_{Y_n | X_n} \big( I ; 0 \big) }{P_{Z_n}\left(I \right) }.
\end{align}
From the expression of $f_{X_n|Z_n}\big(x; I  \big)$ in  (\ref{condprob2}), we have the following identity
\begin{align}
\label{cor:sum2.2}
 1&= \int_{\mathbb{R}^+} f_{X_n|Z_n} \big(x; I \big) dx =  \frac{B_n }{A_n }  + B_n \int_{\mathbb{R}^+} f_{X_n}(x) k_n(x) x^2 dx.
\end{align}
Equation \eqref{cor:sum2.2} implies
\begin{align}\label{cor:log}
\log \left( \frac{B_n}{A_n}\right) = \log \left(1- B_n\int_{\mathbb{R}^+} f_{X_n}(x) k_n(x)x^2dx \right).
\end{align}
Now it remains  to show 
\begin{align}\label{Bhn}
	\left|B_n \int_{\mathbb{R}^+} f_{X_n}(x) k_n(x)x^2dx \right|
\end{align}
is small for large $n$.

By the conditions in Corollary \ref{cor:corollary 1}, $P_{Z_n}(I) \geq \delta_2 > 0$, hence there exists a constant $M_1>0$ such that
\begin{align}
B_n = \frac{P_{Y_n|X_n}(I;0)}{P_{Z_n}(I)} \leq  M_1.
\label{cor:cond-1}
\end{align} 
And since $k_n(x)$ is uniformly bounded as proof \ref{proof:theorem 1} for Theorem \ref{thm:theorem 1}, with the assumption $\mathbb{E}[X_n^2] = a_n$, we can derive that 
\begin{align}\label{cor:log-2}
\left|B_n\int_{\mathbb{R}^+} f_{X_n}(x) k_n(x)x^2dx \right| \leq M_1 \cdot \sup|k_n(x)| \cdot \mathbb{E}[X_n^2]  = 	 O(a_n).
\end{align}

Recall \eqref{cor:log}, since $\log(1+x)\leq x$ for all $x>-1$, for sufficiently large $n$, we have
\begin{align}
\log \left( \frac{B_n}{ A_n}\right) &=\log \left(1- B_n\int_{\mathbb{R}^+} f_{X_n}(x) k_n(x)x^2dx \right) \leq B_n\int_{\mathbb{R}^+} f_{X_n}(x) k_n(x)x^2dx =  O( a_n),
\label{cor:small-1}
\end{align}
which gives us that the first term in \eqref{condprob2} is in order $O(a_n)$.

The second term in \eqref{condprob4} is also in order $O(a_n)$ which follows from the steps \eqref{taylor.log} - \eqref{small3} in Section \ref{proof:theorem 1}. Therefore, by the definition of KL-divergence \eqref{KLdivergence} and Bayes' theorem for conditional probability and the inequality  \eqref{condprob4}, we finally obtain
\begin{align}
D_{\text{KL}}\left (\hat{{\mathbb P}}^{(n)}_I ~\|~ {\mathbb Q}^{(n)}_I\right) &= 
\left| \int_{\mathbb{R}^+}  A_n f_{X_n}(x)e^{-\phi_n(I)}  \log \left(  \frac{f_{X_n | Z_n} \big( x; I \big)}{ A_n f_{X_n}(x) e^{-\phi_n(I) x}}\right)dx \right| \notag \\
&= \left| \int_{\mathbb{R}^+}  A_n f_{X_n}(x)e^{-\phi_n(I)}  \log \left( \frac{f_{X_n}(x) P_{Y_n | X_n} \big( I - x; x \big)  }{P_{Z_n}\big(I \big) }\cdot \frac{1}{ A_n f_{X_n}(x) e^{-\phi_n(I) x}}\right)dx \right| \notag \\
&= O(a_n).
\end{align}

\subsubsection{Proof of Corollary \ref{cor:limit.smooth} and Corollary \ref{cor:limit.ldp}} 
For the proof of Corollary \ref{cor:limit.smooth}, since $\log G(I; 0)=0$ and $\log G(I; \xi) \in C(\mathbb{R}^+)$ with respect to $\xi$, we can do Taylor's expansion for it at zero to get
\begin{align}
    \log G( I; \beta_n x) =  \frac{\partial \log G( I; \xi)}{\partial \xi}\biggr\vert_0 \beta_n x + O(\beta_n^2 x^2) \quad \text{for} \ x \in \mathbb{R}^+.
\end{align}
And similarly, for the proof of Corollary \ref{cor:limit.ldp}, since $\log R(I; 0)=0$ and $\log R(I; \xi) \in C(\mathbb{R}^+)$ with respect to $\xi$, we can do Taylor's expansion for it at zero to get
\begin{align}
    \log \left( R( I; \beta_n x)\right)^{\frac{1}{\beta_n}} = \frac{1}{\beta_n} \left(  \frac{\partial \log R(I; \xi)}{\partial \xi}\biggr\vert_0 \beta_n x + O(\beta_n^2 x^2) \right) \quad \text{for} \ x \in \mathbb{R}^+.
\end{align}
Then the proof of Corollary \ref{cor:limit.smooth} follows from the proof given in Section \ref{proof:limit.smooth} with an additional linear term $\frac{\partial \log G( I; 0)}{\partial \xi}\beta_n x$ in \eqref{prof:lemma:unique2}; and the proof of Corollary \ref{cor:limit.ldp} follows from the proof in Section \ref{proof:limit.ldp} with an additional linear term $ \frac{\partial \log R( I; 0)}{\partial \xi} x$ in \eqref{prof:lemma:unique1}.

\begin{acknowledgements}
We thank Krzysztof Burdzy, Amir Dembo, Ken Dill, Hao Ge, Hugo Touchette, Jan Van Campenhout, Yue Wang, and Ying-Jen Yang for many helpful discussions. 
\end{acknowledgements}

% Authors must disclose all relationships or interests that 
% could have direct or potential influence or impart bias on 
% the work: 
%
\section*{Conflict of interest}
The authors declare that they have no conflict of interest.

% BibTeX users please use one of
%\bibliographystyle{spbasic}      % basic style, author-year citations
%\bibliographystyle{spmpsci}      % mathematics and physical sciences
%\bibliographystyle{spphys}       % APS-like style for physics
%\bibliography{}   % name your BibTeX data base
\bibliographystyle{plain}
\bibliography{ref.bib}

\end{document}